\newcommand{\Aa}{{\mathcal A}}
\newcommand{\GG}{{\mathcal G}}
\newcommand{\LL}{{\mathcal L}}
\newcommand{\Ss}{{\mathcal S}}
\newcommand{\PP}{{\mathcal P}}
\newcommand{\poly}{\text{poly}}
\newcommand{\tw}{\text{tw}}
\newtheorem{theorem}{Theorem}[section]
\newtheorem{definition}{Definition}
\newtheorem{corollary}[theorem]{Corollary}
\newtheorem{lemma}[theorem]{Lemma}
\newtheorem{observation}[theorem]{Observation}
\begin{document}
\title{Thin graph classes and polynomial-time approximation schemes}
\author{%
     Zden\v{e}k Dvo\v{r}\'ak\thanks{Computer Science Institute (CSI) of Charles University,
           Malostransk{\'e} n{\'a}m{\v e}st{\'\i} 25, 118 00 Prague, 
           Czech Republic. E-mail: \protect\href{mailto:rakdver@iuuk.mff.cuni.cz}{\protect\nolinkurl{rakdver@iuuk.mff.cuni.cz}}.
           Supported by (FP7/2007-2013)/ERC Consolidator grant LBCAD no. 616787.}
}
\date{\today}
\maketitle
\begin{abstract}
Baker~\cite{baker1994approximation} devised a powerful technique to obtain approximation schemes for various
problems restricted to planar graphs.  Her technique can be directly extended to various other graph classes,
among the most general ones the graphs avoiding a fixed apex graph as a minor.  Further generalizations (e.g.,
to all proper minor closed graph classes) are known, but they use a combination of techniques and usually focus
on somewhat restricted classes of problems.  We present a new type of graph decompositions (thin systems of overlays)
generalizing Baker's technique and leading to straightforward polynomial-time approximation schemes.
We also show that many graph classes (all proper minor-closed classes, and all subgraph-closed classes with bounded maximum
degree and strongly sublinear separators) admit such decompositions.
\end{abstract}

\section{Introduction}

Baker~\cite{baker1994approximation} devised a technique that leads to polynomial-time approximation schemes for
a range of problems (including maximum independent set, minimum dominating set, largest $H$-matching, minimum vertex cover,
and many others) when restricted to planar graphs.  In modern terms, the technique is based on the fact that
if $(V_1,\ldots, V_d)$ is a partition of vertices of a connected planar graph $G$ according to their distance from
an arbitrarily chosen vertex, then for any positive integer $s$ and $0\le i\le d-s+1$, the treewidth of $G[V_{i+1}\cup V_{i+2}\cup \ldots\cup V_{i+s-1}]$
is bounded by a function of $s$ (more specifically, it is at most $3s$), and hence many natural problems can be exactly solved
for these subgraphs in linear time~\cite{courcelle}.  As an example, the exact sizes of largest independent
sets in graphs $G-\bigcup_{i:i\bmod s=r} V_i$ for $r=0,1,\ldots,s-1$ can be found in linear time, and it is easy to see that one
of these sets has size at least $(1-1/s)$ times the size of the largest independent set in $G$.

It is natural to ask whether these algorithms can be extended to larger classes $\GG$ of graphs.
A \emph{layering} of a graph $G$ is a sequence $(V_1,\ldots, V_d)$, where $\{V_1,\ldots,V_d\}$
is a partition of $V(G)$ and $G$ contains no edges between sets $V_i$ and $V_j$ such that $|i-j|\ge 2$.
The sets $V_1$, \ldots, $V_d$ are the \emph{layers} of the layering.
Thus, assuming that $V_1$ intersects every component of $G$, the most natural layering is obtained by assigning to $V_i$
the vertices at distance exactly $i-1$ from $V_1$.
Let us say that a class $\GG$ has \emph{bounded treewidth layerings} if for some function $f$,
each graph $G\in\GG$ has a layering in which the union of any $s$ consecutive layers induces a subgraph of treewidth at most $f(s)$.
Baker's technique directly extends to all such graph classes (assuming that a suitable layering can be found
in polynomial time).  A natural obstruction for the existence of bounded treewidth layerings is as follows:
let $G_n$ be the graph obtained from the $n\times n$ grid by adding a universal vertex adjacent to all other vertices.
Then each layering of $G_n$ has at most three layers, and $\tw(G_n)>n$.  Hence, if a class $\GG$ has bounded treewidth layerings,
then it contains only finitely many of the graphs $\{G_n:n\ge 1\}$.  Conversely, Eppstein~\cite{eppstein00} proved that
proper minor-closed classes that avoid some graph $G_n$ have bounded treewidth layerings.

As another obstruction, let $U_n$ denote the graph obtained from an $n\times n\times n$ grid by adding all diagonals to its unit
subcubes.  Although the graphs $U_n$ have bounded maximum degree and very simple structure, 
Dvo\v{r}\'ak et al.~\cite{gridtw} proved that for every integer $k$, there exists $n_0$ such that for all $n\ge n_0$,
the vertex set of $U_n$ cannot be partitioned into two parts both inducing subgraphs of treewidth at most $k$.
This prevents existence of bounded treewidth layerings of $U_n$ (otherwise the partition of the graph into odd and even numbered layers
would give a contradiction).

In this paper, we introduce the concept of \emph{thin systems of overlays} which generalizes the notion of bounded treewidth layerings,
working around both mentioned obstructions.  In particular, we show that all proper minor-closed classes (Corollary~\ref{cor-minor})
as well as all subgraph-closed classes with bounded maximum degree and strongly sublinear separators (Corollary~\ref{cor-sublin}) admit such thin systems of overlays.
We also show that thin system of overlays can be used to design simple polynomial-time approximation schemes for many problems:
the previously mentioned maximum independent set and minimum dominating set and their distance variants,
largest $H$-matching, and minimum vertex cover, as well as any monotone maximization problems
expressible in a restricted fragment of Monadic Second Order logic with constant distance predicates
(see Theorem~\ref{thm-indep} for a precise formulation) which includes problems such as maximum $c$-colorable induced
subgraph for any constant number of colors $c$.  Let us remark that in all the cases, the approximation ratio only affects the multiplicative
constant of the time complexity of the algorithm (and not the degree of the polynomial), and in particular the results
also imply fixed-parameter tractability of the problems when parameterized by the order of the optimal solution.

The paper is organized as follows: In Section~\ref{sec-thinover} we introduce the notion of thin systems of overlays
and study its basic properties.  In Section~\ref{sec-approx} we present the polynomial-time approximation schemes obtained
using this notion.  Finally, in Section~\ref{sec-classes}, we give a number of operations on graph classes that
preserve the existence of thin systems of overlays, and as a corollary we show that all proper minor-closed classes as well as
all subgraph-closed classes with bounded maximum degree and strongly sublinear separators admit them.

\subsection{Related results}

As we already mentioned in the introduction, Eppstein~\cite{eppstein00} and Demaine and Hajiaghayi~\cite{demaine2004equivalence}
showed that Baker's technique generalizes
to all proper minor-closed classes that do not contain all apex graphs.  Going beyond the apex graph boundary, Grohe~\cite{grohe2003local},
Demaine et al.~\cite{demaine2005algorithmic} and Dawar et al.~\cite{dawar2006approximation} generalized
the approximation algorithms to all proper minor-closed classes using the tree decomposition from the minor structure theorem.
Baker's technique also applies to other geometrically defined graph classes, such as unit disk graphs~\cite{hunt1998nc}
and graphs embedded with a bounded number of crossings on each edge~\cite{grigoriev2007algorithms}.

Another powerful and largely independent approach to approximation algorithms in proper minor-closed graph classes
is through bidimensionality theory, bounding the treewidth of the graph in the terms of the size of the optimal solution
and exploiting the arising bounded-size balanced separators to obtain approximate solutions.
Demaine and Hajiaghayi~\cite{demaine2005bidimensionality} and Fomin et al.~\cite{fomin2011bidimensionality}
use this approach to construct polynomial-time approximation schemes on all proper minor-closed classes
for many minor-monotone problems (e.g., minimum vertex cover) and on apex-minor-free classes for contraction-monotone
problems (e.g., minimum dominating set).

Suppose that $\GG$ is a class of graphs with bounded treewidth layerings and $k$ is a positive integer.
As a particular corollary of Baker's technique, every graph $G\in\GG$ contains
pairwise disjoint sets $X_1,\ldots,X_k\subseteq V(G)$ such that removal of any of the sets results in a graph
whose treewidth is bounded by a constant (depending on the class of graphs and on $k$, but not on $G$ itself).
In terms introduced by Dvo\v{r}\'ak~\cite{twd}, such classes are \emph{treewidth-fragile}.  As shown by DeVos et al.~\cite{devospart}
and algorithmically by Demaine et al.~\cite{demaine2005algorithmic}, every proper minor-closed class is treewidth-fragile.
Treewidth-fragility is by itself sufficient to obtain polynomial-time approximation schemes for some graph parameters
such as the independence number or the size of the largest $H$-matching, but fails for others (minimum dominating set,
distance constrained versions of the independence number).  Demaine et al.~\cite{contrpart} give an edge-contraction
variation of this concept, which has similar algorithmic applications.

Some rather simple graph classes such as 3-dimensional
grids with diagonals in unit subcubes are not treewidth-fragile~\cite{gridtw}.  To overcome this obstruction,
Dvo\v{r}\'ak~\cite{twd} introduced the fractional version of treewidth-fragility which is still sufficient to obtain
the polynomial-time approximation schemes, and showed that all subgraph-closed graph classes with bounded maximum degree
and strongly sublinear separators are fractionally treewidth-fragile.  The notion of thin systems of overlays
can be seen as a further generalization that is able to handle approximation of wider class of problems such as minimum dominating set
and distance-$r$ independent set.  Let us remark that every class that admits thin systems of overlays is
fractionally treewidth-fragile.

A very different approach is taken by Cabello and Gajser~\cite{cabello2015simple} for proper minor-closed classes and more generally by
Har-Peled and Quanrud~\cite{har2015approximation} for classes of graphs with polynomial expansion (which by the
result of Dvo\v{r}\'ak and Norin~\cite{dvorak2016strongly} is equivalent to having strongly sublinear separators),
who showed that trivial local search algorithm gives polynomial-time approximation scheme for problems such as maximum
independent and minimum dominating set.  It is plausible (but not entirely clear) that this technique can be extended to
other first-order definable properties.

In conclusion, our approach seems to be the first one to give a graph decomposition notion generalizing Baker's bounded treewidth
layerings that is strong enough to give simple approximation schemes for most problems approachable via Baker's
technique, and at the same time can be applied to all proper minor-closed classes as well as many graph classes that are not minor-closed.

Let us remark that the approximation factor does not affect the exponent in the complexity of the obtained approximation schemes,
and consequently we also obtain fixed-parameter tractability when parameterized by the order of the optimum solution.
However, fixed-parameter tractability of the considered problems is generally known for much larger graph classes, see e.g.~\cite{grohe2014deciding}.

\subsection{Limitations of our technique}

Most of the algorithmic problems we consider, and in particular maximum independent set and minimum dominating set,
are APX-hard even when restricted to graphs with bounded maximum degree~\cite{papadimitriou1988optimization}.
Hence, we cannot hope to generalize our results to any class of graphs that includes say all 3-regular graphs,
and in particular we cannot consider topological minor-closed classes instead of minor-closed classes.

All the classes we consider have strongly sublinear separators; and it is easy to see that graphs from every class
that admits thin systems of overlays must have sublinear separators.  This is rather natural in the context of the
previous paragraph and the theory of bounded expansion: it is known that a subgraph-closed class has polynomial
expansion if and only if it has strongly sublinear separators~\cite{dvorak2016strongly}, and that subexponential
expansion implies existence of sublinear separators~\cite{grad2}, while the class of 3-regular graphs has
exponential expansion.  It would be interesting to see whether this intuition can be made precise and for example show
that the maximum independent set problem is APX-hard on any subgraph-closed class of graphs that does not have
sublinear separators (or at least has exponential expansion).

We show that subgraph-closed graph classes with strongly sublinear separators admit thin systems of overlays,
under the additional assumption that their maximum degree is bounded.  It seems plausible that this additional
assumption could be dropped, leading to a much more general result strengthening majority of the algorithms
mentioned in the previous subsection.  Nevertheless, even the simpler question of fractional treewidth-fragility
of such graph classes raised in~\cite{twd} is open.

Finally, let us remark that by its nature as a generalization of Baker's technique,
our technique is ill-suited for dealing with global problems, where parts of the solution interact over unlimited distances.
For the same reason, the approximation algorithms we design rely on at least a limited form
of monotonicity.  Thus, we do not obtain approximations for fundamentally non-monotone problems such as
minimum independent dominating set or minimum connected dominating set; the bidimensionality theory approach
is more suitable in these cases.

\section{Thin systems of overlays}\label{sec-thinover}

Before we give the formal definition of thin systems of overlays, let us start with a motivating example.
Given an input graph $G$, the \emph{minimum $r$-dominating set} problem (where $r$ is a fixed positive integer)
asks for a minimum-size set $X\subseteq V(G)$ such that each vertex of $G$ is at distance at most $r$ from $X$.
Suppose we want to apply Baker's technique to devise a polynomial-time approximation scheme for this problem
when restricted to connected \emph{planar} graphs.
Let us say that a set $Y\subseteq V(G)$ \emph{$r$-dominates} a set $Z\subseteq V(G)$ if each vertex of $Z$ is at distance
at most $r$ from $Y$.

Let $k$ be a positive integer.  Let us start with a layering $(V_1,\ldots,V_d)$ according to the distance from a vertex of $G$,
and for any integer $i$, let $G_i=G[V_{i-r}\cup \ldots\cup V_{i+kr+r-1}]$ and $Z_i=V_i\cup \ldots\cup V_{i+kr-1}$
(where $V_j=\emptyset$ if $j\le 0$ or $j>d$).   That is, $Z_i$ consists of $kr$ consecutive layers starting at the $i$-th one,
and $G_i$ is the subgraph of $G$ induced by $Z_i$ and all vertices
at distance at most $r$ from the set $Z_i$.  If $X$ is an $r$-dominating set in $G$, then the distance between any vertex of $Z_i$
and $X$ is at most $r$, and thus $X\cap V(G_i)$ $r$-dominates $Z_i$ in $G_i$.  Conversely,
let $Y_i\subseteq V(G_i)$ be a smallest set that $r$-dominates $Z_i$ in $G_i$,
which can be found in linear time since $G_i$ has bounded treewidth.
Let us fix any $a\in\{0,\ldots, kr-1\}$.  Then the sets $\{Z_i:i\bmod kr=a\}$ form a partition of $V(G)$,
and thus the set $Y^a=\bigcup_{i\bmod kr=a} Y_i$ is an $r$-dominating set in $G$.
Since the graphs $G_i$ such that $i \bmod kr=a$ only overlap in $r$ of each $rk$ consecutive layers,
it is easy to see that for some $a\in \{0,\ldots, kr-1\}$, the set $Y^a$ is at most $(1+1/k)$ times as large as a
minimum $r$-dominating set in $G$.  Thus, the algorithm can just return the smallest of the sets $Y_a$ for $a=0,\ldots, kr-1$.

This algorithm can be reformulated as follows.  For $a=0,\ldots,kr-1$, let $G^a$ be the graph obtained from graphs $G_i$ such that
$i\bmod kr=a$ as their disjoint union (first renaming the vertices that they have in common) and let $Z^a$ be analogously
obtained from sets $Z_i$ with $i\bmod kr=a$ (renaming the shared vertices in the same way).  Then $G^a$ has bounded
treewidth and $Y^a$ can be taken as the smallest set of vertices of $G^a$ that $r$-dominates $Z_a$, up to renaming.

Note that due to renaming, $G^a$ is not a subgraph of $G$; rather, there exists a surjective homomorphism $f$ from $G^a$ to $G$
that describes the renaming (let us recall that a \emph{homomorphism} is a function from $V(G^a)$ to $V(G)$ that maps edges to edges;
we only consider simple graphs without loops, and thus images of adjacent vertices must be distinct).
For each vertex $x\in Z^a$ and each walk $W'$ of length at most $r$ in $G$ starting in $f(x)$, there exists a walk $W$
of the same length in $G^a$ starting in $x$ such that $f(W)=W'$.  This is not the case for vertices in $V(G^a)\setminus Z^a$,
for which only smaller neighborhoods are represented in $G^a$. For each $x\in V(G^a)$, we let $\ell(x)$ denote the size
of the represented neighborhood, so we have $\ell(x)=r$ for $x\in V(G^a)$ and $\ell(\cdot)$ gets progressively smaller
in the layers away from $Z^a$.  Lastly, it is important that the number of renamed vertices is small compared to $|V(G)|$,
and that different vertices get renamed for different values of $a$.

With these observations in mind, let us proceed with formal definitions.

\begin{definition}\label{def-nonp-over}
Let $G$ and $H$ be graphs, and let $f:V(H)\to V(G)$ and $\ell:V(H)\to \mathbf{Z}_0^+$ be functions.
We say that $L=(H,f,\ell)$ is an \emph{overlay} of a graph $G$ if $f$ is a homomorphism from $H$ to $G$.
The overlay is \emph{walk-preserving} if for all $x\in V(H)$ with $\ell(x)\ge 1$ and for each neighbor $w$ of $f(x)$ in $G$,
there exists a neighbor $y$ of $x$ in $H$ such that $f(y)=w$ and $\ell(y)\ge\ell(x)-1$.
For a vertex $v\in V(G)$, the \emph{thickness} $\theta_L(v)$ of the overlay at $v$ is the number $|f^{-1}(v)|$ of vertices
of $H$ mapped to $v$.  The \emph{treewidth} of the overlay is equal to the treewidth of $H$.
\end{definition}

Since $f$ is a homomorphism from $H$ to $G$, each walk in $H$ corresponds to a walk in $G$.
The converse is not necessarily true, but the property of being walk-preserving implies the converse
at least for walks of bounded length.

\begin{lemma}\label{lemma-walk}
Let $L=(H,f,\ell)$ be an overlay of a graph $G$.  If $W$ is a walk in $H$ from a vertex $x$ to a vertex $y$,
then $f(W)$ is a walk in $G$ from $f(x)$ to $f(y)$.
Furthermore, if $L$ is walk-preserving and $W'$ is a walk in $G$ from $f(x)$ of length at most $\ell(x)$, then there exists a walk $W$ in
$H$ starting in $x$ such that $f(W)=W'$.
\end{lemma}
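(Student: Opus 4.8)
The plan is to prove the two assertions separately, each by induction on the length of the walk. For the first claim, let $W = x_0 x_1 \cdots x_k$ be a walk in $H$ with $x_0 = x$ and $x_k = y$; then $f(W) = f(x_0) f(x_1) \cdots f(x_k)$. I would argue that consecutive vertices $f(x_{i-1})$ and $f(x_i)$ are adjacent in $G$ because $x_{i-1} x_i \in E(H)$ and $f$ is a homomorphism, which by definition maps edges to edges. Since $f(x_0) = f(x)$ and $f(x_k) = f(y)$, this exhibits $f(W)$ as a walk in $G$ from $f(x)$ to $f(y)$. This part is essentially immediate from the definition of homomorphism and needs no induction beyond the trivial observation applied edge by edge.

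For the second claim, suppose $L$ is walk-preserving and let $W' = w_0 w_1 \cdots w_k$ be a walk in $G$ with $w_0 = f(x)$ and $k \le \ell(x)$. I would construct a walk $W = y_0 y_1 \cdots y_k$ in $H$ with $f(y_i) = w_i$ and, crucially, maintaining the stronger invariant that $\ell(y_i) \ge \ell(x) - i$; this auxiliary bound on $\ell$ is what makes the induction go through. Set $y_0 = x$, for which $f(y_0) = f(x) = w_0$ and $\ell(y_0) = \ell(x) \ge \ell(x) - 0$ hold trivially. For the inductive step, assume $y_0, \ldots, y_{i}$ have been chosen with $i < k$; then $\ell(y_i) \ge \ell(x) - i \ge \ell(x) - k + 1 \ge 1$ since $k \le \ell(x)$, so the walk-preserving property applies at $y_i$. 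Because $w_{i+1}$ is a neighbor of $w_i = f(y_i)$ in $G$, walk-preservation yields a neighbor $y_{i+1}$ of $y_i$ in $H$ with $f(y_{i+1}) = w_{i+1}$ and $\ell(y_{i+1}) \ge \ell(y_i) - 1 \ge \ell(x) - (i+1)$, completing the step.

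At the end of the induction we obtain a walk $W = y_0 y_1 \cdots y_k$ in $H$ starting at $y_0 = x$ with $f(y_j) = w_j$ for every $j$, hence $f(W) = W'$, as required. The only subtlety — and the one point that must be handled with care — is verifying that $\ell(y_i) \ge 1$ at each step so that the walk-preserving hypothesis is actually usable; this is exactly where the assumption $k \le \ell(x)$ is consumed, and it is why carrying the sharper invariant $\ell(y_i) \ge \ell(x) - i$ (rather than merely $\ell(y_i) \ge 0$) through the induction is essential. No genuine obstacle arises beyond keeping this bookkeeping straight.
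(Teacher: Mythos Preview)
Your proof is correct and follows essentially the same approach as the paper's own proof: both dispatch the first claim directly from the homomorphism property, and both prove the second claim by an inductive construction of the lifted walk while maintaining the invariant $\ell(y_i)\ge\ell(x)-i$ to ensure the walk-preserving condition remains applicable at each step.
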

\begin{proof}
The first claim is obvious, since $f$ is a homomorphism.  For the second claim, let $W'=v_0v_1\ldots v_t$, where $f(x)=v_0$ and $t\le \ell(x)$.
Let $x_0=x$.  For $i=1, \ldots, t$, the assumption that $L$ is walk-preserving implies that there exists $x_i\in V(H)$ such that $x_{i-1}x_i\in E(H)$,
$f(x_i)=v_i$, and $\ell(x_i)\ge\ell(x_{i-1})-1\ge \ell(x)-i$.  Then we can set $W=x_0x_1\ldots x_t$.
\end{proof}

Based on the previous Lemma, the condition from the following definition guarantees that the neighborhood
of each vertex up to distance $r$ is represented in the overlay.

\begin{definition}\label{def-over}
Let $r$ be a positive integer, let $G$ be a graph and let $L=(H,f,\ell)$ be an overlay of $G$.
We say that $L$ is an \emph{$r$-neighborhood overlay} if $L$ is walk-preserving, $\ell(x)\in\{0,1,\ldots, r\}$ for all $x\in V(H)$, and
for all $v\in V(G)$, there exists $x\in f^{-1}(v)$ such that $\ell(x)=r$.
\end{definition}

Let us remark that an $r$-neighborhood overlay has positive thickness at each vertex.
For future use, let us note the following easy fact.

\begin{observation}\label{obs-sg}
If $G'$ is a subgraph of a graph $G$ and $L=(H,f,\ell)$ is an $r$-neighborhood overlay of $G$, then
an $r$-neighborhood overlay $L'$ of $G'$ with at most as large treewidth and thickness
can be constructed in time $O(|V(H)|+|E(H)|)$.
\end{observation}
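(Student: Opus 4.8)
The plan is to obtain $H'$ from $H$ by simply deleting the vertices and edges of $H$ that ``stick out'' of $G'$, and then to repair the walk-preserving property by pushing the values of $\ell$ down wherever this deletion has destroyed a required neighbor. Concretely: first, since $G'$ is a subgraph of $G$, let $X = V(H) \setminus f^{-1}(V(G'))$ be the set of vertices of $H$ mapped to vertices outside $G'$, and let $H'' = H - X$; set $f'' = f\restriction V(H'')$, which is still a homomorphism into $G$, and in fact into $G'$ once we also forget about any edge $xy$ of $H''$ with $f(x)f(y) \notin E(G')$. So let $H'$ be the graph on $V(H'')$ whose edges are those $xy \in E(H'')$ with $f(x)f(y) \in E(G')$, and let $f' = f''$ viewed as a map into $V(G')$; this is a homomorphism from $H'$ to $G'$, so $(H', f', \ell')$ will be an overlay of $G'$ for any $\ell'$ we choose. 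Since $H'$ is a subgraph of $H$, its treewidth is at most that of $H$, and $|f'^{-1}(v)| \le |f^{-1}(v)|$ for each $v \in V(G')$, so the thickness does not increase either.

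The remaining work is to define $\ell' : V(H') \to \{0,1,\ldots,r\}$ so that $(H',f',\ell')$ is walk-preserving and satisfies the $r$-neighborhood condition for $G'$. I would define $\ell'$ as the largest function bounded by $\ell$ that is ``closed under the walk-preserving requirement in $H'$'': that is, $\ell'(x)$ is the largest value $t \le \ell(x)$ such that for every $s$ with $1 \le s \le t$ and every walk of length $s$ in $G'$ starting at $f'(x)$, some walk of length $s$ in $H'$ starting at $x$ maps onto it. Operationally this is computed by initializing $\ell'(x) := \ell(x)$ for all $x$ and then repeatedly decrementing: whenever there is a vertex $x$ with $\ell'(x) \ge 1$ and a neighbor $w$ of $f'(x)$ in $G'$ with no $H'$-neighbor $y$ of $x$ satisfying $f'(y) = w$ and $\ell'(y) \ge \ell'(x) - 1$, set $\ell'(x) := \ell'(x) - 1$. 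This terminates because $\sum_x \ell'(x)$ strictly decreases and is bounded below, and at termination the walk-preserving condition of Definition~\ref{def-nonp-over} holds for $(H',f',\ell')$ by construction. A standard worklist implementation (processing vertices in nondecreasing order of current $\ell'$-value, re-examining the $H'$-neighbors of any vertex whose label drops) runs in time $O(|V(H')| + |E(H')|) = O(|V(H)| + |E(H)|)$, using that all labels are $O(r)$ with $r$ a constant.

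It remains to check the $r$-neighborhood condition: for every $v \in V(G')$ there must be some $x \in f'^{-1}(v)$ with $\ell'(x) = r$. Here is where I expect the only real subtlety. Fix $v \in V(G')$. Since $L$ is an $r$-neighborhood overlay of $G$, there is $x_0 \in f^{-1}(v)$ with $\ell(x_0) = r$; moreover, by walk-preservation of $L$, for every walk $W'$ of length $\le r$ in $G$ (in particular in $G'$) starting at $v$ there is a walk in $H$ starting at $x_0$ mapping onto it. The claim is that $x_0 \in V(H')$ and that $\ell'(x_0) = r$. For $x_0 \in V(H')$: indeed $f(x_0) = v \in V(G')$, so $x_0 \notin X$. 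For $\ell'(x_0) = r$: one shows by induction on $s$ that the decrementing process never pushes $\ell'$ below $r - s$ on the set of vertices reachable from $x_0$ by an $H$-walk that maps onto a $G'$-walk of length $\le s$; the base case is that such vertices lie in $V(H')$ (their images are in $V(G')$ and the edges of the witnessing walk have both endpoints mapping into $V(G')$, hence survive in $H'$), and the inductive step is exactly the walk-preserving inequality transported from $H$ to $H'$ along these walks. The key point making this go through is that a walk in $H$ witnessing a \emph{$G'$-walk} uses only vertices and edges that are retained in $H'$, so the decrementing process for $(H',f',\ell')$ cannot do worse on $x_0$ than the already-valid bound $\ell_H \equiv r$ near $x_0$ does for $(H,f,\ell)$. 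Hence $\ell'(x_0) = r$, and $(H',f',\ell')$ is the desired $r$-neighborhood overlay of $G'$.
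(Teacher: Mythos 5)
Your construction is the same as the paper's --- delete from $H$ every vertex $x$ with $f(x)\notin V(G')$ and every edge $xy$ with $f(x)f(y)\notin E(G')$, and restrict $f$ --- but the entire ``repair $\ell$ by decrementing'' phase is unnecessary, and the paper's proof is a one-liner precisely because of the observation you circle around without stating: the walk-preserving condition quantifies only over neighbors $w$ of $f(x)$ \emph{in the target graph}, so passing from $G$ to a subgraph $G'$ only weakens the requirement, while every witness survives the deletion. Concretely, if $x\in V(H')$ has $\ell(x)\ge 1$ and $w$ is a neighbor of $f(x)$ in $G'$, then $w$ is a neighbor of $f(x)$ in $G$, so walk-preservation of $L$ gives a neighbor $y$ of $x$ in $H$ with $f(y)=w$ and $\ell(y)\ge\ell(x)-1$; since $f(y)=w\in V(G')$ and $f(x)f(y)\in E(G')$, both $y$ and the edge $xy$ are retained in $H'$. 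Hence $\ell\restriction V(H')$ is already walk-preserving, your fixpoint computation never performs a single decrement, and $\ell'=\ell\restriction V(H')$ --- which also makes the $r$-neighborhood condition immediate (the vertex $x_0$ with $\ell(x_0)=r$ survives with its label intact), rendering your induction on reachable sets moot. Nothing you wrote is wrong, and the construction you end up with is the right one, but recognizing that the restriction works as-is would have replaced two paragraphs of worklist algorithms and invariants with two sentences.
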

\begin{proof}
To construct $L'=(H',f\restriction V(H'), \ell\restriction V(H'))$, let $H'$ be the graph obtained
from $H$ by removing all vertices $x$ with $f(x)\in V(G)\setminus V(G')$ and all edges $xy$ with
$f(x)f(y)\in E(G)\setminus E(G')$.
\end{proof}

In our motivating example, each component of the graphs $G^a$ was an induced subgraph of $G$.  This is not
necessarily the case for general overlays (and imposing this condition would restrict their applicability
significantly, as we will see below in Lemma~\ref{lemma-localtw}).
Nevertheless, it is convenient to keep track of this special case; we say that an overlay $L=(H,f,\ell)$ of a graph $G$
is \emph{subgraph-based} if the restriction of $f$ to each component $H'$ of $H$ is injective
and $f(H')$ is an induced subgraph of $G$.

Another variation on overlays will be needed to show that clique-sums of graphs from a class admitting thin
overlays form another class admitting thin overlays.
A vertex of a graph is \emph{simplicial} if its neighborhood induces a clique.
Suppose that all vertices of a set $S\subseteq V(G)$ are simplicial in $G$.
We say that an overlay $L=(H,f,\ell)$ of a graph $G$ is \emph{$S$-simpliciality-preserving} if for every vertex $v\in S$, all vertices
of $f^{-1}(v)$ are simplicial in $H$.
The graph $G^\star$ is obtained from $G$ by, for every non-empty-clique $K$ of $G$, adding a vertex $v_K$ adjacent to vertices in
$K$.  Note that if $G'$ is an (induced) subgraph of $G$, then $(G')^\star$ is an (induced) subgraph
of $G^\star$.  We define a \emph{$\star$-overlay of $G$} to be
a $(V(G^\star)\setminus V(G))$-simpliciality-preserving overlay of $G^\star$.

Let $\Aa_r(G)$ denote the class of all $r$-neighborhood overlays of $G$.
Let $\Ss_r(G)$ denote the class of all subgraph-based $r$-neighborhood overlays of $G$.
Finally, let $\bigstar_r(G)$ denote the class of all $r$-neighborhood $\star$-overlays of $G$.

\begin{definition}\label{def-system}
Consider any $\PP\in\{\Aa,\Ss,\bigstar\}$.
Let $r$ be a positive integer and let $G$ be a graph.
A \emph{system of $\PP_r$-overlays} of $G$ is a multiset $\LL$ of elements of $\PP_r(G)$.
Let $G'=G^\star$ if $\PP=\bigstar$ and $G'=G$ otherwise.
The \emph{thickness} $\theta_{\LL}(v)$ of a vertex $v\in V(G')$ is $\frac{1}{|\LL|}\sum_{L\in\LL}\theta_L(v)$, and
the thickness $\theta(\LL)$ of the system $\LL$ is the maximum of $\{\theta_{\LL}(v):v\in V(G')\}$.
The \emph{treewidth} $\tw(\LL)$ of the system is the maximum of the treewidths of its members.
\end{definition}

We say that the graph $G'$ from the previous definition is the \emph{graph overlaid by $\LL$}.
The definition of thickness of a system of overlays has a natural probabilistic interpretation which is sometimes convenient:
if the system $\LL$ has thickness $\theta$ and $L=(H,f,\ell)$ is chosen from $\LL$ uniformly at random, then for every
vertex $v$, the expected value of $\theta_L(v)$ is at most $\theta$.  For approximation algorithms,
an important consequence is that for any set $X\subseteq V(G)$, the expected size of $f^{-1}(X)$ is
at most $\theta |X|$, where typically we consider the case that $\theta$ is close to $1$.

As the approximation algorithm inspects all overlays in the system, we also need the system to be small;
for any fixed thickness, we need the size of the system for each graph $G$ in the class to be at most
polynomial in the number of vertices $n$ of $G$ (actually, for all graph classes we consider, the size will be either constant
or polynomial in $e^{(\log\log n)^2}$).  Furthermore, we of course need to be able to find such a system of overlays
efficiently.  With that in mind, we finally introduce the definition of graph classes that admit thin systems of overlays.

\begin{definition}\label{def-admit1}
Consider any $\PP\in\{\Aa,\Ss,\bigstar\}$.  Let $p_2$ be a non-decreasing positive polynomial in one variable,
let $q$ be a non-decreasing function, and let $r$, $k$, and $t$ be positive integers.
A class of graphs $\GG$ \emph{admits $(p_2,q)$-small $\PP_r$-overlays of thickness $1+1/k$ and treewidth $t$}
if for every graph $G\in \GG$ with $n$ vertices, there exists a system $\LL$ of $\PP_r$-overlays of $G$
of thickness at most $1+1/k$, treewidth at most $t$, and size at most $p_2(q(n))$.
\end{definition}

\begin{definition}\label{def-admit}
Consider any $\PP\in\{\Aa,\Ss,\bigstar\}$.  Let $q$ be a non-decreasing function and let $p$ be a non-decreasing positive polynomial.
Let $t$ be a function in two variables.
A class of graphs $\GG$ is \emph{$(\PP,q)$-thin}, with $t$ \emph{bounding the treewidth of its overlays}, if
for each positive integers $r$ and $k$, there exists a non-decreasing positive polynomial $p_2$
for which $\GG$ admits $(p_2,q)$-small $\PP_r$-overlays of thickness $1+1/k$ and treewidth $t(r,k)$.
We say that $\GG$ is \emph{$(\PP,p,q)$-thin} if there additionally exists a function $f$ in two variables
and an algorithm with time complexity $O(p(n)+n(q(n))^{f(r,k)})$ that for a graph $G\in \GG$ with $n$ vertices
returns the corresponding system of $\PP_r$-overlays.
\end{definition}

We say that the function $q$ is \emph{subpolynomial} if for every $\varepsilon>0$ and every positive integer $m$, $(q(n))^m=O(n^\varepsilon)$.
We will generally consider only classes that are $(\PP,q)$-thin for some subpolynomial $q$; we will not care about the exact
magnitude of the function $f(r,k)$, and we will write the complexity of the algorithm from the definition as
$O(p(n)+n\,\poly(q(n)))$, with the understanding that the degree of the polynomial
(and the multiplicative constants of the $O$-notation) may depend on $r$ and $k$.

Let us remark that in applications, we generally use only the case that $\PP=\Aa$.
As already mentioned, the notion of $\bigstar$-thinness is a technicality we introduce
to deal with clique-sums (although note also the application in Theorem~\ref{thm-svcov});
by Observation~\ref{obs-sg}, $\star$-overlays can be turned into normal overlays.
There is of course a concern that $G^\star$ could be much larger than $G$, affecting the time complexity of the operations.
However, this is not the case by the following lemma, which shows that for thin classes, $G^\star$ is larger only by a constant factor.

\begin{lemma}\label{lemma-nume}
Consider any $\PP\in\{\Aa,\Ss,\bigstar\}$.
If a class $\GG$ is $(\PP,q)$-thin with $t$ bounding the treewidth of its overlays and $c=4t(1,1)$, then
every graph $G\in\GG$ has maximum average degree at most $c$.  Consequently, $G$ is $c$-degenerate,
the largest clique in $G$ has size at most $c+1$, and $G$ contains at most $2^c|V(G)|$ non-empty cliques.
\end{lemma}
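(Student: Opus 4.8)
The plan is to extract enough structure from a single, cheap overlay---one of thickness $1+1/1 = 2$ and treewidth $t(1,1)$---to bound the average degree of $G$. Fix $G\in\GG$ with $n$ vertices. By $(\PP,q)$-thinness applied with $r=k=1$, there is a system $\LL$ of $\PP_1$-overlays of the overlaid graph $G'$ (which is $G$ if $\PP\in\{\Aa,\Ss\}$ and $G^\star$ if $\PP=\bigstar$; in the latter case $G$ is a subgraph of $G'$, so a bound on the average degree of $G'$ transfers to $G$) with thickness $\theta(\LL)\le 2$ and treewidth at most $t(1,1)$. The first step is to use the probabilistic reading of thickness: pick $L=(H,f,\ell)\in\LL$ uniformly at random, so that $\mathbf{E}\bigl[|f^{-1}(v)|\bigr]=\theta_\LL(v)\le 2$ for every $v\in V(G')$, hence $\mathbf{E}\bigl[|V(H)|\bigr]\le 2|V(G')|$.

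The second step is to produce a lower bound on $|V(H)|$ in terms of $|E(G')|$ using the walk-preserving (in fact $1$-neighbourhood overlay) property. For $r=1$, the defining condition of an $r$-neighbourhood overlay says that for each $v\in V(G')$ there is a vertex $x_v\in f^{-1}(v)$ with $\ell(x_v)=1$, and being walk-preserving then forces: for every neighbour $w$ of $v$ in $G'$, there is a neighbour $y$ of $x_v$ in $H$ with $f(y)=w$. Thus $f$ restricted to the neighbourhood of $x_v$ in $H$ surjects onto the neighbourhood of $v$ in $G'$, so $\deg_H(x_v)\ge\deg_{G'}(v)$. Summing over $v$ and noting the $x_v$ are distinct (they have distinct images), we get $2|E(H)|=\sum_{x\in V(H)}\deg_H(x)\ge\sum_{v\in V(G')}\deg_H(x_v)\ge\sum_{v\in V(G')}\deg_{G'}(v)=2|E(G')|$, i.e. $|E(H)|\ge|E(G')|$. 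On the other hand, $H$ has treewidth at most $t(1,1)=:t_0$, so $|E(H)|\le t_0\,|V(H)|$ (a graph of treewidth $t_0$ is $t_0$-degenerate). Combining, $|E(G')|\le t_0\,|V(H)|$; taking expectations and using Step~1, $|E(G')|\le t_0\cdot 2|V(G')| = \tfrac{c}{2}|V(G')|$ with $c=4t_0$, so the average degree of $G'$ is at most $c$. Since this bound holds for $G'$ and $G'$ is either $G$ or a supergraph of $G$ containing $G$ as a subgraph, and since every subgraph $G_0$ of a graph in $\GG$ with at least one edge is itself reachable this way only if $\GG$ is subgraph-closed---which we do not assume---we instead argue directly for \emph{maximum} average degree below.

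The third step upgrades "average degree $\le c$ for graphs in $\GG$" to "\emph{maximum} average degree of $G$ is at most $c$". Here we cannot simply pass to a densest subgraph of $G$, because that subgraph need not lie in $\GG$. Instead, apply Observation~\ref{obs-sg}: for any subgraph $G_0$ of $G$, the overlay $L$ of $G'$ restricts to a $1$-neighbourhood overlay $L_0$ of $G_0$ of no larger treewidth and thickness. Running the Step-1/Step-2 argument with $L_0$ in place of $L$ (the system $\LL$ restricted likewise, whose thickness only drops) yields $|E(G_0)|\le\tfrac{c}{2}|V(G_0)|$, i.e. every subgraph has average degree at most $c$, which is exactly $\mathrm{mad}(G)\le c$. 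From this the remaining consequences are standard: $\mathrm{mad}(G)\le c$ implies $G$ is $c$-degenerate; a $c$-degenerate graph has no clique of order $c+2$, so the largest clique has at most $c+1$ vertices; and in a $c$-degenerate graph the number of cliques is at most $2^c|V(G)|$, since processing vertices in a degeneracy order each vertex $v$ has at most $c$ earlier neighbours and hence lies in at most $2^c$ cliques whose maximum-in-order vertex is $v$.

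The main obstacle---or rather the one subtlety worth care---is exactly this passage from a bound for members of $\GG$ to a bound on \emph{all} subgraphs (needed because \emph{maximum} average degree is defined via subgraphs, and $\GG$ is not assumed subgraph-closed). Observation~\ref{obs-sg}, which guarantees that overlays restrict cleanly to arbitrary subgraphs while only improving treewidth and thickness, is precisely what makes this step go through; everything else is the elementary "treewidth $\Rightarrow$ degeneracy $\Rightarrow$ clique bound $\Rightarrow$ counting cliques" chain.
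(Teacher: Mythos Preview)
Your proof is correct and follows essentially the same approach as the paper's: use a single $\Aa_1$-overlay system of thickness $2$ and treewidth $t(1,1)$ (obtained via Observation~\ref{obs-sg} for an arbitrary subgraph), exploit the walk-preserving property to show $|E(H)|\ge|E(G')|$, use bounded treewidth to bound $|E(H)|/|V(H)|$, and use the thickness bound on $|V(H)|$ to finish. The only cosmetic difference is that the paper starts immediately with an arbitrary subgraph $G'$ of $G$ and fixes one overlay $L$ achieving $|V(H)|\le 2|V(G')|$, whereas you first run the argument for the overlaid graph and then invoke Observation~\ref{obs-sg} a second time to pass to subgraphs; the content is the same.
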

\begin{proof}
Consider a graph $G\in\GG$ and its subgraph $G'$, and let $\LL$ be a system of $\Aa_1$-overlays
of $G'$ of thickness at most $2$ and treewidth at most $t(1,1)$, which exists by Definition~\ref{def-admit}
and Observation~\ref{obs-sg}.
Let $L=(H,f,\ell)$ be chosen uniformly at random from $\LL$.  We have
$$\text{E}[|V(H)|]=\text{E}\left[\sum_{v\in V(G')}\theta_L(v)\right]\le 2|V(G')|$$
by the linearity of expectation.  Fix some $L\in \LL$ such that $|V(H)|\le 2|V(G')|$.

Since $L$ is walk-preserving, each vertex $x\in V(H)$ with $\ell(x)>0$ satisfies $\deg_H(x)\ge \deg_{G'}(f(x))$,
and we conclude that $$\sum_{x\in V(H)} \deg_H(x)\ge \sum_{v\in V(G')} \deg_{G'}(v),$$
and thus $|E(H)|\ge |E(G')|$.

Since $H$ has treewidth at most $t(1,1)$, it is $t(1,1)$-degenerate and its average degree is at most $2t(1,1)$.
Consequently, the average degree of $G'$ is
$$\frac{2|E(G')|}{|V(G')|}\le \frac{4|E(H)|}{|V(H)|}\le 4t(1,1).$$

Therefore, the maximum average degree of $G$ is at most $c$, and thus the graph $G$ is $c$-degenerate.  Consider the ordering
of vertices of $G$ such that each vertex has at most $c$ neighbors preceding it in the ordering.
If $K$ is a clique of $G$ and $v$ is the last vertex of $K$ in the ordering, then all other vertices of $K$
are among the preceding neighbors of $v$.  Hence, $|K|\le c+1$.  Furthermore, each vertex is the last vertex
of at most $2^c$ non-empty cliques, and thus the total number of non-empty cliques of $G$ is at most $2^c|V(G)|$.
\end{proof}

It follows that when we consider polynomial-time algorithms for graphs in a class $\GG$ admitting thin overlays of bounded treewidth,
it is sufficient to measure their time complexity with respect to the number of vertices,
as the number of edges is larger by at most a constant factor.

Let us remark that in the situation of Definition~\ref{def-admit1},
if $|\LL|=b$ and $\LL=\{L_i:1\le i\le b\}$
with $L_i=(H_i,f_i,\ell_i)$, then
$$\sum_{i=1}^b |V(H_i)|=\sum_{v\in V(G)}\sum_{i=1}^b \theta_{L_i} (v)\le nb\theta(\LL)=O(n\,\poly(q(n))).$$
Furthermore, since all the graphs $H_i$ have bounded treewidth, the number of their
edges is linear in the number of their vertices.  Hence linear-time manipulations with such systems of overlays
can be performed in time $O(n\,\poly(q(n)))$, and we will neglect mentioning the complexity of such manipulations afterwards.
In particular, these facts together with Observation~\ref{obs-sg} imply that every $(\bigstar,p,q)$-thin class is $(\Aa,p,q)$-thin.

\section{Approximation algorithms via thinness}\label{sec-approx}

Let us now show how thin systems of overlays can be used to design approximation schemes.
Firstly, we will consider problems such as the following ones:
\begin{itemize}
\item Maximum $r$-independent set, for a fixed positive integer $r$: Given a graph $G$, find a largest subset $X$ of vertices of $G$ such that the distance
between distinct elements of $X$ is at least $r$.
\item Maximum $F$-matching, for a fixed connected graph $F$: Given a graph $G$, find a largest subset $X$ of vertices of $G$ such that $G[X]$ can be partitioned into
vertex-disjoint copies of $F$.
\item Maximum $c$-colorable induced subgraph, for a fixed positive integer $c$: Given a graph $G$, find a largest subset $X$ of vertices of $G$ such that
the induced subgraph $G[X]$ can be partitioned into $c$ independent sets.
\item The $r$-distance variants of the previous two problems, for any fixed integer $r$.
\end{itemize}

All these problems can be expressed in uniform way as follows.
For a positive integer $r$, let $\text{MSO}_r$ denote Monadic Second Order logic with predicates $d_l(u,v)$ for $l=1,\ldots,r$,
where $d_l(u,v)$ is interpreted to be true if there exists a path of length at most $l$ between vertices $u$ and $v$.
A formula $\varphi$ in $\text{MSO}_r$ with one free vertex set variable $X$ is \emph{solution-restricted}
if all its quantifications are restricted to $X$, i.e., $\varphi$ only contains quantifiers of form
$(\exists/\forall v:v\in X)$, $(\exists/\forall V:V\subseteq X)$, $(\exists/\forall e=uv:u,v\in X)$, and
$(\exists/\forall E:u,v\in X\text{ for all }uv\in E)$).

Consider the problem of finding a largest set $X\subseteq V(G)$ in a graph $G$ satisfying some property $\pi$.
For a non-negative integer $s$, we say that the problem is \emph{$s$-near-monotone solution-restricted $\text{MSO}_r$-expressible} if
the following conditions are satisfied.
\begin{itemize}
\item For every set $X\subseteq V(G)$ that has the property $\pi$ and every set $Y\subseteq X$,
there exists a subset of $Y$ of size at least $|Y|-s|X\setminus Y|$ that has the property $\pi$.
\item There exists a solution-restricted formula $\varphi$ in $\text{MSO}_r$ with one free vertex set variable $X$
such that $G\models \varphi(X)$ if and only if $X$ has the property $\pi$.
\end{itemize}
As an example, consider the $r$-distance variant of maximum $F$-matching, for a positive integer $r$ and a connected graph $F$.
Here, $\pi$ is the property that there exists a set of edges $Z\subseteq E(G[X])$ such that the graph $(X,Z)$ is a disjoint union
of copies of $F$ and no two vertices belonging to different components of the graph have distance less than $r$ in $G$;
this property is clearly solution-restricted $\text{MSO}_r$-expressible.  Furthermore, if $X$ has this property $\pi$
and $Y\subseteq X$, then let $Y'$ be the subset of $Y$ obtained by removing all components of the graph $(X,Z)$ that intersect $X\setminus Y$.
Clearly, $Y'$ has property $\pi$ and $|Y'|\ge |Y|-(|V(F)|-1)|X\setminus Y|$, and thus $\pi$ is $(|V(F)|-1)$-near-monotone.

\begin{theorem}\label{thm-indep}
For integers $s\ge 0$ and $r\ge 1$, let $\pi$ be an $s$-near-monotone solution-restricted $\text{MSO}_r$-expressible property,
and let $\varepsilon>0$ be any real number.
If a class of graphs $\GG$ is $(\Aa,p,q)$-thin for some functions $p$ and $q$, then given
an $n$-vertex graph $G\in \GG$, a set $X\subseteq V(G)$ satisfying the property $\pi$
whose size is within $1-\varepsilon$ factor of the maximum one can be found in time $O(p(n)+n\,\poly(q(n)))$.
\end{theorem}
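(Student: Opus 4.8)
The plan is to follow the template of the motivating example (the $r$-dominating set algorithm sketched before Definition~\ref{def-nonp-over}), but phrased in the language of overlays and adapted to a maximization problem. Let $k=\lceil 1/\varepsilon\rceil$ and obtain from the $(\Aa,p,q)$-thinness assumption a system $\LL$ of $\Aa_r$-overlays of $G$ of thickness at most $1+1/k$, treewidth at most $t(r,k)$, and size $p_2(q(n))$, computed in time $O(p(n)+n\,\poly(q(n)))$. For each $L=(H,f,\ell)\in\LL$, we solve the optimization problem on $H$ \emph{restricted to the vertices with full neighborhood}, i.e., we find a largest $X_H\subseteq \{x\in V(H):\ell(x)=r\}$ such that $H\models \varphi(X_H)$ — here we must be slightly careful, evaluating $\varphi$ on $H$ using the genuine distance predicates of $H$. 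Since $\varphi$ is solution-restricted and $X_H$ lies inside the "core" $\ell^{-1}(r)$, Lemma~\ref{lemma-walk} guarantees that the distance-$\le r$ relations among vertices of $X_H$ are the same whether measured in $H$ or in $G$ (a path of length $\le r$ in $G$ between images of core vertices lifts to $H$, and homomorphisms only shorten distances). Hence $G\models\varphi(f(X_H))$ provided $f$ is injective on $X_H$; in general $f(X_H)$ may not have property $\pi$ directly, but it has a large subset that does, by near-monotonicity applied with the ambient solution being $f(X_H)$ itself and $Y$ the part of it that is the injective image of a subset of $X_H$.

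The two directions to verify are: (i) \emph{every overlay yields a feasible-ish set}, and (ii) \emph{some overlay recovers almost all of the optimum}. For (i): given $X_H$ as above, $f$ restricted to $X_H$ need not be injective, but each fibre has size at most the thickness bound only in expectation, not pointwise, so instead I would argue directly: since $\varphi$ is solution-restricted, $H[\text{stuff in } X_H]$ together with the distance predicates restricted to $X_H$ embeds into $G$ via $f$ as long as we are willing to pass to an independent-in-the-fibre subset; formally, pick a maximal $X'_H\subseteq X_H$ on which $f$ is injective, set $X=f(X'_H)$, note $G\models\varphi(X)$, and $|X|=|X'_H|$. This discards at most the "collisions," which we control globally rather than per-overlay. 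For (ii): let $X^\ast$ be the optimum in $G$. For each $L$, lift $X^\ast$ into $H$: by the $r$-neighborhood property each $v\in X^\ast$ has a preimage with $\ell=r$, pick one, call the resulting set $\widetilde{X}_L\subseteq \ell^{-1}(r)$ with $f$ injective on it and $f(\widetilde{X}_L)=X^\ast$. Then $\widetilde{X}_L$ is a feasible solution for the $H$-problem (again by Lemma~\ref{lemma-walk}, distances within $\widetilde{X}_L$ match those in $X^\ast$, so $H\models\varphi(\widetilde{X}_L)$), hence $|X_H|\ge |X^\ast|$ for every $L$.

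Now combine. Averaging over $L\in\LL$ chosen uniformly, $\mathrm{E}[|f^{-1}(X^\ast)|]\le (1+1/k)|X^\ast|$ by the probabilistic reading of thickness, and more usefully, working with the solutions we actually produce: for each $L$ let $c_L$ be the number of fibre-collisions we had to discard when extracting $X=f(X'_H)$ from $X_H$; since $|X_H|\ge|X^\ast|$ we get a feasible $G$-solution of size $\ge |X^\ast| - c_L$, but additionally we must account for near-monotonicity if we do not pre-thin, giving size $\ge |X^\ast| - (s+1)c_L$ or so. The point is $\sum_{L\in\LL} c_L \le \sum_{L}( |f^{-1}(X^\ast)| - |X^\ast|) = |\LL|\,(\theta_{\LL}\text{-slack})\,|X^\ast| \le |\LL|\cdot \frac{1}{k}|X^\ast|$, so the average collision count is at most $|X^\ast|/k$, hence some $L$ has $c_L\le |X^\ast|/k$ and produces a feasible set of size at least $(1-(s+1)/k)|X^\ast|$; choosing $k=\lceil (s+1)/\varepsilon\rceil$ from the start gives the $(1-\varepsilon)$ guarantee. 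The algorithm returns the largest feasible set found over all $L\in\LL$. Each per-overlay step is "solve a solution-restricted $\mathrm{MSO}_r$ maximization on a bounded-treewidth graph," which is doable in time linear in $|V(H)|$ (with constants depending on $\varphi$, $r$, $t(r,k)$) by a Courcelle-type theorem — one has to observe that the distance predicates $d_l$ are themselves $\mathrm{MSO}$-expressible, or better, handled by a standard product/automata construction on the tree decomposition; summing $|V(H)|$ over $\LL$ is $O(n\,\poly(q(n)))$ as noted after Lemma~\ref{lemma-nume}.

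The main obstacle I expect is the bookkeeping around non-injectivity of $f$ on the lifted solution together with near-monotonicity: one has to be careful that the "collisions" discarded and the "near-monotone losses" are charged against the same quantity $|f^{-1}(X^\ast)\setminus X^\ast|$ and do not compound multiplicatively in a way that ruins the averaging. A secondary technical point is making the Courcelle-style algorithm genuinely handle the constraint "$X\subseteq\ell^{-1}(r)$ and $\varphi$ holds with $H$-distances" uniformly — this is routine but needs the remark that $d_l$ is $\mathrm{MSO}$-definable on graphs of bounded treewidth (indeed the bounded-treewidth bound makes even short-distance connectivity finite-state). Everything else is the standard shifting/averaging argument inherited from Baker's technique.
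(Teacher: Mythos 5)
The high-level template of your proposal (per-overlay bounded-treewidth MSO optimization, then averaging over the system via the thickness bound) is the same as the paper's, and the complexity accounting and the Courcelle-type subroutine are fine. But the central transfer step is flawed. You optimize over $X_H\subseteq\ell^{-1}(r)$ and claim that if $f$ is injective on $X_H$ then $G\models\varphi(f(X_H))$; dually, you lift the optimum $X^\ast$ by choosing one preimage with $\ell=r$ per vertex and claim $H\models\varphi(\widetilde X_L)$. Neither holds in general: walk-preservation only guarantees that \emph{some} preimage of a $G$-neighbor of $f(x)$ is an $H$-neighbor of $x$, not the particular preimage you selected. Thus two vertices $x,y\in\ell^{-1}(r)$ with $f(x)f(y)\in E(G)$ need not be adjacent in $H$, so $H[X_H]$ and the $d_l$ relations among $X_H$ can differ from those of $G[f(X_H)]$ even where $f$ is injective --- an $H$-independent set can map onto a non-independent set of $G$, and an $F$-matching in $G$ need not lift to one in $H$ under an arbitrary choice of preimages. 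Near-monotonicity cannot repair this: it permits deleting elements from a set that already has $\pi$, not fixing a set that fails $\pi$ because of extra edges or shorter distances in $G$. Your collision bookkeeping also conflates collisions inside the computed $X_H$ with the thickness slack measured on the optimum, which are not the same quantity.

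The paper avoids all of this by restricting both the search and the lift to $S_L=\{v\in V(G):\theta_L(v)=1\}$, the vertices with a \emph{unique} preimage. On $f^{-1}(S_L)$ uniqueness forces the walk-preservation witness to coincide with the chosen preimage, so $f$ is an isomorphism from $H[f^{-1}(S_L)]$ to $G[S_L]$, and together with Lemma~\ref{lemma-walk} the predicates $d_l$ agree on both sides; hence $G\models\varphi(Z)$ if and only if $H\models\varphi(f^{-1}(Z))$ for every $Z\subseteq S_L$, in both directions. Near-monotonicity is then applied to $X_0\cap S_L$ (a subset of a genuine $\pi$-set), and the loss $|X_0\setminus S_L|$ is controlled by the averaging argument because $\theta_L(v)\ge 2$ whenever $v\notin S_L$. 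If you replace $\ell^{-1}(r)$ by $f^{-1}(S_L)$ throughout and charge all losses to $|X_0\setminus S_L|$, your argument becomes the paper's proof.
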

\begin{proof}
Let $\varphi$ be a solution-restricted $\text{MSO}_r$-formula expressing the property $\pi$.
Let $k$ be a positive integer such that $1/k\le \varepsilon/(s+1)$.
Let $\LL$ be a system of $\Aa_r$-overlays of $G$ of thickness at most $1+1/k$, treewidth at most $t=t(r,k)$ and size $\poly(q(n))$,
which can be found in time $O(p(n)+n\,\poly(q(n)))$.

Consider any overlay $L=(H,f,\ell)\in\LL$, and let $S_L=\{v\in V(G):\theta_L(v)=1\}$.
Note that $\ell(x)=r$ for all $x\in f^{-1}(S_L)$, and that $f$ is an isomorphism from $H[f^{-1}(S_L)]$ to $G[S_L]$.
Since $\varphi$ is solution-restricted, only the distance predicates in $\varphi$ take into account the parts
of the graph not contained in the considered set; hence, Lemma~\ref{lemma-walk} implies that for any $Z\subseteq S_L$,
we have $G\models \varphi(Z)$ if and only if $H\models\varphi(f^{-1}(Z))$.
Since $H$ has bounded treewidth, we can in time $O(|V(H)|)$ find a set $Z_L\subseteq S_L$ of largest size such that
$H\models\varphi(f^{-1}(Z_L))$, and thus $Z_L$ has property $\pi$.
The algorithm returns the largest set $Z_L$ over all $L\in\LL$.

Consider an optimal solution, i.e., a set $X_0\subseteq V(G)$ of maximum size satisfying the property $\pi$.
For any $L=(H,f,\ell)\in\LL$, let $Y_L=X_0\cap S_L$.  Since $\pi$ is $s$-near-monotone, there exists a set $Z'_L\subseteq Y_L$ of size at
least $|Y_L|-s|X_0\setminus Y_L|$ satisfying property $\pi$; since $Z'_L\subseteq S_L$, the choice of $Z_L$ implies
$$|Z_L|\ge |Z'_L|\ge |X_0|-(s+1)|X_0\setminus Y_L|=|X_0|-(s+1)|X_0\setminus S_L|.$$
Consider any vertex $v\in V(G)$.  Since $\theta_\LL(v)\le 1+1/k$, and $\theta_L(v)\ge 1$ for all overlays $L\in\LL$, and $\theta_L(v)\ge 2$
if $v\not\in S_L$, we conclude that there are at most $|\LL|/k$ overlays $L\in\LL$ such that $v\not\in S_L$.
Consequently,
$$\frac{1}{|\LL|}\sum_{L\in\LL}|X_0\setminus S_L|=\sum_{v\in X_0}\frac{1}{|\LL|}|\{L\in\LL:v\not\in S_L\}|\le \frac{|X_0|}{k},$$
and thus there exists an overlay $L\in\LL$ such that $|X_0\setminus S_L|\le |X_0|/k$.  For this overlay, we have $$|Z_L|\ge |X_0|-\frac{s+1}{k}|X_0|\ge (1-\varepsilon)|X_0|,$$
and thus the algorithm returns a set whose size is within $1-\varepsilon$ factor of the maximum one.
\end{proof}

A similar argument can be applied to deal with domination problems.

\begin{theorem}\label{thm-domin}
Let $r$ be a positive integer and $\varepsilon$ a positive real number.
If a class of graphs $\GG$ is $(\Aa,p,q)$-thin for some functions $p$ and $q$,
then given an $n$-vertex graph $G\in \GG$, an $r$-dominating set $X\subseteq V(G)$
whose size is within $1+\varepsilon$ factor of the minimum one can be found in time $O(p(n)+n\,\poly(q(n)))$.
\end{theorem}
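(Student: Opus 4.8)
The plan is to follow the same template as the proof of Theorem~\ref{thm-indep}, but adapted to the minimization nature of $r$-domination, using the motivating example from Section~\ref{sec-thinover} as the model. First I would fix a positive integer $k$ with $1/k\le\varepsilon$, and obtain from $(\Aa,p,q)$-thinness a system $\LL$ of $\Aa_r$-overlays of $G$ of thickness at most $1+1/k$, treewidth at most $t=t(r,k)$, and size $\poly(q(n))$, computable in time $O(p(n)+n\,\poly(q(n)))$. For each overlay $L=(H,f,\ell)\in\LL$, the idea is to compute in $H$ (which has bounded treewidth, so this takes time $O(|V(H)|)$ by Courcelle's theorem) a smallest set $Y_L\subseteq V(H)$ that $r$-dominates $f^{-1}(S_L)$ in $H$, where $S_L=\{v\in V(G):\theta_L(v)=1\}$ is the ``fully represented'' part of $L$, and then return $f(Y_L)$ for the overlay $L$ minimizing $|Y_L|$. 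The key point, via Lemma~\ref{lemma-walk}, is that since every vertex of $S_L$ has a preimage $x$ with $\ell(x)=r$ and the overlay is walk-preserving, walks of length at most $r$ in $G$ ending in such a vertex lift to walks in $H$; hence $f(Y_L)$ genuinely $r$-dominates $S_L$ in $G$.

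The correctness argument splits into the ``feasibility'' and ``approximation ratio'' halves. For feasibility of the returned set: it $r$-dominates $S_L$, but not necessarily $V(G)\setminus S_L$, so I would need to patch it by adding all of $V(G)\setminus S_L$ to $f(Y_L)$ (or rather, argue directly about the right quantity). For the ratio: let $X_0$ be a minimum $r$-dominating set of $G$. For a fixed overlay $L$, consider the walks witnessing that $X_0$ $r$-dominates $S_L$; lifting these via Lemma~\ref{lemma-walk} shows that some subset $\widehat{X_0}\subseteq V(H)$ with $f(\widehat{X_0})\subseteq X_0$ (so $|\widehat{X_0}|\le|X_0|$ after accounting for thickness) $r$-dominates $f^{-1}(S_L)$ in $H$ — wait, one must be careful: a vertex $v$ of $X_0$ may have several preimages, but we only need one preimage per dominating walk, so in fact $|\widehat{X_0}|\le|X_0|$. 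Hence $|Y_L|\le|X_0|$ for every $L$. Then the returned set, after adding the vertices outside $S_L$, has size at most $|X_0|+|V(G)\setminus S_L|$. Averaging over $\LL$: as in the proof of Theorem~\ref{thm-indep}, since $\theta_\LL(v)\le1+1/k$ and $\theta_L(v)\ge2$ when $v\notin S_L$, for each $v$ at most a $1/k$ fraction of overlays omit $v$ from $S_L$, so $\frac1{|\LL|}\sum_{L\in\LL}|V(G)\setminus S_L|\le n/k$. This is not immediately good enough, because $n$ can be much larger than $|X_0|$; so the bound must be sharpened.

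The main obstacle, and the point where this proof genuinely differs from Theorem~\ref{thm-indep}, is precisely this: one cannot afford to simply dump all of $V(G)\setminus S_L$ into the solution. The fix, familiar from the motivating $r$-domination example, is that each vertex outside $S_L$ is within distance $r$ of $X_0$, so it can be dominated in $H$ if the overlay represents its distance-$r$ neighborhood — but outside $S_L$ the overlay is precisely \emph{not} fully representative. The clean resolution is to choose, for each $L$, the set $Z_L$ that $r$-dominates $f^{-1}(V(H))$ (not just $f^{-1}(S_L)$) in $H$ of minimum size; this is still computable in linear time, still satisfies $f(Z_L)$ $r$-dominates all of $V(G)$ (every vertex of $V(G)$ has \emph{some} preimage with $\ell=r$, by the definition of $r$-neighborhood overlay, hence its $r$-neighborhood lifts), so $f(Z_L)$ is genuinely feasible with no patching. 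For the ratio, lifting the $X_0$-witnessing walks for \emph{all} of $V(G)$ gives $|Z_L|\le|f^{-1}(X_0)|\le\sum_{v\in X_0}\theta_L(v)$; averaging over $L$ gives $\frac1{|\LL|}\sum_L|Z_L|\le\sum_{v\in X_0}\theta_\LL(v)\le(1+1/k)|X_0|\le(1+\varepsilon)|X_0|$, so the minimum-size $Z_L$ satisfies the desired bound. I would double-check the lifting step — that distinct dominating walks for distinct vertices can be assigned preimages so that the total number of preimage-vertices used is at most $|f^{-1}(X_0)|$ — but this is immediate since we may simply take $Z_L=f^{-1}(X_0)$ itself, which $r$-dominates $f^{-1}(V(H))$ in $H$ by walk-preservation and the $r$-neighborhood property, establishing $|Z_L^{\min}|\le|f^{-1}(X_0)|$ directly.
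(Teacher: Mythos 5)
Your overall strategy is the paper's: lift the problem to each overlay $H$, solve it exactly there using bounded treewidth, project back via $f$, and use the thickness bound $\sum_{v\in X_0}\theta_\LL(v)\le(1+1/k)|X_0|$ to find one overlay in which $f^{-1}(X_0)$ is barely larger than $X_0$. You also correctly diagnose why the set $S_L=\{v\in V(G):\theta_L(v)=1\}$ from Theorem~\ref{thm-indep} is the wrong object here. However, your ``clean resolution'' overshoots in the other direction and its key step is false as stated. You require $Z_L$ to $r$-dominate \emph{all} of $V(H)$ (your ``$f^{-1}(V(H))$''), and you justify $|Z_L|\le|f^{-1}(X_0)|$ by asserting that $f^{-1}(X_0)$ $r$-dominates all of $V(H)$ ``by walk-preservation and the $r$-neighborhood property''. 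This is not true: walk-preservation only lifts walks of length at most $\ell(x)$ starting at $x$, so a vertex $x\in V(H)$ with $\ell(x)<r$ whose image $f(x)$ is at distance close to $r$ from $X_0$ in $G$ need not be within distance $r$ of $f^{-1}(X_0)$ in $H$ (when $\ell(x)=0$, nothing in Definition~\ref{def-over} even prevents $x$ from being isolated in $H$; the boundary layers $V_{i-r},\ldots,V_{i-1}$ in the motivating example of Section~\ref{sec-thinover} are exactly such under-represented vertices). With your definition of $Z_L$ the comparison set $f^{-1}(X_0)$ may therefore be infeasible, and the inequality $|Z_L|\le|f^{-1}(X_0)|$ on which the approximation ratio rests breaks down.

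The correct target set --- the one the paper uses --- sits strictly between your two candidates: take $S_L=\{x\in V(H):\ell(x)=r\}$ and let $Z_L$ be a minimum subset of $V(H)$ that $r$-dominates $S_L$ in $H$. Feasibility of $f(Z_L)$ holds because $f(S_L)=V(G)$ (every $v\in V(G)$ has a preimage with $\ell=r$) and walks in $H$ project to walks in $G$; and $f^{-1}(X_0)$ does $r$-dominate $S_L$, since every $x\in S_L$ has $\ell(x)=r$, so the length-$\le r$ walk in $G$ from $f(x)$ to $X_0$ lifts to $H$ starting at $x$ and ending in $f^{-1}(X_0)$. With this one correction your averaging argument goes through verbatim and recovers the paper's proof; indeed your own parenthetical remark that ``every vertex of $V(G)$ has some preimage with $\ell=r$'' shows you had the right invariant in hand --- you only attached the domination requirement to the wrong set.
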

\begin{proof}
Let $k$ be a positive integer such that $1/k\le \varepsilon$.
Let $\LL$ be a system of $\Aa_r$-overlays of $G$ of thickness at most $1+1/k$, treewidth at most $t=t(r,k)$ and size $\poly(q(n))$,
which can be found in time $O(p(n)+n\,\poly(q(n)))$.

Consider any overlay $L=(H,f,\ell)\in\LL$, and let $S_L=\{x\in V(H):\ell(x)=r\}$.
Since $L$ is an $r$-neighborhood overlay, we have $f(S_L)=V(G)$.
Let $Z_L$ be a minimum subset of $V(H)$ that $r$-dominates $S_L$ in $H$, which can be found in time $O(|V(H)|)$ since $H$ has bounded
treewidth.  By Lemma~\ref{lemma-walk}, $f(Z_L)$ is an $r$-dominating set in $G$.  The algorithm returns the smallest set $f(Z_L)$ over all $L\in\LL$.

Let $X_0\subseteq V(G)$ be a minimum $r$-dominating set of $G$.
For an overlay $L=(H,f,\ell)\in\LL$, let $Z'_L=f^{-1}(X_0)$. By Lemma~\ref{lemma-walk}, the set $Z'_L$ $r$-dominates $S_L$ in $H$,
and thus $|Z'_L|\ge |Z_L|$.  Furthermore, $|Z'_L|=\sum_{v\in X_0}\theta_L(v)$.
Since $\LL$ has thickness at most $1+1/k$, there exists an overlay $L\in\LL$ such that $|Z'_L|\le (1+1/k)|X_0|\le (1+\varepsilon)|X_0|$.
Consequently, $|f(Z_L)|\le |Z_L|\le |Z'_L|\le (1+\varepsilon)|X_0|$,
and thus the algorithm returns a set whose size is within $1+\varepsilon$ factor of the maximum one.
\end{proof}

The notion of $\bigstar$-thinness is mostly a technical tool to deal with clique-sums.  Nevertheless,
it also gives us a control over the cliques in the overlaid graph, which is useful in the
problem of finding a minimum vertex cover, or more generally in the following problem.
An \emph{$s$-clique cover} is a set of vertices of $G$ that intersects all cliques of size at least $s$.

\begin{theorem}\label{thm-svcov}
Let $s$ be a positive integer and $\varepsilon$ a positive real number.
If a class of graphs $\GG$ is $(\bigstar,p,q)$-thin for some functions $p$ and $q$, then given an $n$-vertex graph $G\in \GG$,
an $s$-clique cover whose size is within $1+\varepsilon$ factor of the minimum one can be found in time $O(p(n)+n\,\poly(q(n)))$.
\end{theorem}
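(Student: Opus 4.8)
The plan is to follow the template of the proof of Theorem~\ref{thm-domin}, but to work with $\bigstar_1$-overlays rather than $\Aa_r$-overlays, so that the cliques of $G$ become faithfully visible in the overlaid graph. Fix a positive integer $k$ with $1/k\le\varepsilon$, and use $(\bigstar,p,q)$-thinness to obtain, in time $O(p(n)+n\,\poly(q(n)))$, a system $\LL$ of $\bigstar_1$-overlays of $G$ of thickness at most $1+1/k$, treewidth at most $t=t(1,k)$, and size $\poly(q(n))$. For each $L=(H,f,\ell)\in\LL$, let $H^\circ_L$ be the induced subgraph of $H$ on the vertex set $f^{-1}(V(G))$, and compute a minimum $s$-clique cover $Z_L$ of $H^\circ_L$. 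Since $H^\circ_L$ has treewidth at most $t$ and, for fixed $s$, ``being an $s$-clique cover'' is expressible in monadic second-order logic, this can be done in time $O(|V(H)|)$ by Courcelle's theorem~\cite{courcelle}. We may take $Z_L\subseteq V(H^\circ_L)$, so $f(Z_L)\subseteq V(G)$; the algorithm returns the smallest set $f(Z_L)$ over all $L\in\LL$.

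The core claim is that for every $L\in\LL$ the set $f(Z_L)$ is a valid $s$-clique cover of $G$, and this rests on the fact that every clique $K$ of $G$ with $|K|\ge s$ shows up as a clique of $H^\circ_L$ of size $|K|$. Indeed, since $L$ is a $1$-neighborhood $\star$-overlay, the vertex $v_K$ of $G^\star$ has a preimage $x$ with $\ell(x)=1$, and $x$ is simplicial in $H$ because $L$ is $(V(G^\star)\setminus V(G))$-simpliciality-preserving. Every vertex of $K$ is a neighbour of $v_K=f(x)$ in $G^\star$, so walk-preservation at $x$ yields, for each $w\in K$, a neighbour $y_w$ of $x$ in $H$ with $f(y_w)=w$. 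The vertices $\{y_w:w\in K\}$ are pairwise distinct (they have distinct $f$-images), pairwise adjacent (they lie in the clique $N_H(x)$), and all mapped into $V(G)$, hence they form a clique of the induced subgraph $H^\circ_L$ of size $|K|\ge s$. As $Z_L$ meets every clique of $H^\circ_L$ of size at least $s$, it contains some $y_w$, and so $f(Z_L)$ meets $K$.

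For the approximation guarantee, let $X_0$ be a minimum $s$-clique cover of $G$, and for each $L$ put $Z_L'=f^{-1}(X_0)$, which lies in $V(H^\circ_L)$ since $X_0\subseteq V(G)$. If $C$ is a clique of $H^\circ_L$ with $|C|\ge s$, then $f$ maps it injectively (being a homomorphism) onto a clique $f(C)$ of $G$ with $|f(C)|\ge s$, which $X_0$ meets; hence $Z_L'$ meets $C$, so $Z_L'$ is an $s$-clique cover of $H^\circ_L$ and $|Z_L|\le|Z_L'|=\sum_{v\in X_0}\theta_L(v)$. Averaging over $\LL$ and using $\theta_{\LL}(v)\le 1+1/k$ for every $v\in V(G)\subseteq V(G^\star)$ gives $\frac1{|\LL|}\sum_{L\in\LL}|Z_L'|=\sum_{v\in X_0}\theta_{\LL}(v)\le(1+1/k)|X_0|$, so for some $L$ we have $|f(Z_L)|\le|Z_L|\le(1+1/k)|X_0|\le(1+\varepsilon)|X_0|$. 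Together with the previous paragraph this shows the output is a feasible $s$-clique cover of size at most $1+\varepsilon$ times the optimum. The running time is $O(p(n)+n\,\poly(q(n)))$, since $\sum_{L\in\LL}|V(H)|=O(n\,\poly(q(n)))$ (using Lemma~\ref{lemma-nume} to bound $|V(G^\star)|$ by $O(n)$) and the work per overlay is linear.

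I expect the feasibility step of the second paragraph to be the only real obstacle, and it is precisely the reason the theorem is phrased in terms of $\bigstar$-thinness rather than $\Aa$-thinness: in a plain $1$-neighborhood overlay the neighbours of $x$ delivered by walk-preservation need not be pairwise adjacent, so a triangle of $G$ can disappear in $H$ and $f(Z_L)$ might fail to cover it --- the simpliciality-preservation of $\star$-overlays is exactly what forces those neighbours into a common clique. The one further point to handle with care is why the cover problem must be solved on $H^\circ_L$ rather than on all of $H$: a size-$s$ clique of $H$ that uses one copy of some $v_K$ corresponds only to a clique of $G$ of size $s-1$, which the optimal $X_0$ is not obliged to hit, so demanding that $Z_L$ cover such cliques would break the inequality $|Z_L|\le|Z_L'|$; restricting attention to the induced subgraph $H^\circ_L$ discards exactly these spurious constraints, while the second paragraph shows that all the genuine ones survive.
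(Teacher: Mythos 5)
Your proof is correct and follows essentially the same route as the paper's: $\bigstar_1$-overlays, simpliciality-preservation plus walk-preservation to recover each clique of $G$ of size at least $s$ as a clique of $H$, a bounded-treewidth hitting-set computation on each overlay, and the thickness-averaging argument for the $(1+1/k)$ guarantee. The only (immaterial) difference is the subproblem solved per overlay --- the paper takes a minimum set hitting the neighborhoods of the $\ell=1$ preimages of the vertices representing size-$s$ cliques, whereas you compute a minimum $s$-clique cover of $H[f^{-1}(V(G))]$ --- and your two sandwich claims (feasibility and $|Z_L|\le|Z'_L|$) show the formulations are interchangeable.
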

\begin{proof}
Let $S$ be the set of vertices of $G^\star$ representing the cliques of $G$ of size $s$.
Let $k$ be a positive integer such that $1/k\le \varepsilon$.
Let $\LL$ be a system of $\bigstar_1$-overlays of $G$ of thickness at most $1+1/k$, treewidth at most $t=t(1,k)$ and size $\poly(q(n))$,
which can be found in time $O(p(n)+n\,\poly(q(n)))$.

Consider any overlay $L=(H,f,\ell)\in\LL$.  Let $S_L=\{x\in f^{-1}(S): \ell(x)=1\}$.
Since $L$ is $S$-simpliciality-preserving, the neighborhood of each vertex $x\in S_L$ induces a clique,
and since $\ell(x)=1$ and $L$ is walk-preserving, the size of this clique is exactly $s$.
Let $Z_L$ be a minimum subset of $V(H)$ intersecting the neighborhoods of all vertices of $S_L$, which can be found in time $O(|V(H)|)$ since $H$ has bounded
treewidth.  Since $L$ is a $\star$-overlay of $G$, $f(Z_L)$ is an $s$-clique cover of $G$.
The algorithm returns the smallest set $f(Z_L)$ over all $L\in\LL$.

Let $X_0\subseteq V(G)$ be a minimum $s$-clique cover of $G$.
For any overlay $L=(H,f,\ell)\in\LL$, let $Z'_L=f^{-1}(X_0)$, and observe that $Z'_L$ intersects
the neighborhoods of all vertices of $S_L$.  Hence, $|Z'_L|\ge |Z_L|$.
Furthermore, $|Z'_L|=\sum_{v\in X_0}\theta_L(v)$.
Since $\LL$ has thickness at most $1+1/k$, there exists an overlay $L\in\LL$ such that $|Z'_L|\le (1+1/k)|X_0|\le (1+\varepsilon)|X_0|$.
Consequently, $|f(Z_L)|\le |Z_L|\le |Z'_L|\le (1+\varepsilon)|X_0|$,
and thus the algorithm returns a set whose size is within $1+\varepsilon$ factor of the maximum one.
\end{proof}

Regarding the time complexity of the algorithms, for all the classes of graphs that we consider the function $q$ is subpolynomial,
and thus the time complexity turns out to be at most $O(p(n)+n^{1+\delta})$ for any fixed $\delta>0$, with only the multiplicative
constant depending on the desired precision of approximation.

\section{$\Ss$-thinness and $\bigstar$-thinness}

As we have seen at the beginning of Section~\ref{sec-thinover},
subgraph-based overlays are an interesting special case arising
in previously studied settings, and in many ways $\Ss$-thinness is easier to work with and might give stronger algorithmic consequences.
Nevertheless, as we will show momentarily, only classes with locally bounded treewidth can be $\Ss$-thin
(recall that a class $\GG$ has \emph{locally bounded treewidth} if for each positive integer $r$,
the treewidth of $r$-neighborhoods of vertices in graphs from $\GG$ is bounded by a constant depending only on $r$
and the class).  This restricts the applicability of the concept, and in particular shows that not all proper
minor-closed classes are $\Ss$-thin.

\begin{lemma}\label{lemma-localtw}
For any function $q$, if a class of graphs $\GG$ is $(\Ss,q)$-thin, then $\GG$ has locally bounded treewidth.
\end{lemma}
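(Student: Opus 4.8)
The plan is to argue by contradiction: if $\GG$ does not have locally bounded treewidth, then for some fixed $r$ there are graphs $G\in\GG$ containing an $r$-neighborhood of arbitrarily large treewidth, and I will show this is incompatible with the existence of small $\Ss_r$-overlays of thickness close to $1$. The key structural observation about subgraph-based overlays is that if $L=(H,f,\ell)\in\Ss_r(G)$ and $v$ is a vertex with $\theta_L(v)=1$, then the unique vertex $x\in f^{-1}(v)$ has $\ell(x)=r$ (by the definition of $r$-neighborhood overlay, some preimage of $v$ has $\ell$-value $r$, and here there is only one preimage). Walk-preservation together with Lemma~\ref{lemma-walk} then implies that the ball $B_r^G(v)$ of radius $r$ around $v$ is, via $f$, isomorphic to a subgraph of the ball $B_r^H(x)$; and since $L$ is subgraph-based, $f$ restricted to the component of $H$ containing $x$ is injective with induced image, so in fact $f$ maps the component of $x$ in $H$ isomorphically onto an induced subgraph of $G$ containing $B_r^G(v)$. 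In particular, $\tw\big(G[B_r^G(v)]\big)\le\tw(H)\le t(r,k)$ whenever $\theta_L(v)=1$.

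So the remaining task is a counting argument showing that for a suitable choice of the thickness parameter $k$, \emph{some} vertex $v$ has $\theta_L(v)=1$ for \emph{some} overlay $L$ in the system. Fix $r$; apply $(\Ss,q)$-thinness with, say, $k=2$, to obtain for each $G\in\GG$ with $n$ vertices a system $\LL$ of $\Ss_r$-overlays of thickness at most $3/2$ and treewidth at most $t(r,2)$. By the probabilistic reading of thickness in Definition~\ref{def-system}, if $L$ is drawn uniformly from $\LL$ then $\mathbf{E}[\theta_L(v)]\le 3/2$ for every $v$, and since $\theta_L(v)\ge 1$ always and $\theta_L(v)\ge 2$ whenever $\theta_L(v)\ne 1$, Markov's inequality gives that the probability $\theta_L(v)\ge 2$ is at most $1/2$; hence $\mathbf{E}\big[|\{v:\theta_L(v)\ge 2\}|\big]\le n/2$, so there is an $L\in\LL$ with $|\{v:\theta_L(v)=1\}|\ge n/2>0$. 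Pick such a vertex $v$; by the previous paragraph $\tw\big(G[B_r^G(v)]\big)\le t(r,2)$. But this holds for \emph{every} $v$ realized this way, and more to the point: running the same argument, for each vertex $u$ of $G$ we can ask whether it is ever an $\ell$-value-$r$, thickness-$1$ vertex — actually we only need one bound. Let me reorganize: I will instead show directly that \emph{every} vertex of every $G\in\GG$ has $r$-neighborhood of treewidth at most $t(r,2)$, by noting that any single vertex $u$ can be made to have thickness $1$ in some system — take $G$, and observe $\GG$ being $(\Ss,q)$-thin with $k$ arbitrarily large forces, for each fixed $u$, the existence of a system where the average thickness at $u$ is below $2$, hence $\theta_L(u)=1$ for some $L$; then $\tw(G[B_r^G(u)])\le t(r,k)$.

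Wait — the cleanest route is actually the first one made uniform over $k$: since $\GG$ is $(\Ss,q)$-thin, for \emph{every} $k$ there is a system of thickness $\le 1+1/k$; taking $k=|V(G)|$ (or any $k>\text{number of vertices}$), thickness $1+1/k<2$ averaged over $\LL$ forces, by the same Markov argument as above, that the set of thickness-$\ge 2$ vertices has expected size $\le n/k<1$, so for a random $L$ it is empty with positive probability, i.e.\ there is an $L\in\LL$ with $\theta_L(v)=1$ for \emph{all} $v\in V(G)$. For that single overlay $L=(H,f,\ell)$, every vertex $v$ has $\tw(G[B_r^G(v)])\le\tw(H)\le t(r,k)=t(r,|V(G)|)$ — but this still depends on $n$, which is not good enough. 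The fix is that $t$ in Definition~\ref{def-admit} is a fixed function of $(r,k)$, so I should \emph{not} let $k$ grow with $n$; instead I use $k=2$ as above, get one thickness-$1$ vertex $v$ per graph, conclude $\tw(G[B_r^G(v)])\le t(r,2)$ — and then handle an arbitrary target vertex $u$ by applying the hypothesis to a modified graph. The main obstacle, and the step I would spend the most care on, is exactly this: turning "\emph{some} vertex has small-treewidth $r$-ball" into "\emph{every} vertex does". I expect to do it by a symmetry/gadget trick — e.g.\ take many disjoint copies of $G$, or attach a long path to $u$ so that $u$ is forced to lie deep in some thickness-$1$ region — or alternatively by observing that for each fixed $u$ one can rerun the argument with $k$ large enough depending only on the (fixed) graph $G$ under consideration rather than on the whole class, which is legitimate since local treewidth boundedness is a statement about each individual graph. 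Either way, once every $r$-ball of every $G\in\GG$ has treewidth bounded by a function of $r$ alone, $\GG$ has locally bounded treewidth by definition, completing the proof.
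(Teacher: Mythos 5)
There is a genuine gap, and it stems from overlooking one clause of Definition~\ref{def-over}. You assume that to find a preimage $x\in f^{-1}(v)$ with $\ell(x)=r$ you need $\theta_L(v)=1$, and this forces you into a Markov-type counting argument that only produces \emph{some} vertex with a small-treewidth $r$-ball; you then explicitly leave open the step of upgrading this to \emph{every} vertex. But the definition of an $r$-neighborhood overlay already states that for \emph{all} $v\in V(G)$ there exists $x\in f^{-1}(v)$ with $\ell(x)=r$ --- you even quote this fact parenthetically when justifying the thickness-$1$ case, without noticing that it makes the thickness hypothesis irrelevant. The paper's proof is exactly your ``key structural observation'' applied to an arbitrary vertex: take any $L=(H,f,\ell)$ in the system (thickness at most $2$, treewidth at most $t(r,1)$ suffice), take the guaranteed $x\in f^{-1}(v)$ with $\ell(x)=r$, and use the subgraph-based property plus Lemma~\ref{lemma-walk} to see that the component of $H$ containing $x$ contains a subgraph isomorphic to the $r$-ball around $v$, so that ball has treewidth at most $t(r,1)$. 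No counting, no gadgets, and the bound is uniform over all vertices and all graphs in the class.

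Beyond the missing idea, one of your proposed repairs is actually invalid: rerunning the argument with $k$ chosen large depending on the particular graph $G$ yields the bound $t(r,k)$ with $k$ depending on $G$, whereas locally bounded treewidth requires a bound depending only on $r$ and the class --- you noticed this objection yourself for $k=|V(G)|$ but then reintroduced the same flaw in the final paragraph. The gadget suggestions (disjoint copies, attached paths) are also not workable as stated, since $\GG$ is not assumed closed under such modifications. So the proposal as written does not constitute a proof, although the correct argument is one short step away from your first paragraph.
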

\begin{proof}
Let $t$ be the function bounding the treewidth of overlays of $\GG$.
Consider any positive integer $r$.
Every graph $G\in\GG$ has a system $\LL$ of $\Ss_r$ overlays of thickness at most $2$
and treewidth at most $t(r,1)$.  Consider any overlay $L=(H,f,\ell)\in\LL$ and any vertex $v\in V(G)$.  Since $L$ is an $r$-neighborhood overlay of $G$,
there exists a vertex $x\in V(H)$ such that $f(x)=v$ and $\ell(x)=r$.  Since $L$ is subgraph-based,
Lemma~\ref{lemma-walk} implies that the component $H'$ of $H$ with $x\in V(H')$ has a subgraph isomorphic to the subgraph of $G$ induced by vertices
at distance at most $r$ from $v$.  Hence, this subgraph must have treewidth at most $\tw(H')\le t(r,1)$.
As the choice of $G$, $v$ and $r$ was arbitrary, we conclude that $\GG$ has locally bounded treewidth.
\end{proof}

Trivially, $\Ss$-thinness implies $\Aa$-thinness. Let us also remark that $\Ss$-thinness implies $\bigstar$-thinness.

\begin{lemma}\label{lemma-sgbas-star}
For any functions $p$ and $q$, if a class of graphs $\GG$ is $(\Ss,p,q)$-thin, then $\GG$ is also $(\bigstar,p,q)$-thin.
\end{lemma}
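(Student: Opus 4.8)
The plan is to take a $(\Ss,p,q)$-thin class $\GG$ and, for fixed $r$ and $k$, build a system of $\bigstar_r$-overlays of $G$ out of a system of $\Ss_r$-overlays of a suitable subgraph-based object. The natural candidate is to apply $\Ss$-thinness not to $G$ but to $G^\star$ — but $G^\star$ need not lie in $\GG$. So instead I would work directly with a system $\LL$ of $\Ss_r$-overlays of $G$ itself and lift each member $L=(H,f,\ell)\in\LL$ to a $\star$-overlay $L^\star=(H^\star,f^\star,\ell^\star)$ of $G$, i.e.\ a $(V(G^\star)\setminus V(G))$-simpliciality-preserving $r$-neighborhood overlay of $G^\star$. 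The key point making this work is that $L$ is subgraph-based: on each component $H'$ of $H$, $f$ is injective and $f(H')$ is an induced subgraph of $G$, so $f$ restricted to $H'$ is an isomorphism onto an induced subgraph, and hence $(H')^\star$ embeds as an induced subgraph of $G^\star$ in a way compatible with $f$. Thus I can let $H^\star$ be obtained from $H$ by adding, for every non-empty clique $K$ of $H$ that lies inside a single component, a new simplicial vertex $v_K$ joined to $K$, and set $f^\star(v_K)=v_{f(K)}$ (well-defined since $f(K)$ is a clique of $G$), with $\ell^\star(v_K)$ chosen appropriately (for instance $\ell^\star(v_K)=\min_{x\in K}\ell(x)$, or simply the value forcing the $r$-neighborhood condition where needed), and $\ell^\star=\ell$ on old vertices.

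The steps, in order, would be: (1) Fix $r,k$ and obtain from $(\Ss,p,q)$-thinness a system $\LL$ of $\Ss_r$-overlays of $G$ of thickness $\le 1+1/k$, treewidth $\le t(r,k)$, size $\le p_2(q(n))$, computable in time $O(p(n)+n\,\poly(q(n)))$. (2) For each $L\in\LL$ define $L^\star$ as above; verify $f^\star$ is a homomorphism $H^\star\to G^\star$ (edges from old vertices go to old edges since $f$ is a homomorphism; edges $v_K x$ with $x\in K$ map to $v_{f(K)}f(x)$, which is an edge of $G^\star$ by construction). (3) Verify $L^\star$ is walk-preserving and an $r$-neighborhood overlay of $G^\star$: for old $x$ with $\ell(x)\ge 1$, use that $L$ is walk-preserving together with the fact that on the component of $x$, $f$ is an isomorphism onto an induced subgraph, so every $G$-neighbor of $f(x)$ — old or of the form $v_{f(K)}$ with $f(x)\in f(K)$ — is hit by a neighbor of $x$ in $H^\star$ with the right $\ell$-value (the clique $K$ needed exists as $f^{-1}$ of $f(K)$ restricted to the component, using injectivity/inducedness); and for each $v\in V(G^\star)$, exhibit a preimage with $\ell$-value $r$ (for $v\in V(G)$ this is inherited from $L$; for $v=v_K$ one must ensure a preimage $v_{K'}$ with $\ell^\star=r$ exists, which forces the choice $\ell^\star(v_{K'})=r$ on at least one clique mapping to $K$ — here one picks the component where every vertex of $f^{-1}(K)$ has $\ell$-value $r$, which exists because $L$ is an $r$-neighborhood overlay). (4) Verify $S$-simpliciality-preservation: each $v_K\in f^{\star-1}(v_{K'})$ is by construction simplicial in $H^\star$. (5) Bound the treewidth of $H^\star$: adding simplicial vertices to $H$ increases treewidth by at most $1$ provided the new vertex's neighborhood is a clique already present in a bag; since every clique of a graph of treewidth $w$ sits in a bag, $\tw(H^\star)\le \tw(H)+1\le t(r,k)+1$, so the new bounding function is $t^\star(r,k)=t(r,k)+1$. (6) Bound the thickness: $\theta_{L^\star}(v)=\theta_L(v)$ for $v\in V(G)$, so we only need $\theta_{\LL^\star}(v_K)\le 1+1/k$ for each clique $K$ of $G$; this follows because $\theta_{L^\star}(v_K)$ counts cliques $K'$ of $H$ (inside one component) with $f(K')=K$, and by inducedness/injectivity such $K'$ is exactly $f^{-1}(K)\cap V(H')$ for a component $H'$, of which there is at most one per vertex of $K$'s fiber structure — more precisely $\theta_{L^\star}(v_K)\le \min_{u\in K}\theta_L(u)$, whose $\LL$-average is $\le 1+1/k$. (7) Size and running time: $|\LL^\star|=|\LL|\le p_2(q(n))$ and each $H^\star$ has $O(|V(H)|)$ vertices by Lemma~\ref{lemma-nume} applied to the bounded-degeneracy of graphs in $\GG$ (cliques of $H$ number $O(|V(H)|)$), so $L^\star$ is computable in linear time from $L$; hence the whole system is produced within the same asymptotic bound $O(p(n)+n\,\poly(q(n)))$. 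Together these give $(\bigstar,p,q)$-thinness with $t^\star(r,k)=t(r,k)+1$.

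The main obstacle will be step (6), controlling the thickness at the new vertices $v_K$, because a priori many cliques of $H$ (across the many components) could map to the same clique $K$ of $G$, and each contributes a preimage of $v_K$ in $H^\star$; I need to argue that this cannot happen more often than the thickness of $H$ at the vertices of $K$ allows. This is exactly where subgraph-basedness is essential: within a single component $H'$, $f\restriction H'$ is injective, so there is at most one clique of $H'$ mapping onto $K$, and it is nonempty only if all of $K$'s fibers meet $V(H')$; counting over components then gives $\theta_{L^\star}(v_K)\le\min_{u\in K}\theta_L(u)$. A secondary delicate point is the consistent assignment of $\ell^\star$-values to the clique-vertices so that the $r$-neighborhood and walk-preserving conditions both hold — one wants $\ell^\star(v_K)$ large enough that preserved walks into $v_K$ reach it with the correct residual length, but one also needs, for the $r$-neighborhood requirement at $v_{K'}\in V(G^\star)$, at least one preimage with value exactly $r$; choosing $\ell^\star(v_K)=\min_{x\in K}\ell(x)$ and then observing that the ``full'' component (all of $f^{-1}$ of the vertices of $K$ having $\ell$-value $r$, which exists since $L$ is an $r$-neighborhood overlay) yields a clique vertex with value $r$ resolves both.
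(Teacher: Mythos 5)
Your construction is essentially the paper's: lift each subgraph-based overlay $L=(H,f,\ell)$ to a $\star$-overlay by adding, for every clique $K$ of $G$ and every component $H'$ of $H$ with $K\subseteq f(V(H'))$, a new simplicial vertex mapped to $v_K$ and joined to $V(H')\cap f^{-1}(K)$; your treewidth bound ($+1$), your thickness bound at $v_K$ via $\min_{u\in K}\theta_L(u)$, and your size/time accounting all match the paper.

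There is, however, one genuine gap, precisely at the point you flag as delicate: the assignment of $\ell^\star$ to the new clique-vertices. You set $\ell^\star(v_K)=\min_{x\in K}\ell(x)$ and then, to get the $r$-neighborhood condition at $v_K$, you claim there exists a component in which \emph{every} vertex of $f^{-1}(K)$ has $\ell$-value exactly $r$. That does not follow from the definitions. The $r$-neighborhood condition gives each $z\in K$ \emph{some} preimage with $\ell=r$, but possibly in different components; and within the component of a preimage $x$ of $z$ with $\ell(x)=r$, walk-preservation only guarantees that the preimages of the other vertices of $K$ have $\ell\ge r-1$, not $\ell=r$. So with the min convention every copy of $v_K$ could end up with $\ell^\star=r-1$ and the overlay would fail to be an $r$-neighborhood overlay of $G^\star$. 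The paper's fix is to take the \emph{maximum} instead: $\ell_1(x)=\max\{\ell(y):y\in V(H')\cap f^{-1}(K)\}$. Then the component containing any full-$\ell$ preimage of some $z\in K$ immediately yields a clique-vertex with value $r$, and walk-preservation is retained because (by subgraph-basedness and walk-preservation of $L$) the set $\ell(f^{-1}(K)\cap V(H'))$ consists of at most two consecutive integers, so $\max-1\le\min$ and every clique member still satisfies $\ell(y)\ge\ell_1(x)-1$. This two-consecutive-integers observation is also the fact you implicitly need (and do not state) to verify walk-preservation from an old vertex into $v_K$. The repair is local and the rest of your argument stands.
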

\begin{proof}
Let the function $t$ bound the treewidth of overlays of $\GG$.  Consider any positive integers $k$ and $r$,
and a graph $G\in \GG$ with $n$ vertices.
Let $\LL$ be the system of $\Ss_r$-overlays of $G$ of thickness at most $1+1/k$, found in time $O(p(n)+n\,\poly(q(n)))$ according to Definition~\ref{def-admit}.
Consider any overlay $L=(H,f,\ell)\in \LL$.  Let $L_1=(H_1,f_1,\ell_1)$ be obtained from $L$ as follows.
The graph $H_1$ is a supergraph of $H$ with functions $f_1$ and $\ell_1$ matching $f$ and $\ell$ on $V(H)$.
For each non-empty clique $K$ of $G$ and each component $H'$ of $H$ such that $K\subseteq f(V(H'))$, we add a new vertex $x$ to $H_1$
with neighborhood $V(H')\cap f^{-1}(K)$.  We set $f_1(x)=v_K$, where $v_K$ is the vertex of $G^\star$ representing the clique $K$.  Also, we set
$\ell_1(x)=\max\{\ell(y):y\in V(H')\cap f^{-1}(K)\}$.

Let us verify that $L_1$ is an $r$-neighborhood $\star$-overlay of $G$.  By construction, it is a $(V(G^\star)\setminus V(G))$-simpliciality-preserving overlay of $G^\star$.
Let $x$ be any vertex of $H_1$ with $\ell_1(x)\ge 1$ and let $w$ be a neighbor of $f_1(x)$ in $G^\star$.  Note that at most one of $f_1(x)$ and $w$ belongs to $V(G^\star)\setminus V(G)$.
If $f_1(x),w\in V(G)$, then $x$ has a neighbor $y$
such that $f_1(y)=f(y)=w$ and $\ell_1(y)=\ell(y)\ge\ell(x)-1=\ell_1(x)-1$ since $L$ is a walk-preserving overlay of $G$.
If $w\in V(G^\star)\setminus V(G)$, then let $H'$ be the component of $H$ containing $x$ and let $K$ be the clique in $G$ such that $w=v_K$.
Note that $f_1(x)=f(x)\in K$, and since $\ell(x)=\ell_1(x)\ge 1$ and $L$ is a subgraph-based overlay, we conclude that $K\subseteq f(V(H'))$;
hence, a neighbor $y$ of $x$ such that $f_1(y)=w$ and $\ell_1(y)\ge\ell_1(x)$ was added to $H_1$ during its construction.
If $f_1(x)\in V(G^\star)\setminus V(G)$, then let $K$ be the clique in $G$ such that $f_1(x)=v_K$; we have $w\in K$,
and by the construction of $H_1$, $x$ has a unique neighbor $y$ in $H_1$ such that $f_1(y)=w$.  Let $H'$ be the component of $H$ containing $y$.
Since $L$ is a subgraph-based walk-preserving overlay, note that $\ell(f^{-1}(K)\cap V(H'))$ consists of at most two consecutive integers,
and thus $\ell_1(y)\ge \ell_1(x)-1$.  Consequently, the overlay $L_1$ is walk-preserving.
Consider any non-empty clique $K$ of $G$ and a vertex $z\in K$.  Since $L$ is an $r$-neighborhood overlay,
there exists a vertex $x\in V(H)$ with $f(x)=z$ and $\ell(x)=r$.  Let $H'$ be the component of $H$ containing $x$.
Since $L$ is subgraph-based and walk-preserving, we conclude that $K\subseteq f(V(H'))$, and thus during the construction of $H_1$,
we added a neighbor $w$ of $x$ such that $f_1(w)=v_K$ and $\ell_1(w)=r$.  Therefore, $L_1$ is an $r$-neighborhood overlay.

Consequently, $L_1\in \bigstar_r(G)$.  Note that $L_1$ has treewidth at most $\tw(L)+1$.  Furthermore, the system $\{L_1:L\in\LL\}$ has thickness at most $1+1/k$,
since the thickness at each vertex $v_K$ is bounded by the thicknesses of the vertices of $K$.
Since the described transformation can be performed in time $O(n\,\poly(q(n)))$, we conclude that $\GG$ is $(\bigstar,p,q)$-thin.
\end{proof}

\section{Thinness of graph classes}\label{sec-classes}

Finally, we show that many important graph classes are thin.  Let us start by establishing
some basic tools.

\subsection{Transformations of thin classes}
When considering systems of overlays for several graphs, it is sometimes convenient to have all of them of the same
size, which is possible by the following lemma.

\begin{lemma}\label{lemma-samesize}
Consider any $\PP\in\{\Aa,\Ss,\bigstar\}$.
Let $\GG$ be a $(\PP,p,q)$-thin class of graphs for some functions $p$ and $q$, with $t$ bounding the treewidth of its overlays.
Let $G_1$, \ldots, $G_m$ be graphs from $\GG$,
each with at most $n$ vertices, and let $N=\sum_{i=1}^m |V(G_i)|$.
For any positive integers $r$ and $k$, we can find systems $\LL_1$, \ldots, $\LL_m$ of $\PP_r$-overlays of graphs $G_1$, \ldots, $G_m$ of
thickness at most $1+1/k$ and treewidth at most $t(r,3k)$ such that $|\LL_1|=\ldots=|\LL_m|\le \poly(q(n))$
in time $O(p(N)+N\,\poly(q(n)))$.
\end{lemma}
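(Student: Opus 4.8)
The plan is to obtain each system $\LL_i$ individually from the definition of $(\PP,p,q)$-thinness, and then pad all of them up to a common size by taking suitable multiples of the multisets. The only subtlety is that the definition of thinness gives, for each $i$, a \emph{different} polynomial $p_2^{(i)}$ bounding $|\LL_i|$; we need a single polynomial bound, and then we need the common size to be a common multiple of the individual sizes so that the thickness is preserved. This is where the factor $3k$ in the treewidth bound comes from.

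First I would apply Definition~\ref{def-admit} with parameters $r$ and $3k$ in place of $k$: for each $i\in\{1,\ldots,m\}$ there is a non-decreasing positive polynomial $p_2$ (we may take the same $p_2$ for all $i$, since it depends only on $\GG$, $r$, and $3k$, not on the particular graph) and a system $\LL_i'$ of $\PP_r$-overlays of $G_i$ of thickness at most $1+1/(3k)$, treewidth at most $t(r,3k)$, and size at most $b_i:=|\LL_i'|\le p_2(q(|V(G_i)|))\le p_2(q(n))=:B$, computable in time $O(p(|V(G_i)|)+|V(G_i)|\,\poly(q(|V(G_i)|)))$. Summing over $i$ and using $|V(G_i)|\le n$ for the polynomial factors, the total running time is $O(p(N)+N\,\poly(q(n)))$ as claimed (here we use that $p$ is non-decreasing, so $\sum_i p(|V(G_i)|)\le \sum_i p(N)$ is too weak, but $\sum_i p(|V(G_i)|)=O(p(N))$ follows since $p$ is a polynomial and $\sum_i|V(G_i)|=N$; alternatively just bound each term by $p(N)$ and absorb the factor $m\le N$ into the second summand — I would phrase this carefully).

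Next I would equalize the sizes. Let $M$ be a common multiple of $b_1,\ldots,b_m$ with $M\le B\cdot\operatorname{lcm\ stuff}$ — more simply, set $M=\operatorname{lcm}(b_1,\ldots,b_m)$, or even more crudely $M=b_1b_2\cdots b_m$, which is at most $B^m\le (p_2(q(n)))^m=\poly(q(n))$ (the degree depending on $m$, hence ultimately on $r,k$ and $m$; since the lemma only asks for $|\LL_i|\le\poly(q(n))$ this is fine — but I should double-check whether the intended reading forbids dependence on $m$, in which case $\operatorname{lcm}$ is still $\le B!\le B^B$, still $\poly(q(n))$ for fixed parameters). Then let $\LL_i$ be the multiset consisting of $M/b_i$ copies of each element of $\LL_i'$, so $|\LL_i|=M$ for all $i$. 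Taking copies does not change the treewidth of any member, so $\tw(\LL_i)\le t(r,3k)$. For the thickness: for each $v$, $\theta_{\LL_i}(v)=\frac{1}{M}\sum_{L\in\LL_i}\theta_L(v)=\frac{1}{M}\cdot\frac{M}{b_i}\sum_{L\in\LL_i'}\theta_L(v)=\theta_{\LL_i'}(v)\le 1+1/(3k)\le 1+1/k$, so the thickness bound is preserved (in fact we only used $1+1/(3k)\le 1+1/k$; the slack $3k$ is really there to give room, and a cleaner statement would just invoke thinness with parameter $k$ directly and observe that duplication preserves thickness exactly — I suspect the $3k$ is a leftover from a context where the slack is genuinely needed, e.g.\ a later lemma that further perturbs these systems, so I would keep it but note that any $\ge k$ works).

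The main obstacle — really the only non-routine point — is the bookkeeping on sizes and running time: making sure that a single polynomial $p_2$ works for the whole family (which it does, since in Definition~\ref{def-admit} $p_2$ is quantified after $\GG,r,k$ but before the individual graph $G$), and that the common size $M$ is still bounded by $\poly(q(n))$ rather than $\poly(q(N))$. The latter holds because each $b_i$ is bounded in terms of $q(|V(G_i)|)\le q(n)$, not $q(N)$, using monotonicity of $q$ and the hypothesis $|V(G_i)|\le n$; this is precisely why the lemma separates the parameters $n$ and $N$. Everything else — duplication preserves treewidth, duplication multiplies every $\theta_L(v)$-sum and the normalizing factor by the same constant hence preserves $\theta_{\LL_i}(\cdot)$ exactly, and the arithmetic operations fit in the stated time bound — is immediate.
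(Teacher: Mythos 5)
Your overall strategy (obtain each $\LL'_i$ separately from the definition of thinness with parameter $3k$, then replicate elements to equalize the sizes) is the same as the paper's, and your remarks about a single polynomial $p_2$ serving all $G_i$ and about bounding the sizes by $q(n)$ rather than $q(N)$ are correct. However, the step where you equalize the sizes has a genuine gap, and it is exactly the point at which you second-guess yourself. Taking $M=b_1b_2\cdots b_m\le B^m$ gives a bound whose degree depends on $m$, and $m$ is part of the input (it can be as large as $N$), not a function of $r$ and $k$ alone; under the paper's convention the degree of $\poly(q(n))$ may depend only on $r$ and $k$. Your fallback $M=\mathrm{lcm}(b_1,\ldots,b_m)\le B!\le B^B$ is worse, not better: $B=p_2(q(n))$ grows with $n$, so $B^B$ is super-polynomial in $q(n)$ (and $\mathrm{lcm}(1,\ldots,B)$ is itself of order $e^{(1+o(1))B}$). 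A super-polynomial system size also wrecks the running time, since the replicated systems must be written down and the downstream algorithms enumerate all their members.

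The paper resolves this by abandoning exact divisibility: it fixes the common size $3ka$ with $a=\max_i b_i\le\poly(q(n))$ and replicates each element of $\LL'_i$ either $c_i$ or $c_i+1$ times. Since $3ka\ge b_ic_i$, this yields
$\theta_{\LL_i}(v)\le\frac{b_i(c_i+1)}{3ka}\,\theta_{\LL'_i}(v)\le\bigl(1+\tfrac{1}{3k}\bigr)\theta_{\LL'_i}(v)\le\bigl(1+\tfrac{1}{3k}\bigr)^2<1+\tfrac1k$.
So your suspicion that the $3k$ is ``a leftover'' and that ``any $\ge k$ works'' is backwards: the slack is consumed inside this very lemma, paying for the uneven replication that keeps the common size polynomial in $q(n)$. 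To repair your argument, replace the common-multiple step by this uneven replication; exact thickness preservation via a common multiple is not available here.
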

\begin{proof}
For $1\le i\le m$, since the class $\GG$ is $(\PP,p,q)$-thin, we can find a system $\LL'_i$ of $\PP_r$-overlays of $G_i$ of thickness at most $1+\frac{1}{3k}$,
treewidth at most $t(r,3k)$, and order at most $\poly(q(n))$ in time $O(p(|V(G_i)|)+|V(G_i)|\poly(q(n)))$.
Let $a=\max\{|\LL_i|:1\le i\le m\}$.
For $i\in\{1,\ldots,m\}$, let $\LL_i$ be a multiset obtained from $\LL'_i$ by replicating its elements
so that $|\LL_i|=3ka\le \poly(q(n))$ and the number of times an element of $\LL'_i$ is replicated is either $c_i$ or $c_i+1$
for some positive integer $c_i$.  The total time needed to obtain these systems clearly is $O(p(N)+N\,\poly(q(n)))$,
hence we just need to show that they have thickness at most $1+1/k$.

Consider any $i\in\{1,\ldots, m\}$, and let $\LL'_i=\{L_1,\ldots,L_{b_i}\}$, where $b_i\le a$.  Note that $3ka=|\LL_i|\ge b_ic_i$.  If $v$ is any vertex of the graph overlaid by $\LL_i$, then
\begin{align*}
\theta_{\LL_i}(v)&\le\frac{c_i+1}{3ka}\sum_{j=1}^{b_i} \theta_{L_j}(v)\\
&=\frac{b_i(c_i+1)}{3ka}\theta_{\LL'_i}(v)\\
&\le \Bigl(1+\frac{b_i}{3ka}\Bigr)\theta_{\LL'_i}(v)\le \Bigl(1+\frac{1}{3k}\Bigr)\theta_{\LL'_i}(v)\\
&\le \Bigl(1+\frac{1}{3k}\Bigr)^2<1+\frac{1}{k},
\end{align*}
as required.
\end{proof}

For $1\le i\le m$, let $L_i=(H_i,f_i,\ell_i)$ be an overlay of a graph $G$.
We define $L_1\circ\ldots\circ L_m$ as the overlay $(H,f,\ell)$ of $G$, where $H$ is the disjoint union of
$H_1$, \ldots, $H_m$, and for each $i\in\{1,\ldots,m\}$ and each vertex $v\in V(G_i)$ we have
$f(v)=f_i(v)$ and $\ell(v)=\ell_i(v)$.

Let $\LL_1$, \ldots, $\LL_m$ be systems of overlays of a graph $G$ such that $|\LL_1|=\ldots=|\LL_m|=a$.
For $i\in\{1,\ldots,m\}$, let $\LL_i=\{L_i^1,\ldots,L_i^a\}$, with the elements listed in an arbitrary order. For $j\in\{1,\ldots,a\}$, let
$L^j=L_1^j\circ\ldots\circ L_m^j$, and we define $\LL_1\circ\ldots\circ\LL_m$ to denote the system of overlays $\{L^1,\ldots, L^a\}$.
We also use $\bigcirc_{i=1}^m\LL_i$ to denote $\LL_1\circ\ldots\circ\LL_m$ and other analogous $\sum$-like notation for the $\circ$ operation.

\begin{lemma}\label{lemma-comp}
Consider any $\PP\in\{\Aa,\Ss,\bigstar\}$.
Let $\GG$ be a $(\PP,p,q)$-thin class of graphs for some functions $p$ and $q$,
and let $\GG'$ be a class of graphs.  If for every $G\in\GG'$, each component of $G$ belongs to $\GG$,
then $\GG'$ is $(\PP,p,q)$-thin.
\end{lemma}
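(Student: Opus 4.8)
The plan is to build a system of $\PP_r$-overlays of each $G\in\GG'$ by gluing together systems of overlays of the components of $G$ via the $\circ$ operation defined above. Let $t$ bound the treewidth of the overlays of $\GG$, fix positive integers $r$ and $k$, and consider $G\in\GG'$ with $N$ vertices and components $G_1,\ldots,G_m$; by hypothesis each $G_i\in\GG$ and each has at most $N$ vertices. First I would invoke Lemma~\ref{lemma-samesize} to obtain systems $\LL_1,\ldots,\LL_m$ of $\PP_r$-overlays of $G_1,\ldots,G_m$ of equal size $a\le\poly(q(N))$, thickness at most $1+1/k$, and treewidth at most $t(r,3k)$, in time $O(p(N)+N\,\poly(q(N)))$. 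Then I set $\LL=\bigcirc_{i=1}^m\LL_i$ and verify that it is a system of $\PP_r$-overlays of $G$ with the parameters required by Definition~\ref{def-admit}.

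The verification that each member $L^j=L_1^j\circ\ldots\circ L_m^j$ of $\LL$ lies in $\PP_r(G)$ is a routine componentwise argument. In all three cases the underlying graph of $L^j$ is the disjoint union of the $H_i^j$, with $f$ and $\ell$ restricting to $f_i^j$ and $\ell_i^j$ on $H_i^j$; since every edge, walk, neighborhood, and clique of $G$ (and, in the $\bigstar$ case, of $G^\star$) lies inside a single component, the walk-preserving property, the bound $\ell(\cdot)\in\{0,\ldots,r\}$, and the existence for each vertex of a preimage with $\ell=r$ are all inherited from the $L_i^j$. For $\PP=\Ss$ one additionally uses that a component of $G$ is an \emph{induced} subgraph of $G$, so a subgraph-based overlay of $G_i$ gives a subgraph-based overlay of $G$. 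For $\PP=\bigstar$ one first observes that $G^\star$ is the disjoint union of $(G_1)^\star,\ldots,(G_m)^\star$ — because every clique of $G$ is contained in one component, so each added vertex $v_K$ attaches to a single $(G_i)^\star$ — and then the same reasoning applies, with $(V(G^\star)\setminus V(G))$-simpliciality-preservation inherited componentwise.

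For the parameters: the treewidth of $L^j$ is the maximum of the treewidths of $H_1^j,\ldots,H_m^j$, hence at most $t(r,3k)$, so $\tw(\LL)\le t(r,3k)$; every vertex $v$ of the graph overlaid by $\LL$ belongs to exactly one component, say the one associated with $\LL_i$, whence $\theta_\LL(v)=\theta_{\LL_i}(v)\le 1+1/k$; and $|\LL|=a\le\poly(q(N))$. Since the $\circ$ combination is a linear-time manipulation of the overlay system, the whole computation runs in time $O(p(N)+N\,\poly(q(N)))$, matching Definition~\ref{def-admit} with $t'(r,k)=t(r,3k)$ bounding the treewidth of the overlays of $\GG'$. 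The only place that needs a little care is the $\bigstar$ case: one must be sure that the clique-vertices of $G^\star$ each lie over a single component and that simpliciality of their preimages survives the disjoint union, which is immediate once the decomposition $G^\star=\bigcup_{i=1}^m (G_i)^\star$ is established; this is the main (and only mild) obstacle.
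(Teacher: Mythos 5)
Your proposal is correct and follows the paper's proof essentially verbatim: apply Lemma~\ref{lemma-samesize} to equalize the sizes of the component systems and then take $\LL=\bigcirc_{i=1}^m\LL_i$. The extra verification you supply (in particular that $G^\star$ decomposes as the disjoint union of the $(G_i)^\star$) is left implicit in the paper but is exactly the right check.
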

\begin{proof}
Let $t$ bound the treewidth of the overlays of $\GG$.
Let $r$ and $k$ be positive integers.  Let an $n$-vertex graph $G\in\GG'$ have components $G_1, \ldots, G_m\in \GG$,
and let $\LL_1,\ldots,\LL_m$ be the systems of $\PP_r$-overlays of $G_1$, \ldots, $G_m$
of thickness at most $1+1/k$ and
treewidth at most $t(r,3k)$, such that $|\LL_1|=\ldots=|\LL_m|\le \poly(q(n))$, found in
total time $O(p(n)+n\,\poly(q(n)))$ by Lemma~\ref{lemma-samesize}.
Then $\LL=\LL_1\circ\ldots\circ\LL_m$ is a system of $\PP_r$-overlays of $G$ of
thickness at most $1+1/k$ and treewidth at most $t(r,3k)$ with $|\LL|=|\LL_1|\le\poly(q(n))$.
\end{proof}

We will need the following fact.
\begin{observation}\label{obs-cutoff}
Let $x$, $y$, $m_x$, and $m_y$ be integers such that $|m_x-m_y|\le 1$.
If $y\ge x-1$, then $\min(y,m_y)\ge\min(x,m_x)-1$.
\end{observation}

We now consider graphs with layerings such that unions of any few layers belong to a thin class.
\begin{lemma}\label{lemma-layering}
Consider any $\PP\in\{\Aa,\Ss,\bigstar\}$.
For every positive integer $i$, let $\GG_i$ be a $(\PP,p,q)$-thin class of graphs.
Let $\GG$ be a class of graphs such that every $G\in\GG$ has a layering $(V_1,\ldots, V_d)$ such that
for $1\le i\le d$, the union of each at most $i$ consecutive layers induces a subgraph of $G$ belonging to $\GG_i$.
Suppose furthermore that such a layering can be found in time $O(p(n)+n\,\poly(q(n))$, where $n=|V(G)|$.
Then $\GG$ is a $(\PP,p,q)$-thin.
\end{lemma}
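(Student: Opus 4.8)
The plan is to run Baker's windowing argument one level up: instead of windows of bounded treewidth we use windows that are induced subgraphs lying in the classes $\GG_i$, and apply the thinness of those classes inside each window. Fix positive integers $r$ and $k$, put $S\colonequals 6rk$ and $k'\colonequals 3k$, and for each $i$ let $t_i$ bound the treewidth of the overlays of $\GG_i$. Given $G\in\GG$ with $n$ vertices, take the layering $(V_1,\dots,V_d)$ provided by the hypothesis (found in time $O(p(n)+n\,\poly(q(n)))$) and set $V_i\colonequals\emptyset$ for $i\le 0$ or $i>d$. For $\sigma\in\{0,\dots,S-1\}$ and $j\in\mathbf{Z}$, let the \emph{window} $G^\sigma_j$ be the subgraph of $G$ induced by the layers $V_{\sigma+(j-1)S-r+1},\dots,V_{\sigma+jS+r}$, and call its middle $S$ layers its \emph{core}. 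Each window is an induced subgraph of $G$ on at most $S+2r$ consecutive layers, hence $G^\sigma_j\in\GG_w$ for $w\colonequals\min(S+2r,d)$; for each fixed $\sigma$ the cores partition $V(G)$; and each vertex of $G$ lies in exactly $S+2r$ windows as $(\sigma,j)$ ranges over all pairs, since $(\sigma,j)\mapsto\sigma+(j-1)S-r+1$ (the first layer of $G^\sigma_j$) is a bijection onto $\mathbf{Z}$ and a vertex in layer $\ell$ lies in the window with first layer $l_0$ iff $l_0\le\ell\le l_0+S+2r-1$. Applying Lemma~\ref{lemma-samesize} to the family of all (nonempty) windows --- there are $O(n)$ of them, all in $\GG_w$, of total order $(S+2r)n=O(n)$ --- with parameters $r$ and $k'$ yields systems $\LL^\sigma_j$ of $\PP_r$-overlays of the windows, all of one common size $a=\poly(q(n))$, of thickness at most $1+1/k'$ and treewidth at most $t_w(r,3k')$, in time $O(p(n)+n\,\poly(q(n)))$.

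Next we stitch the windows together, shift by shift. Fix $\sigma$ and $m\in\{1,\dots,a\}$, and for each $j$ write the $m$-th member of $\LL^\sigma_j$ as $(H^\sigma_j,f,\ell)$ (suppressing the dependence of $f,\ell$ on $j$ and $m$). If $G^\sigma_j$ spans layers $V_{lo},\dots,V_{hi}$ and $v$ lies in layer $\ell_v$ of it, set $\delta_j(v)\colonequals\min(\ell_v-lo,\,hi-\ell_v)\ge 0$; when $\PP=\bigstar$ assign to a clique-vertex $v_K$ of $(G^\sigma_j)^\star$ the smallest layer occurring in $K$, so that (as cliques meet at most two consecutive layers) $\delta_j$ is $1$-Lipschitz along the edges of $(G^\sigma_j)^\star$. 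Let $L^\sigma_m$ be obtained from $\bigcirc_j(H^\sigma_j,f,\ell)$, regarded as an overlay of $G$ (resp.\ of $G^\star$) via the inclusions of the windows, by replacing the label of each $x\in V(H^\sigma_j)$ with $\ell'(x)\colonequals\min(\ell(x),\delta_j(f(x)))$. This truncation does two things. First, a vertex (or clique-vertex) in the \emph{core} of its window has $\delta_j\ge r$, so its labels are untouched, and in particular $L^\sigma_m$ still has a label-$r$ vertex over every vertex of $G$ (resp.\ of $G^\star$); thus it is an $r$-neighborhood overlay. Second, it keeps walk-preservation local: if $\ell'(x)\ge 1$ with $v\colonequals f(x)$ in window $G^\sigma_j$, then $\delta_j(v)\ge 1$, so every neighbour $w$ of $v$ in $G$ (resp.\ in $G^\star$) still lies in that window, whence the walk-preservation of $(H^\sigma_j,f,\ell)$ gives a neighbour $y$ of $x$ with $f(y)=w$ and $\ell(y)\ge\ell(x)-1$, and since $|\delta_j(v)-\delta_j(w)|\le 1$ Observation~\ref{obs-cutoff} yields $\ell'(y)\ge\ell'(x)-1$. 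Hence $L^\sigma_m$ is walk-preserving, and it inherits componentwise from the $(H^\sigma_j,f,\ell)$ the property of being subgraph-based (for $\PP=\Ss$) or $(V(G^\star)\setminus V(G))$-simpliciality-preserving (for $\PP=\bigstar$), using that each $G^\sigma_j$ is an induced subgraph of $G$ and that $\bigcup_j(G^\sigma_j)^\star=G^\star$; so $L^\sigma_m\in\PP_r(G)$ with treewidth at most $t_w(r,3k')$.

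Finally, put $\LL\colonequals\bigcup_{\sigma=0}^{S-1}\{L^\sigma_m:1\le m\le a\}$, a system of $\PP_r$-overlays of $G$ of size $Sa=\poly(q(n))$ and treewidth at most $t_w(r,3k')$, assembled from the $\LL^\sigma_j$ in time $O(n\,\poly(q(n)))$. For a vertex $v$ of the graph overlaid by $\LL$, the label truncation does not alter the homomorphisms, so $\theta_\LL(v)=\frac1S\sum_{\sigma}\sum_j\theta_{\LL^\sigma_j}(v)$, where $\theta_{\LL^\sigma_j}(v)\le 1+1/k'$ if $v$ belongs to $G^\sigma_j$ (resp.\ $(G^\sigma_j)^\star$) and is $0$ otherwise; since $v$ lies in at most $S+2r$ windows we get $\theta_\LL(v)\le\frac{S+2r}{S}\bigl(1+\tfrac1{k'}\bigr)=\bigl(1+\tfrac1{3k}\bigr)^2<1+\tfrac1k$. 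Taking the treewidth‑bounding function of $\GG$ to be the maximum of $t_{w'}(r,3k')$ over the finitely many $w'\le S+2r$ (which depends only on $r$ and $k$), and noting the overall running time is $O(p(n)+n\,\poly(q(n)))$, this shows $\GG$ is $(\PP,p,q)$-thin.

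The main obstacle is not any of the bookkeeping above but the one new ingredient that makes disjoint unions of window overlays walk-preserving: replacing $\ell$ by the truncation $\min(\ell,\delta_j)$ and verifying that walk-preservation survives it, which is exactly what Observation~\ref{obs-cutoff} is for. A secondary, purely technical nuisance is the $\bigstar$ case, where one must fix the layers of the clique-vertices so that $\delta_j$ stays $1$-Lipschitz on $(G^\sigma_j)^\star$ and the label-$r$ clique-vertices survive the truncation, and keep track that $\bigcup_j(G^\sigma_j)^\star=G^\star$ so that the construction really overlays $G^\star$.
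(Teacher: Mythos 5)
Your proposal is correct and follows essentially the same route as the paper's proof: overlapping windows of $\Theta(rk)$ consecutive layers with $2r$-layer margins, Lemma~\ref{lemma-samesize} applied inside each window, truncation of the $\ell$-labels by the distance to the window boundary with Observation~\ref{obs-cutoff} guaranteeing walk-preservation, the $\circ$-composition of windows of a fixed shift, and averaging over all shifts to bound the thickness. The only differences are cosmetic (parameterizing windows by shift rather than by residue, a product rather than additive thickness estimate, and a slightly different but equivalent layer assignment for clique-vertices in the $\bigstar$ case).
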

\begin{proof}
For a positive integer $i$, let $t_i$ denote the function bounding the treewidth of overlays of $\GG_i$.
Let $G\in\GG$ be an $n$-vertex graph, let $r$ and $k$ be positive integers, and let $(V_1,\ldots,V_d)$ be
the layering of $G$ from the assumptions of the lemma. For any integer $i$ such that $i\le 0$ or $i>d$, let $V_i=\emptyset$.

We need to find a system of $\PP_r$-overlays of $G$
of thickness at most $1+\frac{1}{k}$, size at most $\poly(q(n))$, and bounded treewidth, in time $O(p(n)+n\,\poly(q(n))$.
Let $\Delta=6kr$.  If $d\le\Delta$, then $G\in\GG_\Delta$ and such a system can be found using the algorithm
for the class $G_\Delta$. Hence, assume that $d\ge\Delta$.

For an integer $j$, let $G_j$ be the subgraph of $G$ induced by
$V_{j-r}\cup \ldots\cup V_{j+\Delta+r-1}$.  Note that $G_j\in \GG_{\Delta+2r}$.  Using Lemma~\ref{lemma-samesize},
for each integer $j$ such that $V(G_j)\neq\emptyset$ we obtain a system $\LL_j$ of $\PP_r$-overlays of $G_j$ of thickness at most $1+\frac{1}{2k}$ and
treewidth at most $t_{\Delta+2r}(r,6k)$, such that for all such integers $j$ the size of the system $\LL_j$ is the same (and bounded by $\poly(q(n))$).
Since each vertex of $G$ belongs to at most $\Delta+2r$ of the graphs $G_j$, all these systems of overlays
can be constructed in time $O(p(n)+n\,\poly(q(n)))$.

Consider an integer $j$ with $V(G_j)\neq\emptyset$, and an overlay $L=(H,f,\ell)\in\PP_r(G_j)$.  For $x\in V(H)$ such that $f(x)\in V_i$ for some integer $i$, let $m_L(x)=r$
if $j\le i\le j+\Delta-1$, $m_L(x)=r-(j-i)$ if $i<j$ and $m_L(x)=r-(i-(j+\Delta-1))$ if $i>j+\Delta-1$;
note that $m(x)\ge 0$.  In the case that $\PP=\bigstar$, for $x\in V(H)$ such that $f(x)\in V(G^\star_j)\setminus V(G_j)$, let $m_L(x)$ be the maximum of $\{m_L(y):xy\in E(H)\}$;
note that the neighborhood $N(x)$ of $x$ in $H$ induces a clique, and thus $f(N(x))$ is contained in at most two consecutive layers
of the layering, and hence $m_L(N(x))$ consists of at most two consecutive integers.
Let the system $\LL'_j$ be obtained from $\LL_j$ as follows: for each $L=(H,f,\ell)\in \LL_j$, let us replace $\ell$ by the function $\ell'$
defined by $\ell'(x)=\min(\ell(x),m_L(x))$ for all $x\in V(H)$.
Let $L'=(H,f,\ell')$.  Since $\ell'(x)=0$ for all $x$ such that $f(x)\in V_{j-r}\cup V_{j+\Delta-1+r}$
and $(V_1,\ldots,V_d)$ is a layering, Observation~\ref{obs-cutoff} implies that $L'$ taken as an overlay of $G$ (or $G^\star$) and not just of $G_j$ (or $G^\star_j$)
is walk-preserving.

Let us consider an integer $b$ such that $0\le b\le \Delta-1$, and for any integer $j$ such that $j\bmod \Delta=b$ and $V(G_j)\neq\emptyset$,
let $L'_j=(H_j,f_j,\ell'_j)$ be any element of $\LL'_j$, obtained according to the previous paragraph from $L_j=(H_j,f_j,\ell_j)$.  Let
$$L^\circ=\bigcirc_{\substack{j\bmod \Delta=b,\\V(G_j)\neq\emptyset}} L'_j.$$
We claim that $L^\circ=(H^\circ,f^\circ,\ell^\circ)$ belongs to $\PP_r(G)$.
Indeed, consider a vertex $v$ of the graph overlaid by $L^\circ$; we need to show that there exists a vertex $x\in V(H^\circ)$ such
that $f^\circ(x)=v$ and $\ell^\circ(x)=r$.
If $v\in V(G)$, then let $v'=v$; if $\PP=\bigstar$ and $v\in V(G^\star)\setminus V(G)$, then let $v'$ be an arbitrary neighbor of $v$ in $V(G)$,
contained in the clique represented by $v$.  Let $V_i$ be the layer containing $v'$.  There exists unique $j$ such that $j\le i\le j+\Delta-1$
and $j\bmod\Delta=b$.  Note that $v\in V(G^\star_j)$.  Since $L_j$ belongs to $\PP_r(G_j)$, there exists $x\in V(H_j)$ such that $f_j(x)=v$ and $\ell_j(x)=r$.
By the choice of $\ell'_j$, observe that $\ell^\circ(x)=\ell'_j(x)=\ell_j(x)=r$.  Hence, $L^\circ$ is an $r$-neighborhood overlay of $G$ if $\PP\neq\bigstar$ and of $G^\star$ if $\PP=\bigstar$.
In the case that $\PP=\bigstar$, furthermore note that since each overlay $L'_j$ is $(V(G^\star_j)\setminus V(G_j))$-simpliciality-preserving,
$L^\circ$ is $(V(G^\star)\setminus V(G))$-simpliciality-preserving.

For $0\le b\le \Delta-1$, let $$\LL^b=\bigcirc_{\substack{j\bmod \Delta=b,\\V(G_j)\neq\emptyset}} \LL'_j.$$
By the previous paragraph, $\LL^b$ is a system of $\PP_r$-overlays for $G$, and observe that the treewidth of $\LL^b$ is at most $t_{\Delta+2r}(r,6k)$.
Let $D_b=\{l:l\equiv b-r,b-r+1,\ldots, b+r-1\pmod\Delta\}$.  Observe that for any index $l\in D_b$ and a vertex $v\in V_l$,
there exist two indices $j$ such that $j\bmod \Delta=b$ and $v\in V(G_j)$, while for $l\not\in D_b$ and $v\in V_l$, there exists
only one such index.  Hence, $\theta_{\LL^b}(v)\le 2+\frac{1}{k}\le 1+\frac{1}{2k}+3/2$ in the former case and $\theta_{\LL^b}(v)\le 1+\frac{1}{2k}$ in the
latter case.  Similarly, if $\PP=\bigstar$, $v$ is a vertex of $V(G^\star)\setminus V(G)$, and $v$ has a neighbor in $V_l$ for some $l\not\in D_b$, then
$v\in V(G_j^\star)$ for a unique index $j$ such that $j\bmod \Delta=b$, and otherwise $v\in V(G_j^\star)$ for two such indices $j$,
with the same resulting bounds on the thickness of $v$.

Let $\LL=\bigcup_{b=0}^{\Delta-1} \LL^b$.  Note that each integer $l$ belongs to $D_b$ for exactly $2r$ values of $b$,
and thus for each vertex $v$ of the graph overlaid by $\LL$, we have
$$\theta_{\LL}(v)=1+\frac{1}{2k}+\frac{3}{2}\cdot\frac{2r}{\Delta}=1+1/k.$$
Therefore, $\LL$ is a system of $\PP_r$-overlays of $G$ of thickness at most $1+\frac{1}{k}$, size at most $\poly(q(n))$, and treewidth at most $t_{\Delta+2r}(r,6k)$,
as required.
\end{proof}

The next lemma deals with apex vertices; as such, it cannot apply to subgraph-based overlays by Lemma~\ref{lemma-localtw}, since adding a universal vertex
makes locally bounded treewidth impossible (unless the class has bounded treewidth).

\begin{lemma}\label{lemma-apex}
Consider any $\PP\in\{\Aa,\bigstar\}$.
Let $\GG$ be a $(\PP,p,q)$-thin class of graphs for some functions $p$ and $q$ and let $a$ be a positive integer.
Let $\GG'$ be a class of graphs.  Suppose that for each $G\in\GG'$, there exists a set $A\subseteq V(G)$ of size at most $a$
such that $G-A\in\GG$.  Suppose furthermore that such a set $A$ can be found in time $O(p(n)+n\,\poly(q(n)))$, where $n=|V(G)|$.
Then $\GG'$ is $(\PP,p,q)$-thin.
\end{lemma}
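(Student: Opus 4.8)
The plan is as follows. Given $G\in\GG'$ with $n$ vertices, I would first find the promised set $A$ with $|A|\le a$ and $G-A\in\GG$ within the allowed time, and then, for any prescribed positive integers $r$ and $k$, use $(\PP,p,q)$-thinness of $\GG$ to obtain a system $\LL_0$ of $\PP_r$-overlays of $G-A$ of thickness at most $1+1/k$, treewidth at most $t(r,k)$ and size $\poly(q(n))$ (here $t$ bounds the treewidth of the overlays of $\GG$; we use that $p$ and $q$ are non-decreasing and $|V(G-A)|\le n$). Each $L_0\in\LL_0$ is then transformed into a $\PP_r$-overlay of $G$ by re-attaching the $a$ apex vertices, keeping the thickness and the size of the system and raising the treewidth by at most $a$, so that $\GG'$ is $(\PP,p,q)$-thin with $t(r,k)+a$ bounding the treewidth of its overlays. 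The case $\PP=\Aa$ is the heart of the matter; the case $\PP=\bigstar$ needs an additional technical twist to stay simpliciality-preserving.

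For $\PP=\Aa$, I would turn $L_0=(H_0,f_0,\ell_0)$ into $L=(H,f,\ell)$ by letting $H$ consist of $H_0$ together with a new vertex $a_\alpha$ for each $\alpha\in A$, setting $f(a_\alpha)=\alpha$, $\ell(a_\alpha)=r$, joining $a_\alpha$ to $a_\beta$ exactly when $\alpha\beta\in E(G)$ and to $x\in V(H_0)$ exactly when $f_0(x)\in N_G(\alpha)$, while $f$ and $\ell$ restrict to $f_0$ and $\ell_0$ on $V(H_0)$. Then $f$ is clearly a homomorphism to $G$; every vertex of $G$ has an $\ell=r$ preimage (inherited from $L_0$ on $V(G-A)$, with $a_\alpha$ serving $\alpha\in A$); the thickness is unchanged on $V(G-A)$ and equals $1$ on $A$; and $\tw(H)\le\tw(H_0)+a$ since $H$ arises from $H_0$ by adding $a$ vertices. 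Walk-preservation is a short case analysis: a neighbour of $f(x)$ lying in $A$ is reached through the corresponding $a_\alpha$ with room to spare since $\ell(a_\alpha)=r\ge\ell(x)-1$, while a neighbour lying in $V(G-A)$ is reached via walk-preservation of $L_0$ when $x\in V(H_0)$, and via an $\ell_0=r$ preimage of that neighbour (which, being a vertex of $G-A$ adjacent to $\alpha$ in $G$, is joined to $a_\alpha$ in $H$) when $x=a_\alpha$.

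For $\PP=\bigstar$ the overlays of $\LL_0$ are simpliciality-preserving overlays of $(G-A)^\star$, which is an induced subgraph of $G^\star$, so besides the vertices of $A$ I must also provide preimages for the clique-vertices $v_K$ of those cliques $K$ of $G$ with $K_A\colonequals K\cap A\ne\emptyset$. I would add the $a_\alpha$'s as above and, for each such $K$, writing $K_0\colonequals K\setminus A$ (a clique of $G-A$, possibly empty), add one vertex $c_{K,y}$ for every preimage $y\in f_0^{-1}(v_{K_0})$ (or a single vertex $c_K$ when $K_0=\emptyset$), with $f(c_{K,y})=v_K$, $\ell(c_{K,y})=\ell_0(y)$ (respectively $r$), and neighbourhood $N_{H_0}(y)\cup\{a_\beta:\beta\in K_A\}$. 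Taking one copy per preimage $y$ is what makes everything fit: $N_{H_0}(y)$ is already a clique of $H_0$ (since $y$ maps to a clique-vertex of $(G-A)^\star$ and $L_0$ is simpliciality-preserving) and consists only of preimages of vertices of $K_0$, so $c_{K,y}$ is simplicial and no preimage of a clique-vertex of $(G-A)^\star$ gets a new neighbour; the thickness at $v_K$ equals the (bounded) thickness at $v_{K_0}$ in $\LL_0$, so no change to the parameter $k$ is needed; and whenever walk-preservation of $L_0$ supplies a neighbour $y$ of a vertex $x$ with $f_0(y)=v_{K_0}$, the vertex $c_{K,y}$ is adjacent to $x$, which is exactly what lets us lift a walk of $G^\star$ that steps onto $v_K$. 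Granting this, checking that each resulting $L$ is an $r$-neighborhood $\star$-overlay of $G$ (homomorphism, walk-preservation, simpliciality-preservation, existence of $\ell=r$ preimages) is again a routine case analysis, and $\tw(H)\le\tw(H_0)+a$ because the $c_{K,y}$'s are simplicial with clique neighbourhoods of size at most $\tw(H_0)+a$.

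For the bookkeeping, by Lemma~\ref{lemma-nume} the graph $G-A\in\GG$ has only $O(n)$ cliques, hence there are $O(2^a n)=O(n)$ added clique-vertices, and since $\sum_{L_0\in\LL_0}|V(H_0)|=O(n\,\poly(q(n)))$, the whole transformation runs in time $O(n\,\poly(q(n)))$; together with the time to find $A$ and $\LL_0$ the total is $O(p(n)+n\,\poly(q(n)))$. The step I expect to be the real obstacle is the $\bigstar$ case: the apex vertices must be re-attached so as to preserve, simultaneously, walk-preservation for walks passing through cliques that mix apex and non-apex vertices, the simpliciality-preserving property, and the thickness bound, and it is the tension between these three requirements that forces adding a clique-vertex copy for each preimage rather than a single copy per clique.
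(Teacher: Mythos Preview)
Your proposal is correct and follows essentially the same approach as the paper: re-attach the apex vertices $A$ with a single preimage each, and in the $\bigstar$ case, for every clique $K$ of $G$ meeting $A$, add one new clique-vertex preimage for each preimage of $v_{K\setminus A}$ in $H_0$ (and a single one when $K\subseteq A$), with the same neighbourhoods and $\ell$-values you describe. The paper states this construction tersely and asserts ``we observe that $L'\in\PP_r(G)$'' without spelling out the case analysis; your verification of walk-preservation, simpliciality-preservation, thickness, and the treewidth bound is more detailed but matches the intended argument exactly.
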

\begin{proof}
Let $t$ bound the treewidth of overlays of $\GG$.
Let $G$ be an $n$-vertex graph from $\GG'$, let $A\subseteq V(G)$ be a set of size at most $a$
such that $G-A\in\GG$, and let $r$ and $k$ be positive integers.
Let $\LL$ be a system of $\PP_r$-overlays of $G-A$ of thickness at most $1+1/k$, treewidth at most $t(r,k)$, and size
at most $\poly(q(n))$, found in time $O(n\,\poly(q(n)))$.  Let $A=\{v_1,\ldots,v_b\}$ for some $b\le a$.

Consider any overlay $L=(H,f,\ell)\in\LL$.
Let $H'$, $f'$, and $\ell'$ be obtained from $H$, $f$, and $\ell$ as follows: to $H$, we add vertices $x_1$, \ldots, $x_b$ with $\ell'(x_1)=\ldots=\ell'(x_b)=r$,
edges between vertices $x_i$ and $x_j$ for all $i,j\in\{1,\ldots, b\}$
such that $v_iv_j\in E(G)$, and edges $x_iy$ for all $i\in \{1,\ldots,b\}$ and $y\in V(H)$ such that $v_if(y)\in E(G)$, and we set $f'(x_i)=v_i$ for $1\le i\le b$.
Additionally, if $\PP=\bigstar$, we modify the resulting overlay as follows.  For each non-empty clique $K$ of $G[A]$, we add a vertex $x$ with $\ell'(x)=r$ adjacent to vertices of $(f')^{-1}(K)$ and let $f'(x_K)$ be the corresponding vertex of $G^\star$.
Also, for each clique $K$ of $G$ such that $K\cap A\neq\emptyset$ and $K\not\subseteq A$, letting $v$ be the vertex of $(G-A)^\star$ corresponding to $K\setminus A$ and $v'$ the vertex of $G^\star$ corresponding
to $K$, for each $x\in V(H)$ such that $f(x)=v$ we add a vertex $x'$ to $H'$ with $\ell'(x')=\ell(x)$ adjacent to the same vertices as $x$ in addition to the vertices of
$(f')^{-1}(K\cap A)$, and we set $f'(x')=v'$.  We define $L'=(H',f',\ell')$ and we observe that $L'\in\PP_r(G)$.

Let $\LL'=\{L':L\in\LL\}$.  Note that $\LL'$ is a system of $\PP_r$-overlays of $G$ of treewidth at most $t(r,k)+a$ and thickness $\theta(\LL')=\theta(\LL)$.
Hence, the claim of the lemma holds.
\end{proof}

Using the same argument, we can show the following.

\begin{corollary}\label{cor-root}
Consider any $\PP\in\{\Aa,\bigstar\}$.
Let $\GG$ be a $(\PP,p,q)$-thin class of graphs for some functions $p$ and $q$, with $t$ bounding the treewidth of its overlays.
Let $a$, $k$ and $r$ be positive integers, let $G$ be a graph belonging to $\GG$ with $n$ vertices,
and let $A$ be a set of its vertices of size at most $a$.  In time $O(p(n)+n\,\poly(q(n)))$, we can find a system $\LL$ of $\PP_r$-overlays of $G$
of thickness at most $1+1/k$, treewidth at most $t(r,k)+a$, and size at most $\poly(q(n))$, such that $\theta_\LL(v)=1$ for all $v\in A$.
\end{corollary}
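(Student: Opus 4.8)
The plan is to re-run the construction from the proof of Lemma~\ref{lemma-apex}, but read in the opposite direction: there we added a bounded number of apex vertices to an overlay of $G-A$; here we already have an overlay system of $G$ itself and we want to modify it so that the apex set $A$ has thickness exactly $1$, at the cost of adding $a$ to the treewidth. First I would apply the definition of $(\PP,p,q)$-thinness to the graph $G-A$ (which belongs to $\GG$ if $\GG$ is, say, subgraph-closed — but in fact we do not even need this: we can instead obtain a system of $\PP_r$-overlays of $G-A$ from a system for $G$ by Observation~\ref{obs-sg}, or simply apply the thinness algorithm to $G$ and restrict, since $G-A$ is a subgraph of $G$) to get a system $\LL_0$ of $\PP_r$-overlays of $G-A$ of thickness at most $1+1/k$, treewidth at most $t(r,k)$, and size $\poly(q(n))$, computed in time $O(p(n)+n\,\poly(q(n)))$.

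Next, for each overlay $L=(H,f,\ell)\in\LL_0$ I would perform exactly the vertex-addition described in the proof of Lemma~\ref{lemma-apex}: add vertices $x_1,\ldots,x_b$ (where $A=\{v_1,\ldots,v_b\}$, $b\le a$) with $\ell'(x_i)=r$, put edges $x_ix_j$ whenever $v_iv_j\in E(G)$, edges $x_iy$ whenever $v_if(y)\in E(G)$, and set $f'(x_i)=v_i$; in the $\PP=\bigstar$ case additionally add the clique-representative vertices for cliques meeting $A$ exactly as there. Call the resulting overlay $L'=(H',f',\ell')$ and let $\LL=\{L':L\in\LL_0\}$. The verification that each $L'\in\PP_r(G)$ — walk-preservation at the new vertices (they have $\ell'=r$ and, by construction, every neighbor of $f'(x_i)=v_i$ in $G$ is realized as a neighbor of $x_i$ in $H'$), and the $r$-neighborhood condition at vertices of $A$ (witnessed by the $x_i$ themselves) — is identical to the argument already given for Lemma~\ref{lemma-apex}, so I would simply cite that proof rather than repeating it. The treewidth bound is $t(r,k)+a$: take an optimal tree decomposition of $H$ and add all of $x_1,\ldots,x_b$ (and, in the $\bigstar$ case, the simplicial clique-vertices, each of which can be attached to a single bag containing its neighborhood) to every bag. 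The key new point is the thickness bound: for $v\in A=\{v_1,\ldots,v_b\}$ we have $f'^{-1}(v)=\{x_i\}$ in each $L'$ (no vertex of $H$ mapped to $v$, since $L$ was an overlay of $G-A$), so $\theta_{L'}(v)=1$ for every $L'$, hence $\theta_\LL(v)=1$; for $v\in V(G)\setminus A$ we have $\theta_{L'}(v)=\theta_L(v)$, so $\theta_\LL(v)=\theta_{\LL_0}(v)\le 1+1/k$; in the $\bigstar$ case the thickness at a clique-vertex $v_K$ is bounded by the thicknesses of the vertices of $K$, as in Definition~\ref{def-system} and the proof of Lemma~\ref{lemma-sgbas-star}. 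Finally $|\LL|=|\LL_0|\le\poly(q(n))$ and the whole transformation is a linear-time manipulation of a system of overlays, hence costs $O(n\,\poly(q(n)))$ on top of the $O(p(n)+n\,\poly(q(n)))$ spent computing $\LL_0$.

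I do not expect any genuine obstacle here: the corollary is essentially a bookkeeping variant of Lemma~\ref{lemma-apex}, and the only thing to be careful about is that the overlays we start from must genuinely be overlays of $G-A$ (so that $A$ is "fresh" and picks up thickness exactly $1$), rather than of $G$ with $A$ already overlaid. The mildest subtlety is in the $\PP=\bigstar$ case, where one must check that the cliques of $G^\star$ incident to $A$ are handled so that walk-preservation and simpliciality-preservation still hold and the thickness at the new clique-vertices stays at $1$ — but this is word-for-word the same case analysis as in Lemma~\ref{lemma-apex}, so I would dispatch it by reference.
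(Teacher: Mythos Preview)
Your proposal is correct and matches the paper's own proof essentially line for line: obtain a system of $\PP_r$-overlays of $G$ via thinness, restrict it to $G-A$ using Observation~\ref{obs-sg}, and then apply the construction from the proof of Lemma~\ref{lemma-apex} to re-insert the vertices of $A$ with thickness exactly $1$. The paper's write-up is terser (three sentences), but the content, including your observation that one must start from overlays of $G-A$ rather than of $G$, is identical.
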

\begin{proof}
Let $\LL'$ be a system of $\PP_r$-overlays of $G$ of thickness at most $1+1/k$,
treewidth at most $t(r,k)$, and size at most $\poly(q(n))$, found in time $O(p(n)+n\,\poly(q(n)))$.
By Observation~\ref{obs-sg}, we can transform $\LL'$ into a system $\LL''$ of $\PP_r$-overlays of $G-A$
of thickness at most $1+1/k$, treewidth at most $t(r,k)$, and size $|\LL'|$.  The construction from the proof of Lemma~\ref{lemma-apex}
applied to $\LL''$ gives the requested system $\LL$ of $\PP_r$-overlays of $G$.
\end{proof}

Let $G_0$, $G_1$, \ldots, $G_m$ be graphs such that for $1\le i\le m$, $G_0\cap G_i$ is a clique,
and for $1\le i<j\le m$, $G_i\cap G_j\subseteq G_0$.  We say that the graph $G_0\cup G_1\cup\ldots\cup G_m$
is a \emph{star sum of $G_0$, \ldots, $G_m$, with center $G_0$ and rays $G_1$, \ldots, $G_m$}.
The following lemma gives the main motivation for introducing the notion of $\bigstar$-thinness.

\begin{lemma}\label{lemma-one-level-sum}
For $i\in\{1,2\}$ and some functions $p$ and $q$, let $\GG_i$ be a $(\bigstar,p,q)$-thin class of graphs.
Let $\GG$ be the class of star sums with center from $\GG_1$ and rays from $\GG_2$.
Suppose furthermore that for each graph $G\in \GG$ with $n$ vertices, we can find the center and
rays of such a star sum in time $O(p(n)+n\,\poly(q(n)))$.
Then $\GG$ is $(\bigstar,p,q)$-thin.
\end{lemma}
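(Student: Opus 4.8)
The plan is to handle a star sum $G = G_0 \cup G_1 \cup \ldots \cup G_m$ with center $G_0 \in \GG_1$ and rays $G_1,\ldots,G_m \in \GG_2$ by first dealing with the center and then attaching the rays one group at a time. Fix positive integers $r$ and $k$. The key observation is that each ray $G_i$ meets the center in a clique $K_i$ of bounded size (by Lemma~\ref{lemma-nume}, all graphs in $\GG_1$ and $\GG_2$ are $c$-degenerate for an appropriate constant $c$, so $|K_i| \le c+1$), and different rays interact only through these cliques in the center. So the strategy is: build a system of $\bigstar_r$-overlays of the center $G_0$ in which the thickness at each vertex of $K_i$ (and more importantly at the $\star$-vertex $v_{K_i}$ of $G_0^\star$ representing $K_i$) equals exactly $1$, using Corollary~\ref{cor-root} with $A$ the union of the cliques $K_i$ — but that union is not of bounded size, so this is exactly where the first obstacle appears.

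To get around that, I would not root on all of $A = \bigcup_i K_i$ at once. Instead, group the rays by their attachment clique: there are at most $2^c n$ distinct cliques in $G_0$ by Lemma~\ref{lemma-nume}, but that is still too many. The right move is to use a layering-style / balanced-separator-free argument is not available here; rather, one should observe that a single overlay $L_0 = (H_0, f_0, \ell_0)$ of $G_0^\star$ already "contains" a copy of each clique $K_i$ near the full-thickness vertices, and that one can graft onto $L_0$, at each preimage of $v_{K_i}$ with $\ell$-value $r$, a whole overlay of the corresponding ray $G_i$. Concretely: for each attachment clique $K$, let $G_0[K]$-side be represented by the $\star$-vertex; apply the $(\bigstar,p,q)$-thinness of $\GG_2$ to the rays, but root each ray overlay (via Corollary~\ref{cor-root}) on its attachment clique $K_i$ so that the thickness there is $1$; then identify the rooted clique of the ray overlay with the corresponding clique inside a fixed component of the center overlay $H_0$. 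Because the ray overlays are rooted with thickness $1$ on $K_i$, and the center overlay has thickness $1$ on the $\star$-vertices of the $K_i$'s (again via Corollary~\ref{cor-root}, rooting on the bounded-size set of $\star$-vertices — but again there can be many of them), the glued overlay is walk-preserving across the clique-cuts by an argument parallel to the proof of Lemma~\ref{lemma-sgbas-star}, and the thickness at every vertex is a weighted average of thicknesses in the pieces, hence at most $1+1/k$ if we arrange (via Lemma~\ref{lemma-samesize}) that all systems have the same size and split the slack $1/k$ appropriately.

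The main obstacle, as flagged above, is that rooting simultaneously on all attachment cliques is not covered by Corollary~\ref{cor-root} (which needs a bounded-size set $A$). I expect the paper resolves this by noting that in a star sum the attachment cliques, viewed inside the center, behave like the simplicial $\star$-vertices already present in $G_0^\star$: one does \emph{not} need thickness exactly $1$ on all of them, only a thickness bound good enough that, after averaging against the ray contributions, the total stays below $1+1/k$. Since each vertex of $G_0$ lies in at most $2^c$ cliques and the $\star$-construction is already baked into $\bigstar$-overlays, the thickness at a $\star$-vertex $v_K$ in a $\bigstar_r$-overlay of $G_0$ is automatically controlled by the thicknesses of the vertices of $K$, so the system for $\GG_1$ applied to $G_0$ already gives thickness $1+1/(3k)$ everywhere including at the $v_K$'s. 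Then gluing on ray overlays of thickness $1+1/(3k)$ that have been rooted (with thickness $1$) only on the single clique $K_i$ — which \emph{is} of bounded size — keeps the overall thickness at $(1+1/(3k))^2 < 1+1/k$. The remaining steps are routine: verify the $r$-neighborhood condition (each vertex of $G^\star$ has a full-thickness preimage coming either from the center piece or the relevant ray piece, using that walks of length $\le r$ crossing a clique-cut are captured because the cut is a clique and the overlays are walk-preserving and $\star$-based), verify simpliciality-preservation for the new $\star$-vertices exactly as in Lemma~\ref{lemma-sgbas-star}, bound the treewidth by $\max(t_1(r,3k), t_2(r,3k)) + (c+1)$ (the clique being identified has bounded size), and check that all pieces and gluings are computed in time $O(p(n) + n\,\poly(q(n)))$ using Lemma~\ref{lemma-samesize} and the assumed decomposition algorithm.
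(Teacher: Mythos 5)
Your overall strategy is the one the paper uses: bound the attachment cliques $A_i$ via Lemma~\ref{lemma-nume}, root each ray overlay on its own bounded-size attachment clique via Corollary~\ref{cor-root} so the thickness there is exactly $1$, leave the center overlay unrooted with thickness at most $1+1/(3k)$ everywhere (including at the $\star$-vertices $v_{A_i}$), glue ray copies onto the center at preimages of $v_{A_i}$, and win by $(1+1/(3k))^2<1+1/k$. However, two steps would fail as written. First, the grafting locations: you attach a ray copy ``at each preimage of $v_{K_i}$ with $\ell$-value $r$'' (and later ``inside a fixed component of $H_0$''). This breaks walk-preservation. Take $y\in V(H_0)$ with $f_0(y)=u\in A_i$ and $\ell_0(y)\ge 1$, and a neighbor $v$ of $u$ lying in $V(G_i^\star)\setminus V(G_0^\star)$; the only access $y$ has to the ray is through a neighbor $x$ of $y$ with $f_0(x)=v_{A_i}$, and walk-preservation of $L_0$ only guarantees $\ell_0(x)\ge\ell_0(y)-1$, which can be well below $r$. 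If no copy is attached at that $x$, then $y$ has no neighbor mapping to $v$. The paper attaches a copy of the (rooted) ray overlay at \emph{every} $x\in f_0^{-1}(v_{A_i})$ and caps the $\ell$-values inside the copy attached at $x$ by $\ell_0(x)$, using Observation~\ref{obs-cutoff} to preserve the walk-preserving property across the cut.

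Second, how the systems are combined. Attaching one ray copy per preimage of $v_{A_i}$ makes the thickness of a ray vertex $v$ in a single glued overlay equal to the \emph{product} $\theta_{L_0}(v_{A_i})\cdot\theta_{L_i}(v)$. Your plan to equalize sizes via Lemma~\ref{lemma-samesize} and take a ``weighted average'' suggests pairing the $j$-th center overlay with the $j$-th ray overlays, but the paired average $\frac{1}{b}\sum_j\theta_{L_0^j}(v_{A_i})\,\theta_{L_i^j}(v)$ can exceed $\theta_{\LL_0}(v_{A_i})\,\theta_{\LL_i}(v)$ when the large values correlate (e.g., center thicknesses $(3,1,1,1,1,1)$ at $v_{A_i}$ and ray thicknesses $(3,1,1,1,1,1)$ at $v$, each averaging $4/3$, pair to $7/3>2$). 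The paper instead forms the full Cartesian product $\{L(L_0^j,L_1^l,\ldots,L_m^l)\}_{j,l}$ of size $bc$, so the double sum factors exactly into $\theta_{\LL_0}(v_{A_i})\,\theta_{\LL_i}(v)\le(1+1/(3k))^2$; Lemma~\ref{lemma-samesize} is used only to give all the \emph{ray} systems a common size $c$ so they can share the index $l$. With these two corrections your argument matches the paper's.
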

\begin{proof}
Let $t$ bound the treewidth of overlays of both $\GG_1$ and $\GG_2$.
Let $G\in\GG$ be a star sum of $G_0\in\GG_1$ and $G_1,\ldots,G_m\in \GG_2$, and let $n$ denote the number of vertices of $G$.
For $i=1,\ldots, m$, let $A_i=V(G_0\cap G_i)$.
Let $a=4t(1,1)+1$.  Since $A_i$ induces a clique in $G$, Lemma~\ref{lemma-nume} implies $|A_i|\le a$.

Let $r$ and $k$ be positive integers. Let $\LL_0$ be a system of $\bigstar_r$-overlays of $G_0$ of thickness at most $1+\frac{1}{3k}$, treewidth at most $t(r,3k)$,
and size at most $\poly(q(n))$.  By Lemma~\ref{lemma-samesize} and Corollary~\ref{cor-root}, there exist systems $\LL_1$, \ldots, $\LL_m$ of $\bigstar_r$-overlays
of $G_1$, \ldots, $G_m$ all of the same size (at most $poly(q(n))$), thickness at most $1+\frac{1}{3k}$, and treewidth at most $t(r,9k)+a$,
such that for $i=1,\ldots,m$, each vertex $v\in A_i$ satisfies $\theta_{\LL_i}(v)=1$.
All these systems can be found in total time $O(p(n)+n\,\poly(q(n)))$.

Consider any overlays $L_i=(H_i,f_i,\ell_i)\in\LL_i$ for $i=0,\ldots, m$.  For $i\in\{1,\ldots,m\}$, note that $f_i^{-1}(A_i)$ is a clique in $H_i$ and $\ell_i(x)=r$ for all $x\in f_i^{-1}(A_i)$.
Let $R_i$ be the set of vertices of $V(G_i^\star)\setminus V(G_i)$ representing cliques that are subsets of $A_i$, and let $H'_i=H_i-f_i^{-1}(R_i)$.
Let $v_{A_i}$ be the vertex of $G^\star$ representing the clique $A_i$.
Let $H$ be the graph obtained from $H_0$ by, for $i=1,\ldots,m$ and each vertex $x\in f_0^{-1}(v_{A_i})$, adding a copy $H'_{i,x}$ of the graph $H'_i$ with the clique $f_i^{-1}(A_i)$ identified
with the clique on the neighbors of $x$ (so that the functions $f_0$ and $f_i$ match on the common clique).
Let $f:V(H)\to V(G^\star)$ match $f_0$ on $V(H_0)$ and $f_i$ on each copy $H'_{i,x}$ of $H'_i$ for $i=1,\ldots,m$.
Let $\ell:V(H)\to \{0,1,\ldots,r\}$ match $\ell_0$ on $V(H_0)$, and for $i=1,\ldots,m$, each vertex $x\in f_0^{-1}(v_{A_i})$, and all vertices $y\in V(H'_{i,x})$, set $\ell(y)=\min(\ell_i(y),\ell_0(x))$.
Let us define $L=L(L_0,\ldots, L_m)\colonequals (H,f,\ell)$.

Clearly, $L$ is a $\star$-overlay of $G$.  We claim that $L$ is walk-preserving.  Indeed, consider any vertex $y\in V(H)$ with $\ell(y)>0$ and a neighbor $v$ of $f(y)$ in $G^\star$.
If $v,f(y)\in V(G_0^\star)$, or if $f(y)\in V(G_i^\star)\setminus V(G_0^\star)$ (and consequently $v\in V(G_i^\star)$) for some $i\in\{1,\ldots, m\}$, then $y$ has a neighbor $z$ with $\ell(z)\ge \ell(y)-1$ and $f(z)=v$ by
the fact that $L_0$ or $L_i$ is walk-preserving and by Observation~\ref{obs-cutoff}.  Hence, suppose that $f(y)\in V(G_0^\star)$ and $v\in V(G_i^\star)\setminus V(G_0^\star)$
for some $i\in\{1,\ldots, m\}$, and thus $f(y)\in A_i$.  Since $L_0$ is walk-preserving and $v_{A_i}$ is a neighbor of $f(y)$ in $G_0^\star$, there exists a neighbor $x$ of $y$ in $H_0$
such that $f(x)=v_{A_i}$ and $\ell(x)\ge \ell(y)-1$.  Since $L_i$ is walk-preserving, there exists a neighbor $z$ of $y$ in $H'_{i,x}$ such that $f(z)=v$ and $\ell_i(z)\ge\ell_i(y)-1=r-1$.
Consequently, $\ell(z)=\min(\ell_i(z),\ell_0(x))\ge\ell(y)-1$.

Furthermore, since $L_0\in\bigstar_r(G_0)$, for $i=1,\ldots, m$, there exists $x\in f_0^{-1}(v_{A_i})$ such that $\ell_0(x)=r$.  Consequently, since $L_i\in \bigstar_r(G_i)$,
for any vertex $v\in V(G_i^\star)\setminus V(G_0^\star)$ there exists a vertex $y\in V(H'_{i,x})$ such that $f(y)=v$ and $\ell(y)=\min(\ell_i(y),\ell_0(x))=r$.
We conclude that $L\in\bigstar_r(G)$.

Let $\LL_0=\{L_0^1,\ldots,L_0^b\}$ and for $i=1,\ldots,m$, let $\LL_i=\{L_i^1,\ldots,L_i^c\}$. 
For $1\le j\le b$ and $1\le l\le c$, let $L_{j,l}=L(L_0^j,L_1^l,L_2^l,\ldots, L_m^l)$,
and let $\LL=\{L_{j,l}:1\le j\le b,1\le l\le c\}$.  Then $\LL$ is a system of $\bigstar_r$-overlays of $G$ of size $bc\le\poly(q(n))$ and treewidth at most $t(r,9k)+a$.
Furthermore, consider any vertex $v\in V(G^\star)$.  If $v\in V(G_0^\star)$, then $\theta_{\LL}(v)=\theta_{\LL_0}(v)<1+1/k$.
If $v\in V(G_i^\star)\setminus V(G_0^\star)$ for some $i\in \{1,\ldots, i\}$, then $\theta_{L_{j,l}}(v)=\theta_{L_0^j}(v_{A_i})\theta_{L_i^l}(v)$,
and thus
$$\theta_{\LL}(v)=\frac{1}{bc}\sum_{j,l}\theta_{L_0^j}(v_{A_i})\theta_{L_i^l}(v)=\theta_{\LL_0}(v_{A_i})\theta_{\LL_i}(v)\le \Bigl(1+\frac{1}{3k}\Bigr)^2<1+1/k.$$
Hence, $\LL$ has thickness at most $1+1/k$.
\end{proof}

Next, we consider a generalization of a star sum.
Let $G$ be a graph and let $(V_1,\ldots, V_d)$ is its layering. We say that $(V_1,\ldots, V_d)$ is a \emph{shadow-complete layering} if
for $i=1,\ldots, d-1$ and for each connected component $C$ of the graph $G[V_{i+1}\cup V_{i+2}\cup\ldots\cup V_d]$,
the neighbors of vertices of $C$ in $V_i$ induce a clique in $G$.  

\begin{lemma}\label{lemma-shadow}
Let $\GG$ be a $(\bigstar,p,q)$-thin class of graphs for some functions $p$ and $q$.
For a positive integer $d$, let $\GG_d$ be a class of graphs $G$ that have a shadow-complete layering with at most $d$
layers such that each layer induces a graph from $\GG$.  Furthermore, assume such a layering can be found in time $O(p(n)+n\,\poly(q(n)))$,
where $n=|V(G)|$.
Then $\GG_d$ is $(\bigstar,p,q)$-thin.
\end{lemma}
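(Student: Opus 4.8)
The plan is to induct on $d$, each time removing the first layer and recognising what remains as a star sum, so that Lemma~\ref{lemma-apex} and Lemma~\ref{lemma-one-level-sum} can be applied. Fix the function $t$ bounding the treewidth of the overlays of $\GG$ and set $a=4t(1,1)+1$; by Lemma~\ref{lemma-nume} every clique of a graph of $\GG$ has at most $a$ vertices. The base case $d=1$ is immediate, since $\GG_1$ coincides with $\GG$ up to the empty graph.

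For the inductive step, suppose $d\ge 2$ and that the statement holds for $d-1$ with an arbitrary $(\bigstar,p,q)$-thin base class. Given $G\in\GG_d$ on $n$ vertices, I would first compute, in time $O(p(n)+n\,\poly(q(n)))$, a shadow-complete layering $(V_1,\dots,V_{d'})$ with $d'\le d$ and $G[V_i]\in\GG$ for all $i$; one may assume $d'\ge 2$ (else $G\in\GG$) and $V_1\neq\emptyset$ (else $(V_2,\dots,V_{d'})$ witnesses $G\in\GG_{d-1}$). Let $C_1,\dots,C_m$ be the connected components of $G[V_2\cup\cdots\cup V_{d'}]$; for each $j$ let $K_j$ be the set of neighbours of $V(C_j)$ in $V_1$ and put $G_j=G[K_j\cup V(C_j)]$. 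By the case $i=1$ of the shadow-completeness condition each $K_j$ is a clique, so $|K_j|\le a$ because $K_j\subseteq V_1$ and $G[V_1]\in\GG$. Since the sets $V(C_j)$ partition $V_2\cup\cdots\cup V_{d'}$, the graph $G$ is a star sum with centre $G_0\colonequals G[V_1]\in\GG$ and rays $G_1,\dots,G_m$: indeed $V(G_0)\cap V(G_j)=K_j$ is a clique, and $V(G_j)\cap V(G_{j'})=K_j\cap K_{j'}\subseteq V_1$ for $j\neq j'$. The centre and the rays are read off the layering in time $O(n\,\poly(q(n)))$.

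It remains to place the rays in a $(\bigstar,p,q)$-thin class. The restriction of the layering to $C_j$, namely $(V(C_j)\cap V_2,\dots,V(C_j)\cap V_{d'})$, is again shadow-complete and has at most $d-1$ layers: for any $i\ge 2$, a connected component of $G[V(C_j)\cap(V_i\cup\cdots\cup V_{d'})]$ is in fact a connected component of $G[V_i\cup\cdots\cup V_{d'}]$ (any edge of $G$ leaving $V(C_j)$ must leave into $V_1$), so the needed clique conditions are inherited from those of $(V_1,\dots,V_{d'})$; moreover every layer $G[V(C_j)\cap V_i]$ is a disjoint union of connected components of $G[V_i]\in\GG$. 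Hence $C_j$ lies in the class furnished by the inductive hypothesis applied to the base class $\GG^{\cup}$ of disjoint unions of connected components of graphs from $\GG$; this class $\GG^{\cup}$ is $(\bigstar,p,q)$-thin by Observation~\ref{obs-sg} and Lemma~\ref{lemma-comp} (modulo the bookkeeping noted below), the key point being that the host graphs $G[V_i]\in\GG$ stay available throughout the recursion, so that overlay systems of the components are obtained from theirs by restriction. Since $G_j$ is $C_j$ together with the at most $a$ apex vertices $K_j$, Lemma~\ref{lemma-apex} places the rays in a $(\bigstar,p,q)$-thin class, and Lemma~\ref{lemma-one-level-sum} then yields that $\GG_d$ is $(\bigstar,p,q)$-thin. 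For the running time, the only step costing more than $O(n\,\poly(q(n)))$ is computing the layering and the overlay systems of the $G[V_i]$, done once; the recursion has constant depth $d$ and handles essentially vertex-disjoint pieces at each level, so the total cost telescopes to $O(p(n)+n\,\poly(q(n)))$.

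The part needing care is exactly this bookkeeping. The layers of the components $C_j$ are only disjoint unions of connected components of graphs from $\GG$, not graphs from $\GG$ themselves; since an overlay system of an isolated component cannot be recomputed without its host graph, the induction must really be carried out inline, threading the layers $G[V_i]$ (and the partial layerings) through the recursion and feeding their restrictions to the explicit overlay-combining constructions in the proofs of Lemma~\ref{lemma-apex} and Lemma~\ref{lemma-one-level-sum} rather than invoking those lemmas as black boxes. One also has to check that restricting a $\bigstar_r$-overlay to a set of whole connected components preserves walk-preservation, the $r$-neighbourhood property, and simpliciality-preservation. By comparison, verifying the star-sum decomposition, the bound $|K_j|\le a$, and the time telescoping are routine.
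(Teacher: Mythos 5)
Your proof is correct and follows the paper's argument: induct on $d$, peel off $V_1$, recognize $G$ as a star sum with center $G[V_1]$ and rays obtained from the components of $G[V_2\cup\cdots\cup V_{d'}]$ together with their (small, clique) neighborhoods in $V_1$, and finish with Lemmas~\ref{lemma-apex} and~\ref{lemma-one-level-sum}. The one place you diverge is the bookkeeping you flag at the end: you restrict the layering to each component $C_j$ separately, which forces you to enlarge the base class to $\GG^{\cup}$ and to thread host graphs through the recursion. The paper avoids this entirely by applying the induction hypothesis to $G'=G[V_2\cup\cdots\cup V_{d'}]$ as a single graph: since $G'[V_i]=G[V_i]\in\GG$, the graph $G'$ lies in $\GG_{d-1}$ over the \emph{same} base class $\GG$, and only afterwards does one pass to its components $G'_1,\ldots,G'_m$, whose overlay systems are obtained by restricting the one computed for $G'$ (Observation~\ref{obs-sg}, which indeed preserves walk-preservation, the $r$-neighborhood property, and simpliciality-preservation for induced subgraphs, hence for unions of components). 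This removes the need for $\GG^{\cup}$ and makes the host-graph availability automatic; otherwise your argument, including the star-sum verification and the bound $|K_j|\le a$ via Lemma~\ref{lemma-nume}, matches the paper's.
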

\begin{proof}
We prove the claim by induction on $d$.  The case $d=1$ is trivial, hence assume that $d\ge 2$.
Consider a graph $G\in\GG_d$ with a shadow-complete layering $(V_1,\ldots, V_d)$ such that each layer induces a subgraph belonging to $\GG$.
Let $G_0=G[V_1]$.  By Lemma~\ref{lemma-nume}, each clique in $G_0$ has size at most $a=4t(1,1)+1$,
where $t$ bounds the treewidth of overlays of $\GG$.

Let $G'=G[V_2\cup\ldots\cup V_d]$, and note that $(V_2,\ldots,V_d)$ is a shadow-complete layering of $G'$; hence,
$G'\in \GG_{d-1}$, which is a $(\bigstar,p,q)$-thin class by the induction hypothesis.  Let $G'_1$, \ldots, $G'_m$ be the
connected components of $G'$.  For $1\le i\le m$, let $A_i$ be the set of neighbors of $G'_i$ in $V_1$,
and let $G_i=G[V(G'_i)\cup A_i]$.  Note that $G$ is a star sum with center $G_0$ and rays $G_1$, \ldots, $G_m$.
By Lemmas~\ref{lemma-apex} and \ref{lemma-one-level-sum}, we conclude that $\GG_d$ is $(\bigstar,p,q)$-thin.
\end{proof}

Using Lemma~\ref{lemma-layering}, we can lift the previous result to shadow-complete layerings with an unbounded number of layers.

\begin{corollary}\label{cor-shadowmany}
Let $\GG$ be a $(\bigstar,p,q)$-thin class of graphs for some functions $p$ and $q$.
Let $\GG'$ be a class of graphs $G$ that have a shadow-complete layering such that each layer induces a graph from $\GG$.
Furthermore, assume such a layering can be found in time $O(p(n)+n\,\poly(q(n)))$, where $n=|V(G)|$.
Then $\GG'$ is $(\bigstar,p,q)$-thin.
\end{corollary}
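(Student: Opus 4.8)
The plan is to invoke Lemma~\ref{lemma-layering} with $\PP=\bigstar$, feeding it a suitable family of auxiliary classes built from Lemma~\ref{lemma-shadow}. For each positive integer $i$ I would let $\GG_i$ be the class $\GG_d$ produced by Lemma~\ref{lemma-shadow} with $d=i$; that is, $\GG_i$ consists of all graphs admitting a shadow-complete layering with at most $i$ layers, each layer inducing a subgraph from $\GG$. Lemma~\ref{lemma-shadow} then tells us that each $\GG_i$ is $(\bigstar,p,q)$-thin. Note that $\GG_1=\GG$, since the shadow-completeness condition is vacuous for a single layer; this matches the requirement that single layers of the layering used below induce graphs from $\GG$.

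The combinatorial core is the observation that a block of consecutive layers of a shadow-complete layering is itself a shadow-complete layering. I would check this as follows: if $(V_1,\ldots,V_d)$ is a shadow-complete layering of $G$ with $G[V_j]\in\GG$ for all $j$, and $1\le a\le b\le d$, then $(V_a,\ldots,V_b)$ is a shadow-complete layering of $G':=G[V_a\cup\cdots\cup V_b]$. Indeed, for $a\le i<b$, any component $C'$ of $G'[V_{i+1}\cup\cdots\cup V_b]$ is contained in some component $C$ of $G[V_{i+1}\cup\cdots\cup V_d]$, so the neighbors of $C'$ in $V_i$ form a subset of the neighbors of $C$ in $V_i$; the latter induce a clique in $G$ by hypothesis, hence so do the former, and this clique persists in the induced subgraph $G'$. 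Since moreover $G'[V_j]=G[V_j]\in\GG$ for $a\le j\le b$, the graph $G'$ lies in $\GG_{b-a+1}$, and hence in $\GG_i$ whenever $b-a+1\le i$. Consequently, for every $G\in\GG'$ with its shadow-complete layering $(V_1,\ldots,V_d)$ and every $i$, the union of any at most $i$ consecutive layers induces a subgraph belonging to $\GG_i$, which is exactly the hypothesis of Lemma~\ref{lemma-layering}.

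It then remains only to note that the algorithmic requirement of Lemma~\ref{lemma-layering} is met: the shadow-complete layering of a given $G\in\GG'$ can be found in time $O(p(n)+n\,\poly(q(n)))$ by assumption, and the $(\bigstar,p,q)$-thinness of the classes $\GG_i$ (and thus the existence of the overlay-producing algorithms for them) is delivered by Lemma~\ref{lemma-shadow}. Applying Lemma~\ref{lemma-layering} then yields that $\GG'$ is $(\bigstar,p,q)$-thin, as desired.

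I do not anticipate a genuine obstacle: the argument is essentially bookkeeping, and the one place to be slightly careful is confirming that restricting a shadow-complete layering to a block of consecutive layers inherits both shadow-completeness and the property that each layer induces a graph from $\GG$. Both are immediate once one observes that a component of the restricted subgraph $G'[V_{i+1}\cup\cdots\cup V_b]$ sits inside a component of $G[V_{i+1}\cup\cdots\cup V_d]$ and that subsets of cliques are cliques.
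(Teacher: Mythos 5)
Your proposal is correct and follows essentially the same route as the paper: define $\GG_i$ as the class $\GG_d$ of Lemma~\ref{lemma-shadow} with $d=i$, observe that any block of consecutive layers of a shadow-complete layering is again a shadow-complete layering of the induced subgraph, and apply Lemma~\ref{lemma-layering}. The paper leaves the restriction-of-layering observation as an "observe that," whereas you spell out the (correct) verification; otherwise the arguments coincide.
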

\begin{proof}
Consider any graph $G\in \GG'$ and let $(V_1,\ldots, V_l)$ be its shadow-complete layering with each layer inducing a graph
from $\GG$.  Consider any positive integers $d$ and $i\le l-d+1$, and let $G_{i,d}=G[V_i\cup\ldots\cup V_{i+d-1}]$.
Observe that $(V_i,\ldots,V_{i+d-1})$ is a shadow-complete layering of $G_{i,d}$, and thus $G_{i,d}$ belongs
to the class $\GG_d$ from Lemma~\ref{lemma-shadow}.  Consequently, $\GG'$ is $(\bigstar,p,q)$-thin by Lemma~\ref{lemma-layering}.
\end{proof}

A \emph{tree decomposition} of a graph $G$ is a pair $(T,\beta)$, where $T$ is a tree and $\beta$ assigns to each vertex of $T$ a subset of
vertices of $G$, satisfying the following conditions:
\begin{itemize}
\item For each edge $xy\in E(G)$, there exists a vertex $v\in V(T)$ such that $\{u,v\}\subseteq \beta(v)$, and
\item for each vertex $x\in V(G)$, the set $\{v\in V(T):x\in\beta(v)\}$ induces a non-empty connected subtree in $T$.
\end{itemize}
The sets $\beta(v)$ for $v\in V(T)$ are called the \emph{bags} of the decomposition.
We say that the decomposition is \emph{chordal} if $\beta(v)\cap\beta(v')$ induces a clique in $G$ for every edge $vv'\in E(T)$.
A tree decomposition $(T',\beta')$ of a subgraph $G'$ of $G$ is \emph{contained} in the decomposition $(T,\beta)$ if for each vertex $v'\in V(T')$,
there exists a vertex $v\in V(T)$ such that $\beta(v')\subseteq\beta(v)$.  We need the following auxiliary result
relating chordal tree decompositions and shadow-complete layerings.

\begin{lemma}[Dujmovi\'{c}, Morin, and Wood~\cite{layers}]\label{lemma-redlay}
Let $G$ be a graph that has a chordal tree decomposition $(T,\beta)$ of adhesion at most $a$.
Then $G$ has a shadow-complete layering such that each layer induces a subgraph of $G$ with
a chordal tree decomposition of adhesion at most $a-1$ contained in $(T,\beta)$.
Furthermore, this layering and the decompositions of each layer can be obtained in linear time given $(T,\beta)$.
\end{lemma}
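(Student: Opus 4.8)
The plan is to build the layering by peeling it off the tree decomposition one level at a time, working downward from a chosen root. Fix a root $r$ of $T$, and describe a recursive procedure that, given a subgraph $G'$ of $G$ together with a chordal tree decomposition of adhesion at most $a$ contained in $(T,\beta)$ and rooted at some node $\rho$, outputs a first layer and then recurses on the connected pieces that remain; the $i$-th layer of $G$ is then the union, over all recursive calls, of the layers they produce at recursion depth $i$. Distinct pieces produced along the way occupy vertex-disjoint, mutually non-adjacent parts of $G$ (they lie in different subtrees of $T$), so these unions are legitimate, and the whole construction is a single top-down sweep of $T$ in which every tree node and every vertex is touched a bounded number of times; hence it runs in linear time and yields the per-layer decompositions (restrictions of $(T,\beta)$) along the way.

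For the layer produced at a node $\rho$, the idea is to take a carefully chosen part of the bag $\beta(\rho)$ so that its removal splits $G'$ into pieces each of which hangs off a single child subtree of $\rho$. Here chordality is used decisively: the interface between such a piece $C$ and the removed set lies inside the adhesion $\beta(\rho)\cap\beta(c)$ to the corresponding child $c$, which by hypothesis induces a clique, so the shadow $N(C)$ meets the layer at $\rho$ in a clique. Applying this at every node of the recursion gives the shadow-completeness of the whole layering, once one checks that the components of $G[V_{i+1}\cup\cdots\cup V_d]$ are precisely the pieces arising at recursion depth $i+1$. The adhesion of the decomposition inherited by each layer drops to at most $a-1$ because the single adhesion that a piece $C$ used toward $\rho$ is absorbed into the layer at $\rho$ and disappears from $C$'s restriction of $(T,\beta)$; containment in $(T,\beta)$ is immediate, since every bag ever used is a subset of an original bag.

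The step I expect to be the main obstacle is verifying that the result is a genuine layering, with no edge joining $V_i$ to $V_j$ for $|i-j|\ge 2$, which is exactly what dictates the correct choice of which vertices of $\beta(\rho)$ go into the current layer. The trouble is a vertex lying in a long chain of nested adhesion cliques, hence adjacent to vertices arbitrarily deep in $T$: placed too high it spans many layers, carried too far down it may reappear in later adhesions and be carried further still. Resolving this requires distributing the vertices of $\beta(\rho)$ between the current layer and the pieces handed to the recursion so that each vertex ends up exactly one level above its deepest neighbour, and then arguing by induction on the recursion depth --- using that for every vertex $x$ the nodes of $T$ whose bags contain $x$ form a connected subtree and therefore span a contiguous block of depths --- that only intra-layer and consecutive-layer edges survive. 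This distribution must be kept compatible with the clique-shadow argument of the previous paragraph, which is why the precise rule for forming each layer is delicate; the remaining points (running time, containment, the adhesion bound) then follow routinely.
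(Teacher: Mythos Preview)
The paper does not prove this lemma: it is quoted as a result of Dujmovi\'{c}, Morin, and Wood~\cite{layers} and used as a black box in Lemma~\ref{lemma-cliquesum} and Corollary~\ref{cor-shadowmany}. There is therefore no proof in the present paper to compare your attempt against.

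For what it is worth, your outline moves in the direction of the argument in~\cite{layers} (root $T$, peel layers guided by depth in $T$, recurse on the pieces that fall into child subtrees), and you have correctly located the only nontrivial point. The naive rule ``put a vertex in layer $i$ if the shallowest bag containing it has depth $i$'' does \emph{not} give a layering: take the path decomposition with bags $\{a,b,c\},\{b,c,d\},\{c,d,e\}$, adhesions $\{b,c\}$ and $\{c,d\}$ made into edges, and the edge $ce$; then $c$ lands at depth $0$ and $e$ at depth $2$. Your text acknowledges this and proposes to ``distribute the vertices of $\beta(\rho)$ \ldots\ so that each vertex ends up exactly one level above its deepest neighbour,'' but that is a requirement, not a construction, and it is not evident that it can be satisfied simultaneously with the shadow-clique property and the claimed adhesion drop. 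As a standalone proof, then, the proposal is incomplete at precisely the step it flags as delicate; to finish you would have to either supply and verify an explicit rule or follow the construction in~\cite{layers}.
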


Together with Corollary~\ref{cor-shadowmany}, this enables us to lift $\bigstar$-thinness of a graph class $\GG$
to the class of graphs with chordal tree decompositions whose bags belong to $\GG$.

\begin{lemma}\label{lemma-cliquesum}
Let $\GG$ be a $(\bigstar,p,q)$-thin class of graphs closed on induced subgraphs, for some functions $p$ and $q$.
For a non-negative integer $a$, let $\GG_a$ be a class of graphs $G$ that have a chordal tree decomposition $(T,\beta)$ of adhesion at most $a$
such that $G[\beta(z)]\in\GG$ for all $z\in V(T)$.
Furthermore, assume such a decomposition can be found in time $O(p(n)+n\,\poly(q(n))$, where $n=|V(G)|$.
Then $\GG_a$ is $(\bigstar,p,q)$-thin.
\end{lemma}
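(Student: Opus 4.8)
The plan is to prove the statement by induction on the adhesion bound $a$, peeling off one unit of adhesion at a time by means of Lemma~\ref{lemma-redlay} and then feeding the resulting shadow-complete layering into Corollary~\ref{cor-shadowmany}. Throughout, the hypothesis that $\GG$ is closed on induced subgraphs is what lets the recursion stay inside an analogously defined class with a smaller adhesion bound.

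For the base case $a=0$, observe that in a chordal tree decomposition of adhesion $0$ every pair of adjacent bags is disjoint, so by the connectivity condition each vertex lies in exactly one bag, the bags partition $V(G)$, and no edge of $G$ joins distinct bags. Hence every connected component of a graph $G\in\GG_0$ equals $G[W]$ for some $W\subseteq\beta(z)$, i.e.\ it is an induced subgraph of $G[\beta(z)]\in\GG$ and therefore lies in $\GG$; as the components are found in linear time, Lemma~\ref{lemma-comp} shows that $\GG_0$ is $(\bigstar,p,q)$-thin.

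For the inductive step, assume $a\ge 1$ and that $\GG_{a-1}$ is $(\bigstar,p,q)$-thin. Given $G\in\GG_a$ with $n$ vertices, first find in time $O(p(n)+n\,\poly(q(n)))$ a chordal tree decomposition $(T,\beta)$ of $G$ of adhesion at most $a$ with $G[\beta(z)]\in\GG$ for all $z$; then apply Lemma~\ref{lemma-redlay} to obtain, in additional linear time, a shadow-complete layering $(V_1,\ldots,V_l)$ of $G$ together with, for each $i$, a chordal tree decomposition of $G[V_i]$ of adhesion at most $a-1$ contained in $(T,\beta)$. Every bag of the decomposition of $G[V_i]$ is a subset of some $\beta(z)$ and hence induces an induced subgraph of $G[\beta(z)]\in\GG$, which lies in $\GG$ by closure; thus $G[V_i]\in\GG_{a-1}$, so $G$ has a shadow-complete layering whose layers induce graphs of the $(\bigstar,p,q)$-thin class $\GG_{a-1}$, and this layering is produced in time $O(p(n)+n\,\poly(q(n)))$. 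Corollary~\ref{cor-shadowmany} applied with base class $\GG_{a-1}$ then yields that $\GG_a$ is $(\bigstar,p,q)$-thin, completing the induction.

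The combinatorial content above is a direct assembly of earlier results; the delicate point is purely algorithmic bookkeeping. The overlay-producing algorithm promised by ``$\GG_{a-1}$ is $(\bigstar,p,q)$-thin'' takes only a graph as input, whereas Lemma~\ref{lemma-redlay} hands us each layer $G[V_i]$ already decorated with an adhesion-$(a-1)$ chordal decomposition, and there is no evident way to recover such a decomposition of a bare member of $\GG_{a-1}$ within the time budget (the assumed decomposition-finding procedure is only given for adhesion $a$). I would handle this in one of two standard ways: either prove the existential statement first --- that $\GG_a$ admits small $\bigstar_r$-overlays --- by the induction above using only the existential content of Lemmas~\ref{lemma-samesize}, \ref{lemma-apex}, \ref{lemma-one-level-sum}, \ref{lemma-shadow} and Corollary~\ref{cor-shadowmany}, where no findability is needed, and then add the algorithm separately; or, more uniformly, carry out the induction on the slightly strengthened statement in which the algorithm is also allowed to receive a chordal decomposition of adhesion at most $a$ with bags in $\GG$ as part of its input. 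In the latter formulation Lemma~\ref{lemma-redlay} supplies exactly the lower-adhesion decompositions needed one level down, the original findability hypothesis is invoked only once at the top level, and one checks routinely that all the auxiliary lemmas and Corollary~\ref{cor-shadowmany} go through verbatim in this decomposition-assisted setting. I expect this bookkeeping, rather than any new idea, to be the only real work.
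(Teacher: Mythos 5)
Your proof is correct and follows essentially the same route as the paper: induction on $a$, with the base case $a=0$ reduced to Lemma~\ref{lemma-comp}, Lemma~\ref{lemma-redlay} producing a shadow-complete layering whose layers lie in $\GG_{a-1}$, and Corollary~\ref{cor-shadowmany} closing the induction. The algorithmic subtlety you flag --- that the induction hypothesis nominally promises an algorithm taking only a bare member of $\GG_{a-1}$ as input, while Lemma~\ref{lemma-redlay} instead hands you the lower-adhesion decompositions directly --- is real but silently glossed over in the paper, and your decomposition-assisted reformulation is the right way to make the bookkeeping precise.
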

\begin{proof}
We prove the claim by induction on $a$.  Consider any graph $G\in\GG$ and its tree-decomposition $(T,\beta)$ as
described in the statement of the lemma.

If $a=0$, then $G[\beta(z)]$ is a component of $G$ for each $z\in V(T)$, and the claim
follows from Lemma~\ref{lemma-comp}.  Hence, suppose that $a\ge 1$.  Let $(V_1,\ldots, V_l)$ be a shadow-complete
layering of $G$ obtained by Lemma~\ref{lemma-redlay}.  Consider any $i\in\{1,\ldots,l\}$, and let $(T_i,\beta_i)$
be a chordal tree decomposition of adhesion at most $a-1$ of $G[V_i]$ contained in $(T,\beta)$.
Since $\GG$ is closed on induced subgraphs, $G[\beta_i(z)]\in\GG$ for all $z\in V(T_i)$.  Hence $G[V_i]$ belongs to
$\GG_{a-1}$, which is $(\bigstar,p,q)$-thin by the induction hypothesis.  Consequently, $\GG_a$ is
$(\bigstar,p,q)$-thin by Corollary~\ref{cor-shadowmany}.
\end{proof}

Let us restate the previous result in the terms of arbitrary (not necessarily chordal) tree decompositions.
Let $(T,\beta)$ be a tree decomposition of a graph $G$.  The \emph{torso} of a vertex $z\in V(T)$ is obtained from $G[\beta(z)]$ by
adding cliques on $\beta(z)\cap\beta(z')$ for all $zz'\in E(T)$.

\begin{corollary}\label{cor-cliquesum}
Let $\GG$ be a $(\bigstar,p,q)$-thin class of graphs closed on induced subgraphs, for some functions $p$ and $q$.
Let $\GG'$ be a class of graphs $G$ that have a tree decomposition $(T,\beta)$ such that the torso of each vertex of $T$
belongs to $\GG$.  Furthermore, assume such a decomposition can be found in time $O(p(n)+n\,\poly(q(n))$, where $n=|V(G)|$.
Then $\GG'$ is $(\bigstar,p,q)$-thin.
\end{corollary}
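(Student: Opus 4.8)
The plan is to reduce Corollary~\ref{cor-cliquesum} to Lemma~\ref{lemma-cliquesum} by converting an arbitrary tree decomposition into a chordal one whose bags induce graphs from $\GG$. The key observation is that the torso of a vertex $z$ in $(T,\beta)$ is exactly the graph obtained from $G[\beta(z)]$ by adding cliques on the adhesion sets; if we form a new graph $G^+$ by actually adding, for every edge $zz'\in E(T)$, all the edges of the clique on $\beta(z)\cap\beta(z')$ to $G$, then $G$ is an induced subgraph of $G^+$, the pair $(T,\beta)$ is still a tree decomposition of $G^+$, and now $G^+[\beta(z)]$ is precisely the torso of $z$, hence belongs to $\GG$. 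Moreover $(T,\beta)$ is now a \emph{chordal} tree decomposition of $G^+$: for every edge $zz'\in E(T)$ the set $\beta(z)\cap\beta(z')$ induces a clique in $G^+$ by construction.

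First I would note that the adhesion of $(T,\beta)$ is bounded: since the torso of each $z$ belongs to $\GG$ and contains a clique on each adhesion set $\beta(z)\cap\beta(z')$, Lemma~\ref{lemma-nume} applied to $\GG$ gives a uniform bound $a$ on the size of any such clique, hence on the adhesion. (Here I use that $\GG$, being $(\bigstar,q)$-thin, has bounded clique size.) Next, given an $n$-vertex graph $G\in\GG'$, I would run the assumed algorithm to find the tree decomposition $(T,\beta)$ in time $O(p(n)+n\,\poly(q(n)))$, then in linear time build $G^+$ by adding the adhesion cliques; this is a linear-time manipulation since the number of added edges is at most $a^2$ times the number of tree edges, which is $O(n)$. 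Then $G^+$ has a chordal tree decomposition of adhesion at most $a$ with all bags inducing graphs from $\GG$, so $G^+$ belongs to the class $\GG_a$ of Lemma~\ref{lemma-cliquesum}, which is $(\bigstar,p,q)$-thin; apply the algorithm from that lemma to obtain a system of $\bigstar_r$-overlays of $G^+$.

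Finally I would pull the overlay system back from $G^+$ to $G$ using Observation~\ref{obs-sg}: since $G$ is an induced subgraph of $G^+$, every $r$-neighborhood $\star$-overlay of $G^+$ yields an $r$-neighborhood $\star$-overlay of $G$ of no larger treewidth and thickness in linear time. One has to check that Observation~\ref{obs-sg} respects the $\star$-structure, i.e. that restricting a $(V((G^+)^\star)\setminus V(G^+))$-simpliciality-preserving overlay of $(G^+)^\star$ to $G^\star$ (which is an induced subgraph of $(G^+)^\star$, as remarked after the definition of $G^\star$) gives a $(V(G^\star)\setminus V(G))$-simpliciality-preserving overlay of $G^\star$ — removing vertices and edges only destroys cliques, so simpliciality is preserved, and every non-empty clique of $G$ is a non-empty clique of $G^+$ so its representative vertex survives and still has a preimage of label $r$. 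Thickness and treewidth are unchanged or decreased, sizes are preserved, and all transformations run within the required time bound, giving that $\GG'$ is $(\bigstar,p,q)$-thin.

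The main obstacle I anticipate is purely bookkeeping: verifying that passing to $G^+$ and back interacts correctly with the $\bigstar$ machinery — in particular that $G^\star$ embeds as an induced subgraph of $(G^+)^\star$ and that the simpliciality-preserving property and the $r$-neighborhood property both survive the restriction. None of this is deep, but it is the place where a careless argument could slip; everything else is a direct invocation of Lemmas~\ref{lemma-nume} and \ref{lemma-cliquesum} and Observation~\ref{obs-sg}.
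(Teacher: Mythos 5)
Your proposal is correct and follows essentially the same route as the paper: bound the adhesion via Lemma~\ref{lemma-nume}, add the adhesion cliques to get a graph with a chordal tree decomposition whose bags belong to $\GG$, apply Lemma~\ref{lemma-cliquesum}, and restrict back to $G$ via Observation~\ref{obs-sg}. Your extra verification that the restriction respects the $\star$-structure (that $G^\star$ is an induced subgraph of $(G^+)^\star$ and that simpliciality is preserved) is a detail the paper leaves implicit, and it checks out.
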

\begin{proof}
Let $t$ bound the treewidth of overlays of $\GG$.  Let $G$ be a graph from $\GG'$ and let $(T,\beta)$ be its tree decomposition
whose torsos belong to $\GG$.
By Lemma~\ref{lemma-nume}, each graph $\GG$ has clique number at most $a=4t(1,1)+1$, and hence the decomposition $(T,\beta)$
has adhesion at most $a$.  Let $G'$ be the graph obtained from $G$ by adding cliques on sets $\beta(z)\cap \beta(z')$ for all $zz'\in E(T)$.
Then $(T,\beta)$ is a chordal tree decomposition of $G'$ and $G'[\beta(z)]\in\GG$ for each $z\in V(T)$.
Hence, $G'$ belongs to the $(\bigstar,p,q)$-thin class $\GG_a$ from Lemma~\ref{lemma-cliquesum}.  Observation~\ref{obs-sg} implies that
$\GG'$ is $(\bigstar,p,q)$-thin.
\end{proof}
Note that equivalently, $\GG'$ is a class of graphs obtained from those in $\GG$ by clique-sums.

\subsection{Proper minor-closed classes}

We are now ready to show that proper minor-closed classes are thin.
Each graph $G$ has a trivial $r$-neighborhood overlay $(G,\text{id},r)$.  Hence, we get the following.

\begin{observation}\label{obs-tw-thin}
Every class of graphs of bounded treewidth is $(\Ss,n,1)$-thin.
\end{observation}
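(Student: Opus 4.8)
The plan is to exhibit, for every graph $G$ in a bounded-treewidth class $\GG$, a single $r$-neighborhood overlay of $G$ with the trivial structure, namely $L=(G,\mathrm{id},r)$, where $\mathrm{id}$ is the identity map on $V(G)$ and $r$ denotes the constant function with value $r$. First I would check that $L$ is genuinely an overlay: $\mathrm{id}$ is a homomorphism from $G$ to $G$ (it maps every edge to itself), so Definition~\ref{def-nonp-over} is satisfied. The thickness at every vertex $v$ is $|\mathrm{id}^{-1}(v)|=1$, and the treewidth of the overlay is $\tw(G)$, which is bounded by the constant for the class.

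Next I would verify the three conditions of Definition~\ref{def-over} so that $L\in\Aa_r(G)$, and in fact $L\in\Ss_r(G)$. Walk-preservation: if $x\in V(G)$ has $\ell(x)=r\ge 1$ and $w$ is a neighbor of $\mathrm{id}(x)=x$ in $G$, then $y=w$ is a neighbor of $x$ with $\mathrm{id}(y)=w$ and $\ell(y)=r\ge r-1=\ell(x)-1$; so the walk-preserving condition holds. The range condition $\ell(x)\in\{0,\dots,r\}$ is immediate since $\ell\equiv r$. The covering condition is also immediate: for every $v\in V(G)$ the vertex $x=v$ lies in $\mathrm{id}^{-1}(v)$ and has $\ell(x)=r$. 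Finally $L$ is subgraph-based: each component $H'$ of $H=G$ has $\mathrm{id}\restriction V(H')$ injective and $\mathrm{id}(H')=H'$ an induced subgraph of $G$.

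Having established $L\in\Ss_r(G)$, I would package it into a system of overlays. Take $\LL=\{L\}$, a multiset of size one. Its thickness $\theta(\LL)=\max_v\theta_{\LL}(v)=1\le 1+1/k$ for every positive integer $k$; its treewidth is $\tw(G)$, bounded by the class constant, which we may take as the (constant) function $t(r,k)$; and its size is $1=p_2(q(n))$ with, say, $p_2\equiv 1$ and $q\equiv 1$. Thus for each $r$ and $k$, the class $\GG$ admits $(1,1)$-small $\Ss_r$-overlays of thickness $1+1/k$ and bounded treewidth, which is the definition of $(\Ss,1)$-thin. For the algorithmic strength $(\Ss,n,1)$-thin, one must also produce $\LL$ in time $O(n+n\cdot 1^{f(r,k)})=O(n)$: the overlay $(G,\mathrm{id},r)$ requires only reading off $G$ itself, so it is constructed in linear time, matching the bound $O(p(n)+n\poly(q(n)))$ with $p(n)=n$ and $q\equiv 1$.

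I do not anticipate a genuine obstacle here; the statement is essentially a bookkeeping exercise matching the trivial overlay against Definitions~\ref{def-over}, \ref{def-admit1}, and \ref{def-admit}. The only point that deserves a moment's care is checking that the single-element system really has thickness exactly $1$ (so that $1+1/k$ is comfortably satisfied for all $k$ simultaneously, as required by Definition~\ref{def-admit}), and that the constant functions $p_2\equiv 1$, $q\equiv 1$, $p(n)=n$ are legitimate choices of the "non-decreasing positive polynomial" and "non-decreasing function" demanded by the definitions — they are.
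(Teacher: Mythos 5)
Your proof is correct and is exactly the paper's argument: the paper also just exhibits the trivial overlay $(G,\mathrm{id},r)$ as a singleton system of $\Ss_r$-overlays with thickness $1$ and treewidth $\tw(G)$. Your verification of the walk-preserving, range, covering, and subgraph-based conditions is the routine bookkeeping the paper leaves implicit.
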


Dujmovi\'{c}, Morin, and Wood~\cite{layers} introduced the following notion.
Let $c$ be a positive integer.  A graph $G$ has \emph{layered treewidth at most $c$} if $G$ has a tree decomposition $(T,\beta)$
and a layering $(V_1,\ldots, V_l)$ such that $|V_i\cap\beta(z)|\le c$ for $1\le i\le l$ and each vertex $z\in V(T)$.
Note that union of any $d$ layers induces a subgraph of $G$ of treewidth at most $cd$.  Observation~\ref{obs-tw-thin}
together with Lemma~\ref{lemma-layering} implies the following.

\begin{corollary}\label{cor-layertw}
Let $c$ be a positive integer and let $p$ be a positive non-decreasing polynomial.
Suppose that $\GG$ is a class of graphs of layered treewidth at most $c$.
If the corresponding tree decomposition and layering of an $n$-vertex graph from $\GG$ can be found in time $O(p(n))$,
then $\GG$ is $(\Ss,p,1)$-thin.
\end{corollary}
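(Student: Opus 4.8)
The plan is to combine the three tools already assembled in the excerpt: Observation~\ref{obs-tw-thin}, which tells us that bounded treewidth classes are $(\Ss,n,1)$-thin; Lemma~\ref{lemma-comp} (via Lemma~\ref{lemma-layering}), which allows us to glue overlays of consecutive layers together; and the hypothesis that the tree decomposition and layering can be found in time $O(p(n))$. First I would observe that if $G$ has layered treewidth at most $c$, witnessed by a tree decomposition $(T,\beta)$ and a layering $(V_1,\ldots,V_l)$, then for any $i\ge 1$ the union of any at most $i$ consecutive layers $V_{j}\cup\ldots\cup V_{j+i-1}$ induces a subgraph of $G$ whose treewidth is at most $ci$: indeed, restricting each bag $\beta(z)$ to these layers yields a tree decomposition of the induced subgraph in which every bag has size at most $ci$, as noted in the paragraph preceding the corollary.

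Next I would let $\GG_i$ denote the class of all graphs of treewidth at most $ci$. By Observation~\ref{obs-tw-thin}, each $\GG_i$ is $(\Ss,n,1)$-thin (and $n\le p(n)$ since $p$ is a positive non-decreasing polynomial, so it is also $(\Ss,p,1)$-thin with the same trivial algorithm that simply outputs the single overlay $(G,\mathrm{id},r)$). By the previous paragraph, every $G\in\GG$ has a layering in which the union of any at most $i$ consecutive layers induces a member of $\GG_i$. The layering itself can be extracted from the tree decomposition and layering provided by the hypothesis in time $O(p(n))$, which fits the time bound required by Lemma~\ref{lemma-layering} (with $q\equiv 1$, the term $n\,\poly(q(n))$ is just $O(n)=O(p(n))$). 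Applying Lemma~\ref{lemma-layering} with $\PP=\Ss$ then yields directly that $\GG$ is $(\Ss,p,1)$-thin.

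I do not expect any genuine obstacle here: the statement is essentially a packaging of Lemma~\ref{lemma-layering} specialized to the bounded-treewidth input classes of Observation~\ref{obs-tw-thin}, and the only thing to check carefully is the bookkeeping of the time bounds, namely that finding the decomposition in $O(p(n))$ time suffices to meet the hypothesis ``such a layering can be found in time $O(p(n)+n\,\poly(q(n)))$'' of Lemma~\ref{lemma-layering} with $q\equiv 1$. The mildest subtlety is making sure the overlays produced are subgraph-based: since the trivial overlay $(G',\mathrm{id},r)$ of an induced subgraph $G'$ of $G$ is subgraph-based, and Lemma~\ref{lemma-layering} preserves the class $\PP=\Ss$, this is automatic, and $\GG$ indeed inherits $\Ss$-thinness rather than merely $\Aa$-thinness.
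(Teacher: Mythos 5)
Your proof is correct and follows exactly the route the paper intends: the paper itself derives this corollary by noting that any $d$ consecutive layers induce a subgraph of treewidth at most $cd$ and then invoking Observation~\ref{obs-tw-thin} together with Lemma~\ref{lemma-layering}. Your bookkeeping of the time bounds and the remark that subgraph-basedness is preserved are both accurate.
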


Let us state a corollary of Robertson-Seymour structure theorem, see Grohe~\cite{grohe2003local} and
Dujmovi\'{c} et al.~\cite{layers} for more details.
\begin{theorem}\label{thm-struct}
For every proper minor-closed class $\GG$, there exist positive integers $a$ and $c$ as follows.
Every graph $G\in\GG$ has a tree decomposition $(T,\beta)$ and a system $\{A_z\subseteq \beta(z):z\in V(T)\}$
of sets such that for every $z\in V(T)$, the set $A_z$ has size at most $a$ and letting $F$ denote the torso of $z$,
the graph $F-A_z$ has layered treewidth at most $c$.  Furthermore, the tree decomposition of $G$
and the system $\{A_z:z\in V(T)\}$ can be found in time $O(|V(G)|^3)$, and the decomposition and layering
of $F-A$ for each vertex can be found in linear time.
\end{theorem}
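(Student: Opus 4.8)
The plan is to invoke the Robertson--Seymour structure theorem in its algorithmic form and then quote the bound on the layered treewidth of almost-embeddable graphs due to Dujmovi\'c, Morin and Wood; the set $A_z$ will be the ``apex'' part of the torso of $z$, and $c$ will be the layered-treewidth bound for the apex-free part.

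First I would recall the precise content of the structure theorem. For a proper minor-closed class $\GG$ there are constants $a$, $g$, $h$ (depending only on $\GG$) such that every $G\in\GG$ has a tree decomposition $(T,\beta)$ in which the torso $F$ of every node $z$ is \emph{$a$-almost embeddable in a surface of Euler genus at most $g$ with at most $h$ vortices of depth at most $h$}: there is a set $A_z\subseteq\beta(z)$ with $|A_z|\le a$ such that $F-A_z$ is obtained from a graph embedded in a surface $\Sigma$ of Euler genus at most $g$ by gluing at most $h$ vortices, each being a graph of pathwidth at most $h$ whose pages are indexed, in cyclic order, by the vertices of the boundary cycle of a face of the embedding. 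Moreover, by the algorithmic versions of the structure theorem the decomposition $(T,\beta)$ and the apex sets $A_z$ can be produced in time $O(|V(G)|^3)$. So it remains to show that an $a$-almost-embeddable graph \emph{with the apices removed} has layered treewidth bounded by a constant $c=c(g,h)$, with the witnessing tree decomposition and layering computable in linear time.

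For the vortex-free case I would take a graph $G_0$ embedded in a surface $\Sigma$ of Euler genus at most $g$, fix a spanning BFS forest, and let $(V_1,V_2,\ldots)$ be the induced distance layering. Cutting $\Sigma$ open along $O(g)$ noncontractible curves that follow tree edges reduces $G_0$ to a near-planar graph in a disk with only $O(g)$ vertices on the outer boundary, the layers being essentially preserved. For the planar (disk) case one uses the classical fact behind Baker's technique: the BFS tree yields a tree decomposition each of whose bags is a union of boundedly many \emph{vertical paths} of the tree, so each bag meets each layer in at most three vertices; reinserting the $O(g)$ cut vertices into every bag gives layered treewidth at most $2g+3$, in linear time. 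This is exactly the Dujmovi\'c--Morin--Wood bound, which I would cite. To incorporate the vortices, note that a vortex attached to a face $\Phi$ comes with a path decomposition of width at most $h$ whose bags are indexed by the boundary vertices of $\Phi$ in cyclic order; since each bag of the tree decomposition above meets the boundary cycle of $\Phi$ in boundedly many vertices, I would extend each such bag by the at most $h+1$ vortex vertices in the corresponding pages (and graft the vortex's path structure into $T$), assigning each internal vortex vertex to the layer of the boundary vertex whose page it first appears on, which keeps $(V_1,V_2,\ldots)$ a layering. Over all $\le h$ vortices this raises the per-layer width only by an additive $O(h^2)$, proving the required bound $c=c(g,h)$, still in linear time.

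The main obstacle is the vortex step: one must ensure the tree decomposition of the embedded part is chosen so that every bag meets the boundary cycle of each vortex face in boundedly many vertices (otherwise extending by the vortex pages blows up the per-layer width), and that the BFS layering extends consistently into the vortex interiors. Both are handled by insisting on the vertical-path (``tripod'') tree decomposition coming from the BFS tree rather than an arbitrary decomposition, and by using that vortices have bounded depth; this is precisely the technical heart of the Dujmovi\'c--Morin--Wood argument that the proof would lean on. Combining the three steps and taking $A_z$ to be the apex set of the torso of $z$ yields the theorem.
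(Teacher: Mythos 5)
The paper does not prove this theorem; it states it as a known corollary of the Robertson--Seymour structure theorem, citing Grohe and Dujmovi\'c--Morin--Wood for exactly the two ingredients you use (the algorithmic tree decomposition into almost-embeddable torsos with apex sets $A_z$, and the bounded layered treewidth of apex-free almost-embeddable graphs). Your sketch reconstructs that same cited route, so it matches the paper's approach.
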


Corollary~\ref{cor-layertw}, Lemmas~\ref{lemma-sgbas-star} and \ref{lemma-apex}, and Corollary~\ref{cor-cliquesum} imply the following.
\begin{corollary}\label{cor-minor}
Every proper minor-closed class $\GG$ is $(\bigstar,n^3,1)$-thin.
\end{corollary}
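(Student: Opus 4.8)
The plan is to assemble Corollary~\ref{cor-minor} by chaining together the transformation lemmas already established, starting from the base case of bounded-treewidth classes and climbing up through the Robertson--Seymour structure of Theorem~\ref{thm-struct}. First I would fix a proper minor-closed class $\GG$ and obtain, from Theorem~\ref{thm-struct}, the integers $a$ and $c$ and, for each $G\in\GG$, a tree decomposition $(T,\beta)$ together with the sets $\{A_z:z\in V(T)\}$, computable in time $O(|V(G)|^3)$, such that for every $z\in V(T)$, the torso $F_z$ of $z$ satisfies: $F_z-A_z$ has layered treewidth at most $c$ and $|A_z|\le a$.

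Next I would build the chain of thin classes. By Observation~\ref{obs-tw-thin}, any class of graphs of bounded treewidth is $(\Ss,n,1)$-thin; combined with Corollary~\ref{cor-layertw}, the class $\GG_1$ of graphs of layered treewidth at most $c$ (with the decomposition and layering computable in linear time, as guaranteed by Theorem~\ref{thm-struct}) is $(\Ss,n,1)$-thin. By Lemma~\ref{lemma-sgbas-star}, $\GG_1$ is therefore $(\bigstar,n,1)$-thin. Now apply Lemma~\ref{lemma-apex} with the bound $a$ on the size of the deleted set: the class $\GG_2$ of graphs $F$ such that there is $A\subseteq V(F)$ with $|A|\le a$ and $F-A\in\GG_1$ (the set $A$ being found in linear time) is $(\bigstar,n,1)$-thin. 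Since the class of torsos arising in Theorem~\ref{thm-struct} is contained in $\GG_2$, we may as well take $\GG_2$ to be closed on induced subgraphs — replace it by the (still $(\bigstar,n,1)$-thin, by Observation~\ref{obs-sg}) closure, or simply note that the apex construction passes through induced subgraphs — so that Corollary~\ref{cor-cliquesum} applies. Finally, the class of graphs with a tree decomposition whose torsos all lie in $\GG_2$, with the decomposition found in time $O(n^3)$, is $(\bigstar,n^3,1)$-thin by Corollary~\ref{cor-cliquesum}. Since every $G\in\GG$ admits exactly such a decomposition by Theorem~\ref{thm-struct}, we conclude $\GG$ is $(\bigstar,n^3,1)$-thin, with the $n^3$ coming from the cost of finding the Robertson--Seymour decomposition.

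The only genuinely delicate point is bookkeeping the polynomial $p$ and the subpolynomial function $q$ through the composition: each of Corollary~\ref{cor-layertw}, Lemma~\ref{lemma-sgbas-star}, Lemma~\ref{lemma-apex}, and Corollary~\ref{cor-cliquesum} preserves the pair $(p,q)$ up to the additive term in $p$ coming from the cost of finding the relevant decomposition or apex set, so one must check that the dominant term is the $O(n^3)$ from Theorem~\ref{thm-struct} and that $q\equiv 1$ is preserved throughout (it is, since none of these lemmas introduce a nontrivial $q$ — the sizes of all the overlay systems stay polynomial in $q(n)=1$, i.e.\ bounded by constants depending only on $r$ and $k$). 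A secondary nuisance is matching the closed-on-induced-subgraphs hypothesis required by Corollary~\ref{cor-cliquesum}: one should verify that the class of torsos appearing in Theorem~\ref{thm-struct}, or a superclass of it that is induced-subgraph-closed and still $(\bigstar,n,1)$-thin, is what feeds into the clique-sum step. Neither of these is conceptually hard — the real content is entirely in the lemmas already proved — so the proof is essentially a one-line citation chain, and I expect the write-up to be just the sentence ``Corollary~\ref{cor-layertw}, Lemmas~\ref{lemma-sgbas-star} and~\ref{lemma-apex}, and Corollary~\ref{cor-cliquesum} imply the result,'' with perhaps a clause noting that the $n^3$ absorbs all lower-order running times.
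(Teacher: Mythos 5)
Your proposal is correct and follows exactly the paper's argument: the paper's entire proof is the citation chain ``Corollary~\ref{cor-layertw}, Lemmas~\ref{lemma-sgbas-star} and \ref{lemma-apex}, and Corollary~\ref{cor-cliquesum} imply the following,'' applied to the structure given by Theorem~\ref{thm-struct}, just as you predicted. Your additional remarks on tracking $(p,q)$ and the induced-subgraph-closure hypothesis are reasonable fillings-in of details the paper leaves implicit.
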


\subsection{Strongly sublinear separators}

A \emph{separation} of a graph $G$ is a pair $(A,B)$ of edge-disjoint subgraphs of $G$ such that $A\cup B=G$,
and the \emph{size} of the separation is $|V(A)\cap V(B)|$.  Observe that $G$ has no edge with one end with $V(A)\setminus V(B)$
and the other end in $V(B)\setminus V(A)$, and thus the set $V(A)\cap V(B)$ separates $V(A)\setminus V(B)$ from
$V(B)\setminus V(A)$ in $G$.
A separation $(A,B)$ is \emph{balanced} if $|V(A)\setminus V(B)|\le 2|V(G)|/3$ and $|V(B)\setminus V(A)|\le 2|V(G)|/3$.
Note that $(G,G-E(G))$ is a balanced separation.
For a graph class $\GG$, let $s_\GG(n)$ denote the smallest nonnegative integer such that every graph in $\GG$ with at most $n$ vertices
has a balanced separation of size at most $s_\GG(n)$.  We say that $\GG$ has \emph{sublinear separators} if
$\lim_{n\to\infty} \frac{s_\GG(n)}{n}=0$, and that $\GG$ has \emph{strongly sublinear separators} if
there exist constants $c\ge 1$ and $0\le \delta<1$ such that $s_\GG(n)\le cn^\delta$ for every $n\ge 0$.

We now aim to show that subgraph-closed graph classes with strongly sublinear separators and bounded maximum degree
are $\Ss$-thin.  We start by obtaining layerings of such graphs from their tree decompositions.
We use the following fact about tree decompositions of graphs of bounded maximum degree.
\begin{lemma}[Dvo\v{r}\'ak~\cite{twd}]\label{lemma-few}
Let $G$ be a graph of maximum degree at most $\Delta$.  If $G$ has treewidth at most $t$, then $G$ has a rooted
tree decomposition $(T,\beta)$ with bags of size at most $12(t+1)(\Delta+1)$ such that every vertex of $T$ has at
most two sons and for each $v\in V(G)$, the subtree $T[\{u:v\in \beta(u)\}]$ has depth at most $1+4\log(\Delta+1)$.
\end{lemma}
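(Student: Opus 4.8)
The plan is to construct $(T,\beta)$ from scratch by a recursive balanced-separation argument; the hypothesis $\tw(G)\le t$ enters only through the standard fact that for every induced subgraph $G'$ of $G$ and every weighting $w$ of $V(G')$ there is a set $X\subseteq V(G')$ with $|X|\le t+1$ such that every component of $G'-X$ has $w$-weight at most $w(V(G'))/2$, while the hypothesis $\Delta(G)\le\Delta$ enters only to bound how long a vertex must remain in the decomposition. The recursion is on pairs $(R,S)$, where $R\subseteq V(G)$ and $S$ is the set of vertices of $R$ having a neighbor outside $R$, under the invariant that $|S|$ is at most a fixed constant times $t+1$; the call returns a rooted tree decomposition of $G[R]$ whose root bag contains $S$, with bags of size at most $12(t+1)(\Delta+1)$ and every vertex occupying a subtree of depth at most $1+4\log(\Delta+1)$. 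The top-level call is $(V(G),\emptyset)$.

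Each call carries out one \emph{phase}: an \emph{expand} step followed by $O(\log(\Delta+1))$ \emph{reduce} steps. The expand step makes $N_G[S]\cap R$ the root bag of the phase; this covers every edge of $G[R]$ incident with $S$, so that $S$ is automatically absent from all lower bags, and it replaces the ``boundary still to be processed'' by a subset of $N_G[S]$, of size at most $(\Delta+1)|S|$. Each reduce step then applies the separator fact to the current region with unit weight on the current boundary, puts $X$ (of size $\le t+1$) together with the current boundary into a bag, and passes each resulting component down with its own induced boundary, which now carries at most half of the previous one; since the expand step inflated the boundary by a factor at most $\Delta+1$ and each reduce step shrinks it by a constant factor, after $O(\log(\Delta+1))$ reduce steps the boundary is back within the invariant and one recurses on the remaining pieces. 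Because every bag produced lies inside an expand bag together with at most one extra separator, the $12(t+1)(\Delta+1)$ bound follows once the constant in the invariant is fixed.

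For the per-vertex depth, note that a vertex $v$ first enters a bag during some phase --- as a member of an expand set $N_G[S]$, or as a vertex of a separator $X$ in a reduce step --- and that $v$ is then handed to the \emph{next} phase in its boundary unless all of $N_G[v]$ has already been covered; but the expand step of that next phase introduces $N_G[S']\supseteq N_G[v]$, after which $v$ has no uncovered neighbor and hence no neighbor in the still-unprocessed part of the graph, so $v$ may be deleted from every lower bag. Thus $v$ appears only in the bags of a single phase together with the opening bag of the next, giving a subtree of depth at most $1+4\log(\Delta+1)$ once the phase length is tuned; the degenerate case $\Delta\le 2$, where $G$ is a disjoint union of paths and cycles, is handled separately. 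Binarity is folded into the recursion: whenever a separator step leaves more than two component-children, one splits them into two groups of roughly balanced boundary-weight and passes each group down with the (only slightly larger) induced boundary, costing one extra reduce step per branching and so staying within the $O(\log(\Delta+1))$ budget.

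The main obstacle will be the joint bookkeeping: one must fix the constant in the boundary-size invariant, the number of reduce steps per phase, and the weightings so that simultaneously the bags stay below $12(t+1)(\Delta+1)$, a phase spans at most $1+4\log(\Delta+1)$ levels, and no vertex is active for more than one phase plus the opening step of the next. One must also check that ``forgetting'' a vertex is sound --- that once $N_G[v]$ is covered, deleting $v$ from all lower bags keeps the occurrence sets connected and every edge covered --- and that the two-way grouping used to make $T$ binary does not quietly restore a long occurrence subtree for some boundary vertex. Everything else is a routine check of the tree-decomposition axioms.
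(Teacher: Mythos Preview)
This lemma is not proved in the present paper; it is quoted from Dvo\v{r}\'ak's earlier work~\cite{twd}, with only the added remark that the argument there yields an $O(t\Delta|V(G)|^2)$-time algorithm given a width-$t$ tree decomposition. There is therefore no in-paper proof to compare your proposal against.

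That said, your plan is essentially the standard argument for this result and matches the construction in~\cite{twd}. The alternation of one \emph{expand} step---replacing the boundary $S$ by $N_G[S]\cap R$, which covers every edge of $G[R]$ incident with $S$ and so allows $S$ to be forgotten---with $O(\log(\Delta+1))$ \emph{reduce} steps that shrink the boundary via $(t{+}1)$-sized weighted separators is exactly the mechanism that produces both the $O((t{+}1)(\Delta{+}1))$ bag bound and the $O(\log(\Delta{+}1))$ per-vertex depth. Your treatment of binarity by two-way balanced grouping of the separator's components, and your observation that any vertex is finished after at most one further expand (because its entire closed neighbourhood then lies in a single bag), are the right ingredients. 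The caveat you flag---that the constants $12$ and $4$ only emerge once one fixes the boundary invariant (something in the range $8(t{+}1)$ to $12(t{+}1)$ works), accounts for the additive $(t{+}1)$ leakage at each reduce step, and notes that the effective reduction factor is $2/3$ rather than $1/2$ once grouping is folded in---is accurate, and it really is just bookkeeping rather than a missing idea.
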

The proof of this lemma gives an algorithm to find such a decomposition in time $O(t\Delta|V(G)|^2)$, assuming that a tree decomposition
of $G$ of width at most $t$ is given.

Let $\GG$ be a class of graphs closed on induced subgraphs.  If $\GG$ has strongly sublinear separators,
then each graph $G\in \GG$ has sublinear treewidth; a tree decomposition of sublinear width can be constructed
recursively as follows. Find a balanced separator $(A,B)$ of order at most $s_{\GG}(|V(G)|)$,
find tree decompositions of $A-V(B)$ and $B-V(A)$ inductively, and obtain a tree decomposition of $G$ by joining
the two by an arbitrary edge and adding $V(A\cap B)$ to all the bags.  Hence, we obtain the following.
\begin{observation}\label{obs-twss}
If $s$ is a positive integer and every induced subgraph of an $n$-vertex graph $G$ has a balanced separator of order at most $s$,
then $G$ has a tree decomposition of width at most $s\lceil \log_{3/2} n\rceil$.  Furthermore, if a balanced separator of order
at most $s$ in each induced subgraph $G'$ of $G$ can be found in time $O(p(|V(G')|))$ for a positive non-decreasing polynomial $p$ of degree greater than $1$, then such a
decomposition can be constructed in time $O(p(n))$.
\end{observation}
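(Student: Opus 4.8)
The plan is to formalize the recursive construction sketched just before the statement and then verify its width and running time by induction on the number of vertices. Given any induced subgraph $G'$ of $G$ on $m\ge 1$ vertices (to which the separator hypothesis, and its consequences for the induced subgraphs of $G'$, apply), the construction is: if $m\le s$, output the single-bag decomposition with bag $V(G')$; otherwise find a balanced separation $(A,B)$ of $G'$ of size at most $s$, set $C=V(A)\cap V(B)$, recursively construct decompositions of the induced subgraphs on $V(A)\setminus V(B)$ and on $V(B)\setminus V(A)$ (dropping a branch whose part is empty), join the two resulting trees by an arbitrary new edge, and add $C$ to every bag. Since the separation is balanced, each part has at most $2m/3<m$ vertices, so the recursion is well-founded, and by an easy induction the resulting tree has at most $m$ nodes.

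First I would check that the output $(T,\beta)$ is a tree decomposition of $G'$. Each vertex of $G'$ lies in a bag (a vertex of one of the two parts by induction, a vertex of $C$ in every bag), and the bags containing a fixed vertex induce a connected subtree (a subtree of one of the two smaller trees by induction, or all of $T$ for a vertex of $C$). An edge with both ends in the same part is covered by induction; an edge inside $C$ lies in every bag; and the only remaining case, by the separation property, is an edge with one end $u\in C$ and the other end $v$ in, say, $V(A)\setminus V(B)$, which occurs in some bag that contains $v$ and, after adding $C$, also $u$; there are no edges between the two parts. For the width: by induction a bag coming from the part of size $m_A\le 2m/3$ has at most $s\lceil\log_{3/2} m_A\rceil+1$ vertices, hence at most $s(\lceil\log_{3/2}(2m/3)\rceil+1)+1$ after adding $C$; as $\log_{3/2}(2m/3)+1=\log_{3/2} m$ we get $\lceil\log_{3/2}(2m/3)\rceil+1=\lceil\log_{3/2} m\rceil$, so the width is at most $s\lceil\log_{3/2} m\rceil$ (the base case $m\le s$ being immediate since then a single bag has width $m-1<s\le s\lceil\log_{3/2} m\rceil$). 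Taking $m=n$ gives the first assertion.

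For the running time, let $T(m)$ denote the cost of a call on a subgraph with $m$ vertices. In any call that recurses we have $m>s$, and the call's nonrecursive work is $O(p(m))$ to find the separator plus $O(s\cdot m)=O(m^2)$ to insert $C$ into the (at most $m$) bags, which is $O(p(m))$ since $\deg p\ge 2$. Hence $T(m)\le T(m_A)+T(m_B)+O(p(m))$ with $m_A,m_B\le 2m/3$ and $m_A+m_B\le m$. Writing $p$ as a polynomial of degree $d\ge 2$ with positive leading coefficient, one checks using $m_A,m_B\le 2m/3$ that $p(m_A)+p(m_B)\le\lambda\, p(m)$ for some constant $\lambda<1$ and all sufficiently large $m$ (the leading term dominates, lower-order terms are absorbed, and smaller $m$ fall under the base case), so the ansatz $T(m)\le K\,p(m)$ closes for $K$ large enough and yields $T(n)=O(p(n))$.

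I do not expect a genuine obstacle here; the only step that needs care is this last recurrence, where the hypothesis that $p$ has degree greater than $1$ is used exactly to make the geometrically shrinking subproblem sizes collapse the total running time to $O(p(n))$ rather than, say, $O(p(n)\log n)$.
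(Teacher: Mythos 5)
Your proposal is correct and follows essentially the same route as the paper, which only sketches this observation via the very recursion you formalize (split along a balanced separator, recurse on the two sides, join the trees and add the separator to every bag); your width calculation using $\log_{3/2}(2m/3)=\log_{3/2}m-1$ and your use of the degree-$\ge 2$ hypothesis to collapse the recurrence to $O(p(n))$ are exactly the intended justifications.
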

Dvo\v{r}\'ak and Norin~\cite{dvorak2016strongly} proved that every subgraph-closed class of graphs with strongly sublinear separators has polynomial
expansion.  Using algorithm of Plotkin et al.~\cite{plotkin}, a balanced separator of strongly sublinear size can be obtained in
(less than) quadratic time (see Dvo\v{r}\'ak and Norin~\cite[Corollary 2]{dvorak2016strongly} for details).
Hence, we obtain the following.
\begin{corollary}\label{cor-twss}
Let $\GG$ be a subgraph-closed class of graphs with strongly sublinear separators.  Then there exists a nonnegative real number $\delta<1$
such that for an $n$-vertex graph $G\in\GG$, a tree decomposition of width $O(n^\delta)$ can be found in time $O(n^2)$.
\end{corollary}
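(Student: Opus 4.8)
The plan is to obtain the tree decomposition directly from Observation~\ref{obs-twss}, once we supply it with a fast algorithm producing strongly sublinear balanced separators not just in $G$ but in all of its induced subgraphs. First I would assemble the separator algorithm: since $\GG$ is subgraph-closed and has strongly sublinear separators, the theorem of Dvo\v{r}\'ak and Norin~\cite{dvorak2016strongly} shows that $\GG$ has polynomial expansion, and then the algorithm of Plotkin et al.~\cite{plotkin}, applied as in~\cite[Corollary~2]{dvorak2016strongly}, yields constants $c\ge 1$ and $\delta_1<1$ together with an algorithm that, given any $m$-vertex graph $G'\in\GG$, finds a balanced separation of $G'$ of size at most $cm^{\delta_1}$ in time $O(m^2)$. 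Because $\GG$ is closed under taking subgraphs, and hence under taking induced subgraphs, this algorithm is applicable to every induced subgraph of any fixed $G\in\GG$, and all the auxiliary graphs arising in the recursion behind Observation~\ref{obs-twss} are such induced subgraphs.

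Next I would invoke Observation~\ref{obs-twss} itself. Fix an $n$-vertex graph $G\in\GG$ and set $s=\lceil cn^{\delta_1}\rceil$. Every induced subgraph $G'$ of $G$ has at most $n$ vertices, so it has a balanced separation of size at most $c\,|V(G')|^{\delta_1}\le cn^{\delta_1}\le s$, which the algorithm above finds in time $O(|V(G')|^2)$. Taking $p(x)=x^2$, a positive non-decreasing polynomial of degree $2>1$, Observation~\ref{obs-twss} then gives a tree decomposition of $G$ of width at most $s\lceil\log_{3/2}n\rceil$, constructed in time $O(p(n))=O(n^2)$.

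It remains to absorb the logarithmic factor. We have $s\lceil\log_{3/2}n\rceil=O(n^{\delta_1}\log n)$, and for any fixed real $\delta$ with $\delta_1<\delta<1$ one has $\log n=O(n^{\delta-\delta_1})$, hence $n^{\delta_1}\log n=O(n^{\delta})$. Thus the constructed tree decomposition has width $O(n^{\delta})$ for such a $\delta<1$, which proves the statement.

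I do not anticipate a genuine difficulty; the two points that need attention are (i) that the ``less than quadratic'' running time of the separator subroutine is bounded above by an honest polynomial of degree strictly greater than $1$, which is exactly what Observation~\ref{obs-twss} requires for the recursion to telescope, and (ii) that the bound on separator size must be made uniform over all induced subgraphs rather than depending on their individual orders, which is handled simply by phrasing everything in terms of $n$.
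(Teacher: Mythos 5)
Your proposal is correct and follows exactly the route the paper takes: it derives the separator subroutine from polynomial expansion via Dvo\v{r}\'ak--Norin and Plotkin et al., feeds it into Observation~\ref{obs-twss} with a uniform bound $s=O(n^{\delta_1})$, and absorbs the $\lceil\log_{3/2}n\rceil$ factor by slightly increasing the exponent to some $\delta<1$. The paper leaves these details implicit in the paragraph preceding the corollary; your write-up just makes them explicit.
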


We now combine the results to obtain layerings where unions of a small number of consecutive layers
induce subgraphs whose all connected components are small.

\begin{corollary}\label{cor-dglayering}
Let $\GG$ be a subgraph-closed class of graphs with strongly sublinear separators and bounded maximum degree.
There exists a positive integer $c$ and a non-negative real number $\delta<1$ as follows.
Every $n$-vertex graph $G\in \GG$ has a layering such that the union of any $b$ consecutive
layers induces a subgraph whose components have size at most $c^bn^\delta$.  Furthermore, such a layering
can be found in time $O(n^3)$.
\end{corollary}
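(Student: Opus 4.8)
The plan is to read off the layering from a carefully structured tree decomposition, combining the two cited results. Since $\GG$ is subgraph-closed with strongly sublinear separators, Corollary~\ref{cor-twss} provides a fixed $\delta<1$ such that every $n$-vertex $G\in\GG$ has a tree decomposition of width $t=O(n^\delta)$ computable in time $O(n^2)$; this is the $\delta$ I would use in the statement. As $\GG$ additionally has maximum degree at most some constant $\Delta$, feeding this decomposition into Lemma~\ref{lemma-few} yields a rooted tree decomposition $(T,\beta)$ of $G$ in which every node has at most two children, the bags have size at most $12(t+1)(\Delta+1)\le c'n^\delta$ for a constant $c'$, and for each vertex $v$ the subtree $T_v:=T[\{z:v\in\beta(z)\}]$ spans at most $D$ distinct depths of $T$, where $D$ is a constant depending only on $\Delta$ (one may take $D=2+\lfloor 4\log(\Delta+1)\rfloor$). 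This conversion runs in time $O(t\Delta n^2)=O(n^{2+\delta})\subseteq O(n^3)$, which dominates the running time of the whole construction.

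Next I would define the layering by depth in $T$. In the rooted tree $T$, let $\lambda(v)$ be the depth of the unique topmost node of $T_v$, and place $v$ into layer $V_{1+\lfloor\lambda(v)/D\rfloor}$, so that each layer collects the vertices whose topmost node lies in one block of $D$ consecutive depths. For an edge $uv\in E(G)$, some bag contains both endpoints, so its node lies in $T_u\cap T_v$, whence $|\lambda(u)-\lambda(v)|\le D-1$; therefore $u$ and $v$ fall into the same layer or into two consecutive ones, so $(V_1,V_2,\ldots)$ is indeed a layering. It is produced from $(T,\beta)$ in time linear in the size of the decomposition, which is within $O(n^3)$.

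It remains to bound the components. Fix $b$ consecutive layers; their union is the set $S$ of vertices whose value $\lambda(\cdot)$ lies in a fixed interval of $bD$ consecutive depths of $T$. Let $C$ be a component of $G[S]$ and put $U:=\bigcup_{v\in C}T_v$. Because $C$ is connected in $G$ and each edge of $G[C]$ is witnessed by a bag, hence by a node common to the corresponding $T_v$'s, the node set $U$ induces a connected subtree of $T$; and since each $v\in C$ has $\lambda(v)$ in an interval of length $bD$ while $T_v$ reaches at most $D-1$ further levels, $U$ meets at most $bD+D-1$ distinct depths of $T$. A connected subtree of a binary tree that meets $M$ distinct depths has its unique topmost node as a common ancestor of all its nodes, and hence contains at most $2^M-1$ of them, so $|U|\le 2^{bD+D-1}=2^{D-1}(2^D)^b$. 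Since $C\subseteq\bigcup_{z\in U}\beta(z)$ and every bag has size at most $c'n^\delta$, this gives $|C|\le 2^{D-1}c'(2^D)^b n^\delta$, which is at most $c^b n^\delta$ once $c$ is fixed as a large enough integer constant depending only on $\GG$ (any integer $c$ with $c^b\ge 2^{D-1}c'(2^D)^b$ for all $b\ge1$ works, e.g.\ $c=2^D\lceil 2^{D-1}c'\rceil$). Together with the $O(n^3)$ time bound this proves the corollary.

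The crux is really the construction of $(T,\beta)$: it is the bounded-depth property guaranteed by Lemma~\ref{lemma-few}, which relies on the bounded-degree hypothesis, that turns the naive ``layer by depth in the tree'' idea into an actual layering and simultaneously keeps every relevant fragment of $T$ shallow; once this is in hand, the layering is immediate and the component bound is just the remark that a shallow connected piece of a binary tree is small. Without the degree bound one would get only a partition into parts between which edges may jump arbitrarily far, and the argument would collapse.
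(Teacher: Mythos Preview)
Your proof is correct and follows essentially the same approach as the paper: obtain a bounded-depth, binary rooted tree decomposition via Corollary~\ref{cor-twss} and Lemma~\ref{lemma-few}, layer the vertices by the depth of the topmost bag containing them (in blocks of constant size), and bound components by observing that the relevant bags form a shallow connected binary subtree. Your write-up in fact makes a few steps more explicit than the paper's, in particular the verification that $(V_1,V_2,\ldots)$ is a layering and the connectedness of $U$, but the argument is the same.
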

\begin{proof}
By Corollary~\ref{cor-twss} and Lemma~\ref{lemma-few}, there exist positive integers $a$ and $c'$ and a non-negative real number $\delta<1$
such that for any $n$-vertex graph $G\in\GG$, we can in time $O(n^3)$ find a tree decomposition $(T,\beta)$ of $G$ of width at most $c'n^\delta$,
where $T$ is a rooted tree in which each vertex has at most two sons and for each $v\in V(G)$, the subtree $T[\{u:v\in \beta(u)\}]$ has depth at most $a$.
Let $U_0=\emptyset$.
For $i\ge 1$, let $U_i$ be the union of $\beta(z)$ for all vertices $z\in V(T)$ whose distance from the root of $T$ is at least $(i-1)a$ and less than $ia$,
and let $V_i=U_i\setminus U_{i-1}$.  Let $d$ be maximum integer such that $V_d$ is non-empty.  Observe that $(V_1,\ldots,V_d)$ is a layering of $G$.

For $1\le i\le d$ and $1\le b\le d-i+1$, let $V_{i,b}=V_i\cup V_{i+1}\cup \ldots\cup V_{i+b-1}$ and let $\beta_{i,b}$ be the function defined by $\beta_{i,b}(z)=\beta(z)\cap V_{i,b}$.
Let $X_{i,b}$ be the set of vertices $z\in V(T)$ such that $\beta_{i,b}(z)\neq\emptyset$.  Note that $(T,\beta_{i,b})$ is a tree decomposition of $G[V_{i,b}]$,
and that each component of $T[X_{i,b}]$ is a rooted tree of depth at most $a(b+2)$.  Consequently, each component of $G[V_{i,b}]$ has size at most
$2^{a(b+2)}c'n^\delta$, and thus Corollary~\ref{cor-dglayering} holds with $c=c'2^{a+2}$.
\end{proof}

We now proceed similarly to Lemma~\ref{lemma-layering}.

\begin{lemma}\label{lemma-sglaycom}
Let $\GG$ be a subgraph-closed class of graphs with strongly sublinear separators and bounded maximum degree.
Let $c$ and $\delta$ be the corresponding constants from Corollary~\ref{cor-dglayering}.
Let $r$, $k$, and $t$ be positive integers, and let $\alpha$ be a positive real number.
Let $n_1=t$, $s_1=1$, and $\theta_1=1$, and for $i\ge 2$, let $s_i=\lceil 4\alpha^{-1}(1+\alpha)^irk\rceil$, $n_i=\Bigl(\frac{n_{i-1}}{c^{s_i+2r}}\Bigr)^{1/\delta}$, and $\theta_i=\theta_{i-1}+4r/s_i$.
For an $n$-vertex graph $G\in \GG$ and every positive integer $i$ such that each component of $G$ has at most $n_i$ vertices,
we can in time $O(n^3)$ find a system of $\Ss_r$-overlays of $G$ of thickness at most $\theta_i$,
treewidth at most $t$, and size exactly $s_1\cdot\ldots\cdot s_i$.  Furthermore, $\theta_i\le 1+1/k$.
\end{lemma}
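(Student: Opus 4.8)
The plan is to proceed by induction on $i$, using Lemma~\ref{lemma-samesize}, Lemma~\ref{lemma-comp} and Corollary~\ref{cor-dglayering} together with an argument modeled on Lemma~\ref{lemma-layering}. The base case $i=1$ is immediate: every component of $G$ has at most $n_1=t$ vertices, so $G$ itself has treewidth at most $t-1<t$, and the trivial $r$-neighborhood overlay $(G,\mathrm{id},r)$ from Observation~\ref{obs-tw-thin} (more precisely its single-element system) has thickness $\theta_1=1$, treewidth at most $t$, and size $s_1=1$. For the inductive step, suppose the claim holds for $i-1$ and that every component of the $n$-vertex graph $G\in\GG$ has at most $n_i$ vertices. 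By Lemma~\ref{lemma-comp} it suffices to treat a single component, so assume $G$ is connected with at most $n_i$ vertices.

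First I would apply Corollary~\ref{cor-dglayering} to obtain, in time $O(n^3)$, a layering $(V_1,\ldots,V_d)$ of $G$ such that the union of any $b=s_i+2r$ consecutive layers induces a subgraph whose components have size at most $c^{s_i+2r}n^\delta\le c^{s_i+2r}n_i^\delta=n_{i-1}$. Now set $\Delta=s_i$ (the chosen separation width); if $d\le\Delta+2r$ then $G$ itself has all components of size at most $n_{i-1}$ and we are done by the induction hypothesis applied to $G$, padding the resulting system with duplicates to reach size exactly $s_1\cdots s_i$ (this at most doubles, hence the factor hidden in the $\lceil\cdot\rceil$ and the slack in $\theta_i$ absorbs it — alternatively choose the padding as in Lemma~\ref{lemma-samesize}). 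Otherwise, for each integer $j$ let $G_j=G[V_{j-r}\cup\cdots\cup V_{j+\Delta+r-1}]$; each $G_j$ has all components of size at most $n_{i-1}$, so by the induction hypothesis (and Lemma~\ref{lemma-samesize}, to make all the systems the same size $s_1\cdots s_{i-1}$) we obtain systems $\LL_j$ of $\Ss_r$-overlays of the $G_j$ of thickness at most $\theta_{i-1}$ (a tiny shrink via Lemma~\ref{lemma-samesize} if needed) and treewidth at most $t$, all of equal size, in total time $O(n^3)$ since each vertex lies in only $O(\Delta)$ of the $G_j$.

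Next I would run the layering-composition machinery exactly as in the proof of Lemma~\ref{lemma-layering}: for $x\in V(H_j)$ define the cut-off value $m_{L}(x)$ depending on how far the layer of $f(x)$ is outside $[j,j+\Delta-1]$, replace $\ell$ by $\ell'(x)=\min(\ell(x),m_L(x))$, and use Observation~\ref{obs-cutoff} to check the modified overlays are walk-preserving as overlays of $G$. For each residue $b\in\{0,\ldots,\Delta-1\}$, form $\LL^b=\bigcirc_{j\bmod\Delta=b}\LL'_j$; each is an $\Ss_r$-overlay system of $G$ of treewidth at most $t$, and a vertex in layer $V_l$ is covered twice only when $l$ lies in a window of $2r$ residues $b$, so $\theta_{\LL^b}(v)\le 2+(\theta_{i-1}-1)$ on that window and $\le\theta_{i-1}$ off it. Taking $\LL=\bigcup_{b=0}^{\Delta-1}\LL^b$ averages this out: each $l$ lies in exactly $2r$ of the $\Delta$ windows, so $\theta_\LL(v)\le\theta_{i-1}+\frac{2r}{\Delta}\cdot 1=\theta_{i-1}+\frac{2r}{s_i}$. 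Here I would be a little careful about the bookkeeping: the definition gives $\theta_i=\theta_{i-1}+4r/s_i$, so there is a comfortable factor of $2$ of slack, which covers both the "2 versus $3/2$" overcount in $\LL^b$ and the mild thickness shrink used to invoke Lemma~\ref{lemma-samesize}. The size of $\LL$ is $\Delta\cdot s_1\cdots s_{i-1}=s_i\cdot s_1\cdots s_{i-1}=s_1\cdots s_i$, as required; again, if $d\le\Delta+2r$ I would instead pad to this exact size. Finally, the bound $\theta_i\le 1+1/k$ follows from $\theta_i=1+\sum_{j=2}^i 4r/s_j\le 1+\sum_{j=2}^\infty \alpha/\bigl((1+\alpha)^j k\bigr)=1+\frac1k\cdot\frac{\alpha}{(1+\alpha)^2}\cdot\frac{1+\alpha}{\alpha}<1+\frac1k$, using $s_j\ge 4\alpha^{-1}(1+\alpha)^j rk$ so that $4r/s_j\le \alpha(1+\alpha)^{-j}/k$, a convergent geometric series.

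**The main obstacle.** The hard part is not any single step but getting the thickness accounting to close cleanly across all levels simultaneously: each level contributes an additive $4r/s_i$ and uses a multiplicative-looking shrink to reconcile the inductive hypothesis with Lemma~\ref{lemma-samesize}, and one must verify that the geometric choice $s_i=\lceil 4\alpha^{-1}(1+\alpha)^irk\rceil$ makes the total stay below $1+1/k$ while the sizes $n_i$ (shrinking like an iterated $\delta$-th root) still eventually drop below the size of any component of any fixed graph. I would double-check that $n_i\to\infty$ is false — i.e. that for every fixed $n$ there is some $i$ with $n_i\ge$ (max component size), which is what makes the lemma non-vacuous — but since $\delta<1$ and each step replaces $n_{i-1}$ by roughly $n_{i-1}^{1/\delta}/c^{O(s_i)}$, the sequence $n_i$ is increasing once it exceeds a constant threshold, so any graph is handled at a level $i$ bounded in terms of $n$ alone; tracking this dependence is the delicate quantitative point, though it is not needed for the statement as phrased (which only asserts the conclusion for those $i$ with all components of size $\le n_i$).
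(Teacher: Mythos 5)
Your proposal follows essentially the same route as the paper: induction on $i$, reduction to a single connected component via the $\circ$ operation, the layering from Corollary~\ref{cor-dglayering} with window width $s_i+2r$ so that each $G_j$ has components of size at most $n_{i-1}$, the cut-off/composition machinery of Lemma~\ref{lemma-layering}, and the geometric series bounding $\theta_i$ by $1+1/k$. The only nitpick is that a doubly-covered vertex contributes up to $2\theta_{i-1}$ (not $1+\theta_{i-1}$) in $\LL^b$, so the average is $\theta_{i-1}(1+2r/s_i)$ rather than $\theta_{i-1}+2r/s_i$; but as you note, $\theta_{i-1}\le 2$ makes this at most $\theta_{i-1}+4r/s_i=\theta_i$, which is exactly how the paper closes the estimate.
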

\begin{proof}
We prove the claim by induction on $i$.  The case $i=1$ is trivial by Observation~\ref{obs-tw-thin}, hence assume that $s\ge 2$.
Consider an $n$-vertex graph $G\in \GG$ such that each component of $G$ has at most $n_i$ vertices.
Suppose that $G$ has $m\ge 2$ connected components and that we can find systems $\LL_1$, \ldots, $\LL_m$ of $\Ss_r$-overlays 
of the components of $G$ with the described properties; then $\bigcirc_{j=1}^m \LL_j$ is a system of $\Ss_r$-overlays of $G$ of thickness at most $\theta_i$,
treewidth at most $t$, and size $s_1\cdot\ldots\cdot s_i$.  Hence, we can without loss of generality assume that $G$
is connected, and consequently $n\le n_i$.

Let $(V_1,\ldots, V_d)$ be the layering of $G$ obtained by Corollary~\ref{cor-dglayering}.
For integers $j$ such that $j\le 0$ or $j>d$, let $V_j=\emptyset$.  For an integer $j$, let $G_j$ be the subgraph of $G$ induced by
$V_{j-r}\cup \ldots\cup V_{j+s_i+r-1}$.  Each component of $G_j$ has size at most $c^{s_i+2r}n_i^\delta=n_{i-1}$.
For each $j$ such that $V(G_j)\neq\emptyset$, let $\LL_j$ be the system of $\Ss_r$-overlays of $G_j$ of thickness at most $\theta_{i-1}$,
treewidth at most $t$, and size exactly $s_1\cdot\ldots\cdot s_{i-1}$, obtained by the induction hypothesis.
Let $\LL'_j$ be the system of overlays of $G$ obtained from $\LL_j$ as in the proof of Lemma~\ref{lemma-layering}.
For $0\le b\le s_i-1$, let $$\LL^b=\bigcirc_{\substack{j\bmod \Delta=b,\\V(G_j)\neq\emptyset}} \LL'_j,$$
and let $\LL=\bigcup_{b=0}^{s_i-1} \LL^b$.  As in the proof of Lemma~\ref{lemma-layering}, we argue that
$\LL$ is a system of $\Ss_r$-overlays of $G$ of thickness at most $\theta_{i-1}(1+2r/s_i)\le \theta_{i-1}+4r/s_i=\theta_i$,
treewidth at most $t$ and size $s_1\cdot\ldots\cdot s_i$.

The last inequality uses the fact that
\begin{align*}
\theta_i&\le 1+4r\sum_{j=1}^\infty \frac{1}{s_j}\le 1 + \frac{\alpha}{k} \sum_{j=1}^\infty (1+\alpha)^{-j}\\
&=1+\frac{\alpha}{k}\cdot\frac{1}{\alpha}=1+1/k\le 2.
\end{align*}
\end{proof}

The thinness of subgraph-closed class of graphs with strongly sublinear separators and bounded maximum degree
now follows simply by a careful choice of parameters.  Note that $e^{(\log\log n)^2}=O(e^{\varepsilon \log n})=O(n^\varepsilon)$
for every $\varepsilon>0$, and thus the function $(\log\log n)^2$ is subpolynomial.

\begin{corollary}\label{cor-sublin}
Every subgraph-closed class $\GG$ of graphs with strongly sublinear separators and bounded maximum degree is $(\Ss,n^3,e^{(\log\log n)^2})$-thin.
\end{corollary}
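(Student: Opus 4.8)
The plan is to derive the corollary directly from Lemma~\ref{lemma-sglaycom}: I would fix the free parameters $\alpha$ and $t$ of that lemma suitably and then, for each input graph $G$ on $n$ vertices, run the lemma with an index $i$ that is of order $\log\log n$. Fix positive integers $r$ and $k$, and let $c\ge1$ and $\delta\in[0,1)$ be the constants supplied by Corollary~\ref{cor-dglayering}; if $\delta=0$ then $s_\GG(n)\le cn^0\le cn^{1/2}$, so we may assume $0<\delta<1$. Choose a real number $\beta$ with $1<\beta<1/\delta$ (so $\delta\beta<1$) and put $\alpha=\beta-1>0$; then in the notation of Lemma~\ref{lemma-sglaycom} we have $s_i\le 4\alpha^{-1}\beta^{\,i}rk+1$ and
\[
\log n_i=\tfrac1\delta\bigl(\log n_{i-1}-(s_i+2r)\log c\bigr),
\]
while the conclusion $\theta_i\le1+1/k$ holds no matter how $t$ is chosen.

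The core step is to choose the integer $t=t(r,k)$ (which may depend on $r$, $k$ and the class) large enough that the sequence $(n_i)$ grows doubly exponentially; I would prove $\log n_i\ge\beta^{\,i-1}\log t$ for all $i\ge1$ by induction. Since $\beta<1/\delta$, the step of the induction reduces to the inequality $\bigl(\tfrac1\delta-\beta\bigr)\log n_{i-1}\ge\tfrac1\delta(s_i+2r)\log c$, whose right-hand side is at most $C_1\beta^{\,i}$ for a constant $C_1$ depending only on $\alpha,r,k,c,\delta$; feeding in the inductive bound $\log n_{i-1}\ge\beta^{\,i-2}\log t$ makes this hold once $\log t\ge C_1\beta^{2}$. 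Hence $n_i\ge t^{\,\beta^{i-1}}$, so if I set $i^\star=i^\star(n)$ to be the least index with $n_{i^\star}\ge n$ (found by iterating the recursion, a matter of $O(\log\log n)$ cheap arithmetic steps), then $i^\star\le 1+\tfrac{\log\log n}{\log\beta}+O(1)=O(\log\log n)$, with the constant depending only on the class.

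Given $G\in\GG$, I would compute $i^\star$ and apply Lemma~\ref{lemma-sglaycom} with index $i^\star$: as $n\le n_{i^\star}$, every component of $G$ has at most $n_{i^\star}$ vertices, so in time $O(n^3)$ we get a system of $\Ss_r$-overlays of $G$ of thickness at most $\theta_{i^\star}\le1+1/k$, treewidth at most $t=t(r,k)$, and size exactly $s_1\cdots s_{i^\star}$. To bound the size, note
\[
\log\bigl(s_1\cdots s_{i^\star}\bigr)=\sum_{j=1}^{i^\star}\log s_j\le\sum_{j=1}^{i^\star}\bigl(j\log\beta+\log(8\alpha^{-1}rk)\bigr)=O\bigl((i^\star)^2\bigr)=O\bigl((\log\log n)^2\bigr),
\]
with the hidden constant depending only on $r$, $k$ and the class. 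Thus, writing $q(n)=e^{(\log\log n)^2}$, we get $s_1\cdots s_{i^\star}\le q(n)^{m}$ for a suitable constant integer $m=m(r,k)$, so $p_2(x)=x^{m}$ witnesses that $\GG$ admits $(p_2,q)$-small $\Ss_r$-overlays of thickness $1+1/k$ and treewidth $t(r,k)$. Since $q$ is subpolynomial, the total overlay size and all the linear-time overlay manipulations are $O(n\,\poly(q(n)))=O(n^{1+o(1)})$, dominated by $p(n)=n^3$; together with the $O(n^3)$ bound from Lemma~\ref{lemma-sglaycom} this shows $\GG$ is $(\Ss,n^3,e^{(\log\log n)^2})$-thin.

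The main obstacle is the balancing act in the second paragraph: $s_i$ grows like $\beta^{\,i}$ and $\log n_i$ like $\delta^{-i}$, so $(n_i)$ reaches infinity (and hence $n$) only when $\beta<1/\delta$, forcing $\alpha$ to be small, and only after $t$ is taken large enough to absorb the per-step loss of $(s_i+2r)\log c$; simultaneously $\beta$ and $t$ must stay small enough that $\prod_{j\le i^\star}s_j$, a product over the $\Theta(\log\log n)$ relevant indices, does not exceed $e^{O((\log\log n)^2)}$. Once the constraint $1<\beta<1/\delta$ is in force, both requirements hold at once and the remainder is bookkeeping. (As usual we may assume $n$ exceeds an absolute constant, so that $\log\log n$ is well-defined and $q$ non-decreasing; the finitely many smaller graphs in $\GG$ have bounded size and are trivial.)
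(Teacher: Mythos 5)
Your proposal is correct and follows essentially the same route as the paper: invoke Lemma~\ref{lemma-sglaycom} with $\alpha$ chosen so that $1+\alpha<1/\delta$ and $t=t(r,k)$ large enough that an induction shows $n_i$ grows doubly exponentially, then take $i=\Theta(\log\log n)$ and bound $s_1\cdots s_i$ by $e^{O((\log\log n)^2)}$. The only differences are cosmetic (you induct directly on $\log n_i\ge\beta^{i-1}\log t$ and sum $\log s_j$, whereas the paper inducts on $c'a^{(1+\alpha)^{i+1}}\le n_i^{\varepsilon/2}$ and uses $s_1\cdots s_i\le s_i^i$).
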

\begin{proof}
Let $c$ and $\delta$ be the corresponding constants from Corollary~\ref{cor-dglayering}, and let $\varepsilon = \min(1/\delta - 1,1/2)$.
Let $r$ and $k$ be positive integers, let $c'=c^{2r}$, $\alpha=\varepsilon(1-\varepsilon)/2$, and $a=c^{8\alpha^{-1}rk}$.
Let $t=t(r,k)=\max\left(3,\Bigl(c'a^{(1+\alpha)^2}\Bigr)^{2/\varepsilon}\right)$.
Let $s_i$ and $n_i$ for any positive integer $i$ be as in the statement of Lemma~\ref{lemma-sglaycom}.
Note that $s_i\le 8\alpha^{-1}rk(1+\alpha)^i$ for all $i\ge 1$.
We have
$$n_i\ge \Bigl(\frac{n_{i-1}}{c'a^{(1+\alpha)^i}}\Bigr)^{1+\varepsilon}.$$
By induction on $i$, we will show that $$c'a^{(1+\alpha)^{i+1}}\le n_i^{\varepsilon/2}$$
holds for every $i\ge 1$.  Indeed, the choice of $t=n_1$ ensures that the base case $i=1$ is true.
For $i\ge 2$, we have
\begin{align*}
n_i^{\varepsilon/2}&\ge \Bigl(\frac{n_{i-1}}{c'a^{(1+\alpha)^i}}\Bigr)^{(1+\varepsilon)\varepsilon/2}\ge \bigl(c'a^{(1+\alpha)^i}\bigr)^{(2/\varepsilon-1)(1+\varepsilon)\varepsilon/2}\\
&=\bigl(c'a^{(1+\alpha)^i}\bigr)^{(1-\varepsilon/2)(1+\varepsilon)}=\bigl(c'a^{(1+\alpha)^i}\bigr)^{1+\alpha}\\
&\ge c'a^{(1+\alpha)^{i+1}},
\end{align*}
where the second inequality holds by the induction hypothesis.  We conclude that for $i\ge 1$, we have
$$n_i\ge n_{i-1}^{1+\alpha}\ge \ldots\ge n_1^{(1+\alpha)^{i-1}}\ge 3^{(1+\alpha)^{i-1}}.$$

Consider an $n$-vertex graph $G\in \GG$, and let $i=\Bigl\lceil\frac{1}{\log(1+\alpha)}\log\log n\Bigr\rceil+1$,
so that $n_i\ge n$.  By Lemma~\ref{lemma-sglaycom}, we can in time $O(n^3)$ find a system of $\Ss_r$-overlays of $G$ of thickness at most $1+1/k$,
treewidth at most $t(r,k)$, and size at most $s_1\cdot\ldots\cdot s_i\le s_i^i=\poly(e^{(\log\log n)^2})$.
\end{proof}

\bibliographystyle{acm}
\bibliography{thin}

\end{document}